\documentclass[12pt]{article}%
\usepackage{eurosym}
\usepackage{amsmath,amssymb,amsthm}
\usepackage{geometry}
\usepackage{setspace}
\usepackage{graphicx}
\usepackage[authoryear,round]{natbib}
\usepackage{hyperref}
\hypersetup{hidelinks}
\usepackage{cancel}
\usepackage{xcolor}
\usepackage{tikz}
\usepackage{pgfplots}
\usepackage{amsmath}
\usepackage{amsfonts}
\usepackage{amssymb}%
\providecommand{\U}[1]{\protect\rule{.1in}{.1in}}
\pgfplotsset{compat=1.18}
\definecolor{mycolor}{rgb}{1.0, 0.9, 0.2}

\makeatletter
\@ifundefined{theorem}{\newtheorem{theorem}{Theorem}}{}
\@ifundefined{lemma}{\newtheorem{lemma}{Lemma}}{}
\@ifundefined{assumption}{\newtheorem{assumption}{Assumption}}{}
\@ifundefined{remark}{\newtheorem{remark}{Remark}}{}
\makeatother

\begin{document}

\title{Nonlinear Boosting with Multiple Testing in High-Dimensional Generalised
Linear Models with Binary Responses}
\author{\textbf{Charisios Grivas}\thanks{Department of Mathematics, Aalborg
University, Denmark (\href{charisios.grivas@gmail.com}{cgrivas@math.aau.dk}).}\\Aalborg University
\and \textbf{George Kapetanios}\thanks{Department of Banking and Finance, King's
College London, U.K.
(\href{george.kapetanios@kcl.ac.uk}{george.kapetanios@kcl.ac.uk}). }\\King's College London
\and \textbf{Zacharias Psaradakis}\thanks{Birkbeck Business School, Birkbeck,
University of London, U.K.
(\href{z.psaradakis@bbk.ac.uk}{z.psaradakis@bbk.ac.uk}).}\\Birkbeck, University of London
\and \textbf{Vasilis Sarafidis}\thanks{Brunel Business School, Brunel University of
London, U.K.
(\href{vasilis.sarafidis@brunel.ac.uk}{vasilis.sarafidis@brunel.ac.uk}).}\\Brunel University London
\and \textbf{Mari\'{a}n V\'{a}vra}\thanks{Research Department, National Bank of
Slovakia and Institute of Forecasting, Slovak Academy of Sciences, Slovak Republic  (\href{marian.vavra@nbs.sk}{marian.vavra@nbs.sk}%
).}\\National Bank of Slovakia\\Slovak Academy of Sciences
\and \textbf{Alexia Ventouri}\thanks{Department of Banking and Finance, King's
College London, U.K.
(\href{alexia.ventouri@kcl.ac.uk}{alexia.ventouri@kcl.ac.uk}). }\\King's College London}
\date{\today}
\maketitle

%##############################################################################
\clearpage\newpage
%###############################################################################

\begin{abstract}
\noindent This paper proposes a nonlinear boosting with multiple testing (BMT)
approach to variable selection in high-dimensional generalised linear models
with binary responses. At each stage of the BMT procedure, the model is
updated by adding only the most significant covariate, conditional on those already selected in previous stages, while taking into account the multiple testing nature of the
problem. It is shown that, under the stated conditions, the BMT procedure selects
all covariates whose true coefficients are nonzero, and no other covariates,
with probability tending to one. Furthermore, the procedure enjoys an
oracle property, in the sense that the post-BMT maximum likelihood estimator
of the parameters of the model is asymptotically equivalent to an oracle
estimator that knows the correct sparse model in advance. Monte Carlo
experiments demonstrate that BMT outperforms competing methods, delivering
high covariate-selection accuracy and low parameter estimation error. An
empirical example illustrates that BMT delivers a predictive model for the
probability that U.S. inflation exceeds a given threshold over a 12-month
horizon which has very good out-of-sample performance.\bigskip\ \newline%
\noindent\textit{JEL classification}: C12; C13; C25; C52; C55.\newline%
\noindent\textit{Key words}: Binary response; Boosting; Generalised linear
models; High dimensionality; Multiple testing; Oracle estimator; Variable selection.

\end{abstract}

%##############################################################################
\clearpage\newpage
%###############################################################################

\section{Introduction}

\label{sec:introduction}
%========================

Statistical models with binary response variables and a large number of
potential covariates (regressors) are widely used in economics and finance,
applications including, inter alia, default or firm-exit prediction, credit
scoring, prediction of the probability of recession or deflation, and analyses
of treatment adoption or policy compliance. In such settings, the researcher
typically encounters a large number of candidate covariates (possibly even
larger than the number of observations available on them), strong
cross-sectional dependence, and only a small number of truly relevant
covariates that contribute to the response. As a consequence, the problem of
variable selection in high-dimensional nonlinear settings is of particular
interest and importance. By identifying the subset of relevant covariates,
variable selection can improve estimation accuracy and lead to models which
are parsimoniously parameterised and easily interpretable.

Existing approaches to high-dimensional model selection are dominated by two
classes of methods. The first consists of penalised-likelihood procedures,
such as the least absolute shrinkage and selection operator (LASSO)
\citep{tibshirani1996}, the elastic net \citep{zou2005regularization} and
other related approaches \citep[e.g.,][]{fan2001scan,park2007lassologit}, to
name a few. These methods are computationally attractive and often achieve
high true positive rates (sensitivity). However, they are known to have non-negligible false
positive rates (fallout) in the presence of pseudo-signals, that is, covariates which do
not belong to the data-generating process (DGP) but are correlated with signal
covariates that do. Such pseudo-signal effects are extremely common in
applications involving macroeconomic and financial time series, many of which
are partially driven by a small number of latent common factors. The second
class of variable selection approaches comprises sure independent screening
procedures \citep[e.g.,][]{fan2008sure,FanSong2010} and multiple-testing
approaches such as the one covariate at a time multiple testing (OCMT)
approach \citep{chudik2018ocmt}. While these methods offer explicit error
control, they rely on marginal or weakly conditional statistics and,
therefore, inherit similar vulnerabilities with respect to pseudo-signal selection.

Building on the \emph{boosting with multiple testing} (BMT) framework
introduced for linear regression models in \citet{kapetanios2026}, the present
paper develops a \emph{nonlinear BMT} procedure appropriate for
high-dimensional generalised linear models (GLMs) with binary responses. The
procedure combines two key features. First, selection proceeds in a forward
stagewise manner analogous to that associated with boosting
\citep[e.g.,][]{BuhlmannHothorn2007}, adding at most one covariate per
iteration based on a conditional likelihood-based test statistic. Second, at
each stage, candidate covariates are filtered using a family-wise
multiple-testing threshold similar to that employed in OCMT, which serves both
as a screening device and as a stopping rule.

A distinctive feature of the nonlinear BMT method is that it does not rely on
sparsity-inducing penalties or marginal screening. Instead, identification and
selection are driven by conditional likelihood comparisons that adapt to the
current model. This makes the procedure particularly well suited to
environments with factor-driven dependence, where many covariates are
individually correlated with the response but only a small subset carries
independent explanatory power. The procedure is intentionally conservative, aiming to exclude pseudo-signals and noise variables even when covariates are highly collinear. This is partly achieved by
selecting a single additional covariate at each stage (based on a marginal
significance test), even though many candidate covariates may (individually)
have statistically significant coefficients. At each subsequent stage, the
remaining covariates are re-assessed conditionally on the expanded model, and
again at most one additional covariate is selected.

The main theoretical contribution is to establish finite-stage selection results for nonlinear BMT that parallel, but do not rely on, linear-model arguments. We
adopt a general binary-response GLM setting with a smooth link function and
allow the number of candidate covariates to grow with the sample size ---
while the number of signals remains fixed and finite. We establish three types
of results. First, we consider the behaviour of stagewise likelihood-based
test statistics along the BMT path in an approximating model that contains all
true signals and a finite number of pseudo-signals. Our results establish
stochastic control of false selections and justify the stopping rule on which
BMT relies. Second, we introduce explicit stagewise dominance conditions under
which signals are selected before any pseudo-signal or noise covariates. This
yields exact model recovery in both single-signal and multiple-signal cases,
in the sense that BMT asymptotically selects all signals and no other
covariate with probability tending to one. Third, conditional on correct model
selection, we show that the post-BMT maximum-likelihood estimator (MLE) of the
GLM parameters is asymptotically equivalent to the infeasible oracle estimator
that knows the signals in advance. In other words, the coefficients on the
signal variables are estimated as accurately as when the correct sparse model is known.

In addition to theoretical results, we provide extensive simulation evidence
that demonstrates the good performance of the nonlinear BMT procedure in
finite samples and its superiority over some competing methods. Specifically,
in a binary logistic setting with varying degrees of sparsity and collinearity
among the covariates, BMT delivers consistently
strong performance across a wide range of dimensionalities, combining highly
accurate selection of signals, model sizes that closely track the true sparsity level of the DGP and low parameter estimation error. Moreover, BMT outperforms
OCMT and LASSO, which tend to be less accurate in selecting signals and yield
overparameterised models and higher estimation errors.

We illustrate the practical use of nonlinear BMT by considering a predictive model for the probability that the U.S. inflation rate will, on
average, be higher than 2.5\% over a 12-month horizon. Using a large set of
variables taken from the FRED-MD database of \cite{Mccracken2021}, BMT selects predictors that yield highly accurate out-of-sample
identification of high-inflation periods. This is especially true when the
predictors and their estimated coefficients are updated in a recursive manner.
While models selected using OCMT and LASSO achieve marginally better in-sample fit, they rely on approximately seven and ten times as many predictors as BMT, respectively, and exhibit poorer out-of-sample performance.

Although the paper focuses on binary-response GLMs, the proposed methodology is considerably more general. Specifically, Appendix~\ref{app:addGLM} develops a BMT framework for high-dimensional GLMs with responses drawn from an one-parameter exponential family, thereby encompassing a wide range of general models used in applied work. We show that the central theoretical properties of the procedure, including uniform control of the stagewise test statistics, exact model recovery and oracle post-selection inference, continue to hold in this broader setting under suitable conditions. To maintain a clear exposition, the binary-response case is treated in the main text, while the more general theory is relegated to the appendix.

The remainder of the paper is organised as follows. Section~\ref{sec:glm_bmt}
introduces the general GLM framework, defines the stagewise submodels, and
formalises the nonlinear BMT procedure. Section~\ref{sec:assumptions} contains
the high-level assumptions used in the analysis. Section~\ref{sec:approx}
provides approximating-model results governing the behaviour of the test
statistics and the stopping rule used in the BMT algorithm.
Section~\ref{sec:exact}~contains exact-recovery results for the single-signal
and multiple-signal cases, establishes the oracle
properties of the post-selection parameter estimator and verifies the required
conditions for binary logit and probit models. Section~\ref{sec:primitive}
provides primitive sufficient conditions for a key dominance condition
required for exact model recovery and oracle inference, and verifies one of
these for binary logit and probit models. Section~\ref{sec:Monte_Carlo}
provides simulation evidence on the finite-sample properties of nonlinear BMT
and competing procedures. Section~\ref{sec:Example} discusses an empirical
illustration that highlights the usefulness of BMT in a predictive model for
the probability of U.S. inflation exceeding a given threshold value over a
specified time horizon. Section~\ref{sec:conclusion} summarises and concludes.
Proofs and technical lemmas are collected in Appendix~\ref{app:prelim}. The general exponential-family extension is given in Appendix~\ref{app:addGLM}, Appendix~\ref{app:binary_lasso} compares BMT with binary LASSO, and Appendix~\ref{app:primitive_wald} gives primitive sufficient conditions for the uniform Wald approximation. Additional simulation results are available from the authors on request.

\section{Setting and nonlinear BMT}

\label{sec:glm_bmt}
%========================

This section describes our GLM setup, which follows the classical formulation
of \citet{mccullagh1989glm}. The single-index conditional mean below is
maintained for the exact-recovery results. We nevertheless use quasi-likelihood
notation, both to permit robust inference and because the stagewise submodels
are generally misspecified when they omit active covariates. Their population
targets are therefore pseudo-true parameters in the sense of
\citet{white1982misspec,white94}. The stagewise submodels and the BMT algorithm
are defined below.

\subsection{GLM and inference}

\label{sec:glm_model}

The parametric model of interest is the single-index GLM
\begin{equation}
\mathbb{E}(y_{t}|\boldsymbol{x}_{t})=G(\boldsymbol{x}_{t}^{\prime
}\boldsymbol{\beta}),\quad t=1,2,\dots,T, \label{eq:glm_single_index}%
\end{equation}
where $y_{t}$ is a response variable taking values in $\{0,1\}$,
$\boldsymbol{x}_{t}=(x_{1,t},\dots,x_{p,t})^{\prime}\in\mathbb{R}^{p}$ is a
covariate vector whose dimension $p$ may exceed the number of available
observations $T$, $\boldsymbol{\beta}=(\beta_{1},\ldots,\beta_{p})^{\prime}%
\in\mathbb{R}^{p}$ is an unknown parameter, and $G$ is a known smooth and
monotone (inverse) link function defined on $\mathbb{R}$.\footnote{Without loss of
generality, an intercept is not included explicitly in the index
$\boldsymbol{x}_{t}^{\prime}\boldsymbol{\beta}$. The covariate corresponding
to an intercept is typically treated as an \textquotedblleft
always-in\textquotedblright\ control.} In high-dimensional settings where $p$
is large, it is typically assumed that only a small number of covariates
contribute to the conditional mean of the response, which amounts to the
parameter $\boldsymbol{\beta}$ and its (unknown) true value, denoted
$\boldsymbol{\beta}_{0}=(\beta_{0,1},\ldots,\beta_{0,p})^{\prime}\in
\mathbb{R}^{p}$, being sparse.

In what follows, $F_{p}=\{1,2,\ldots,p\}$ denotes the set of candidate
covariates. Let $S^{(0)}\subseteq F_p$ be a fixed, possibly empty, set of
always-in controls. The \emph{true active set} relevant for selection is the
set of active non-always-in covariates,
\begin{equation}
S_{0}=\{j\in F_{p}\setminus S^{(0)}:\beta_{0,j}\neq0\},\qquad
\left\vert S_{0}\right\vert =k.
\label{eq:true_set}%
\end{equation}
It is maintained throughout that $S_{0}$ is non-empty with cardinality $k$
that does not depend on $T$. By contrast, the cardinality $p=p_{T}$ of $F_{p}$
is allowed to grow with $T$. A candidate covariate $x_{j,t}$ is said to be a
\emph{signal} variable if $j\in S_{0}$, a \emph{pseudo-signal} (or
\emph{proxy}) variable if $j\notin S^{(0)}\cup S_{0}$ and
$\mathrm{Cov}(x_{j,t},x_{i,t})\neq0$ for some $i\in S_{0}$, and a
\emph{noise} variable if $j\notin S^{(0)}\cup S_{0}$ and
$\mathrm{Cov}(x_{j,t},x_{i,t})=0$ for all $i\in S_{0}$.

Inference on the parameter $\boldsymbol{\beta}$ in (\ref{eq:glm_single_index})
is typically likelihood-based. The following standard $M$-estimation discussion
applies to a fixed-dimensional fitted model, including each admissible
stagewise submodel because $k_{\max}$ is fixed. To fix notation,
let $l_{t}(\eta_{t},y_{t})$ denote a postulated log-density of $y_{t}$ given
$\boldsymbol{x}_{t}$ evaluated at index $\eta_{t}=\eta_{t}(\boldsymbol{\beta
})=\boldsymbol{x}_{t}^{\prime}\boldsymbol{\beta}$ and response $y_{t}$; put
$\dot{l}_{t}(\eta,y)=\partial l_{t}(\eta,y)/\partial\eta$ and $\ddot{l}%
_{t}(\eta,y)=\partial^{2}l_{t}(\eta,y)/\partial\eta^{2}$. A (quasi-) MLE
$\widehat{\boldsymbol{\beta}}=(\widehat{\beta}_{1},\ldots,\widehat{\beta}%
_{p})^{\prime}$ is obtained as a maximiser with respect to $\boldsymbol{\beta
}$ of the average (quasi-) log-likelihood
\begin{equation}
L_{T}(\boldsymbol{\beta})=T^{-1}\sum_{t=1}^{T}l_{t}(\eta_{t},y_{t}).
\label{eq:criterion}%
\end{equation}

For such a fixed-dimensional model, if $l_{t}(\eta_{t},y_{t})$ is correctly
specified (which amounts to correct specification of the response probability
$\mathbb{P}(y_{t}=1|\boldsymbol{x}_{t})$), the maximiser of the empirical
criterion (\ref{eq:criterion}) is $\sqrt{T}$-consistent for the corresponding
true parameter and asymptotically normal. Under quasi-likelihood or other
criterion misspecification, the same conclusion holds for the pseudo-true
parameter, denoted $\boldsymbol\beta^*$, under the usual $M$-estimation
conditions \citep{white1982misspec,white94}. When the (quasi-) score
sequence is serially uncorrelated, a consistent estimate of the
asymptotic covariance matrix of
$\sqrt T(\widehat{\boldsymbol{\beta}}-\boldsymbol{\beta}^{\ast})$ is the
one-period sandwich matrix
\begin{equation}
\left(  T^{-1}\sum_{t=1}^{T}\boldsymbol{H}_{t}(\widehat{\boldsymbol{\beta}%
})\right)  ^{-1}\left(  T^{-1}\sum_{t=1}^{T}\boldsymbol{s}_{t}%
(\widehat{\boldsymbol{\beta}})\boldsymbol{s}_{t}(\widehat{\boldsymbol{\beta}%
})^{\prime}\right)  \left(  T^{-1}\sum_{t=1}^{T}\boldsymbol{H}_{t}%
(\widehat{\boldsymbol{\beta}})\right)  ^{-1},\label{eq:cov_est}%
\end{equation}
where $\boldsymbol{s}_{t}(\boldsymbol{\beta})=\partial l_{t}(\eta_{t}%
,y_{t})/\partial\boldsymbol{\beta}=\boldsymbol{x}_{t}\dot{l}_{t}(\eta
_{t},y_{t})$ and $\boldsymbol{H}_{t}(\boldsymbol{\beta})=\partial^{2}%
l_{t}(\eta_{t},y_{t})/\partial\boldsymbol{\beta}\partial\boldsymbol{\beta
}^{\prime}=\boldsymbol{x}_{t}\boldsymbol{x}_{t}^{\prime}\ddot{l}_{t}(\eta
_{t},y_{t})$. Robustness with respect to dependence in the (quasi-) scores
$\boldsymbol{s}_{t}(\boldsymbol{\cdot})$ across observations may be achieved
by replacing the second factor in (\ref{eq:cov_est}) with $T^{-1}\sum
_{t=1}^{T}\sum_{i=1}^{T}K(|t-i|/\varpi)\boldsymbol{s}_{t}%
(\widehat{\boldsymbol{\beta}})\boldsymbol{s}_{i}(\widehat{\boldsymbol{\beta}%
})^{\prime}$, where $K:\mathbb{R}\rightarrow\lbrack-1,1]$ is a suitable kernel
function and $\varpi=\varpi_{T}>0$ is a bandwidth parameter that is an
increasing function of $T$. Under the standard GLM assumptions of correct
specification and conditionally independent responses, this asymptotic
covariance matrix reduces to the inverse
of the average observed information
$-T^{-1}\sum_{t=1}^{T}\boldsymbol{H}_{t}(\widehat{\boldsymbol{\beta}})$.
The covariance matrix of $\widehat{\boldsymbol{\beta}}$ itself is estimated
by $T^{-1}$ times the relevant asymptotic covariance matrix; stagewise standard
errors below use this $T^{-1/2}$ scaling.

\subsection{Stagewise submodels and pseudo-true parameters}

\label{sec:pseudotrue}

Let $S\subseteq F_{p}$ denote a conditioning set of selected covariates and
$\boldsymbol{x}_{S,t}$ be the subvector of $\boldsymbol{x}_{t}$ indexed by
$S$. For a candidate $j\notin S$, the associated \emph{augmented submodel} is
defined as the GLM with parameter $\boldsymbol{\vartheta}_{j}%
=(\boldsymbol{\gamma}_{j}^{\prime},\theta_{j})^{\prime}\in\mathbb{R}^{|S|+1}$
and index
\begin{equation}
\eta_{j,t}(\boldsymbol{\vartheta}_{j};S)=\boldsymbol{x}_{S,t}^{\prime
}\boldsymbol{\gamma}_{j}+\theta_{j}\,x_{j,t}. \label{eq:aug_index}%
\end{equation}
Letting $L_{T,j}(\boldsymbol{\vartheta}_{j};S)=T^{-1}\sum_{t=1}^{T}l_{t}%
(\eta_{j,t}(\boldsymbol{\vartheta}_{j};S),y_{t})$ be the average (quasi-)
log-likelihood for the augmented submodel, denote a (possibly local) maximiser
of it by
\begin{equation}
\widehat{\boldsymbol{\vartheta}}_{j}(S)\in\arg\max_{\boldsymbol{\vartheta}%
_{j}}L_{T,j}(\boldsymbol{\vartheta}_{j};S), \label{eq:mle_aug}%
\end{equation}
and write $\widehat{\theta}_{j}(S)$ for the component of
$\widehat{\boldsymbol{\vartheta}}_{j}(S)$ corresponding to $x_{j,t}$.

For theoretical comparisons, we also require the \emph{pseudo-true} parameter
$\boldsymbol{\vartheta}_{j}^{\ast}(S)$, defined as the maximiser of the
expected (quasi-) log-likelihood, i.e.,
\begin{equation}
\boldsymbol{\vartheta}_{j}^{\ast}(S)\in\arg\max_{\boldsymbol{\vartheta}_{j}%
}\;\mathbb{E}\bigl[l_{t}(\eta_{j,t}(\boldsymbol{\vartheta}_{j};S),y_{t}%
)\bigr], \label{eq:pseudotrue}%
\end{equation}
where the expectation is taken under the true conditional distribution of
$y_{t}$ given $\boldsymbol{x}_{t}$; we write $\theta_{j}^{\ast}(S)$ for the
component of $\boldsymbol{\vartheta}_{j}^{\ast}(S)$ corresponding to $x_{j,t}%
$. Under correct specification and when
$S\supseteq S^{\dagger}=S^{(0)}\cup S_{0}$, $\theta_{j}^{\ast}(S)=0$
for every $j\notin S^{\dagger}$; under misspecification,
$\theta_{j}^{\ast}(S)$ may be nonzero and captures the population
\textquotedblleft proxy\textquotedblright\ effect of $x_{j,t}$ after
conditioning on $\boldsymbol{x}_{S,t}$.

For subsequent theoretical results, the maximiser in \eqref{eq:pseudotrue} is
assumed to be unique for every admissible pair $(S,j)$. If the population
criterion has multiple maximisers, the results should be read under a fixed
deterministic tie-breaking rule and the additional requirement that all tied
maximisers yield the same candidate coefficient and the same relevant
curvature and variance quantities. This convention makes
$\theta_j^*(S)$ and the population noncentrality defined below single-valued.

\subsection{Stagewise Wald statistics and score implementation}

\label{sec:wald}

For a given $S$ and $j\notin S$, define the (absolute) \emph{stagewise Wald
statistic} for testing $H_{0}:\theta_{j}=0$ in the augmented submodel
\eqref{eq:aug_index} as
\begin{equation}
W_{T,j}(S)=\frac{\bigl|\widehat{\theta}_{j}(S)\bigr|}{\widehat{\mathrm{se}%
}_{j}(S)}. \label{eq:wald_stat}%
\end{equation}
Throughout the theoretical analysis,
$\widehat{\mathrm{se}}_{j}(S)$ is based on an estimator that is uniformly
consistent for
$T^{-1/2}\{V_{\theta\theta,j}^{\ast}(S)\}^{1/2}$, where
$V_{\theta\theta,j}^{\ast}(S)$ is obtained from the long-run sandwich
matrix in Assumption~\ref{ass:A3}. The inverse observed information may be
used when the information equality holds for the particular augmented
submodel, and the one-period sandwich may be used when its stagewise scores
are serially uncorrelated. Otherwise, a uniformly consistent long-run or heteroskedasticity and autocorrelation consistent (HAC)
variance estimator is required. The uniform rate needed when population
noncentralities diverge is part of Assumption~A3.4; pointwise relative
consistency alone is not sufficient. Subject to this requirement, the ordering
arguments do not depend on the particular implementation of the standard
error.

In addition to \eqref{eq:wald_stat}, we will use a Lagrange multiplier
(LM)/score statistic that is convenient for probability bounds. Let
$\widehat{\boldsymbol{\vartheta}}_{0}(S)$ denote the restricted (quasi-) MLE
obtained subject to $\theta_j=0$, let
$\widehat\eta_t(S)=\eta_{j,t}(\widehat{\boldsymbol{\vartheta}}_{0}(S);S)$, and
put $\widehat p_t(S)=G\{\widehat\eta_t(S)\}$. A convenient residualised score form is available when the restricted
Bernoulli submodel is correctly specified. In that case, define
\[
\widehat a_t(S)
=
\frac{G'\{\widehat\eta_t(S)\}}
{\widehat p_t(S)\{1-\widehat p_t(S)\}},
\qquad
\widehat w_t(S)
=
\frac{G'\{\widehat\eta_t(S)\}^{2}}
{\widehat p_t(S)\{1-\widehat p_t(S)\}}.
\]
Let $\widetilde{x}_{j,t}(S)=x_{j,t}-\boldsymbol{x}_{S,t}^{\prime}
\widehat{\boldsymbol{\pi}}_{j}(S)$ be the residual from the weighted
least-squares projection of $x_{j,t}$ on $\boldsymbol{x}_{S,t}$ using weights
$\widehat w_t(S)$. The corresponding normalised score statistic is
\begin{equation}
\widetilde{W}_{T,j}(S)
=
\frac{\left|T^{-1/2}\sum_{t=1}^{T}\widetilde{x}_{j,t}(S)
\widehat a_t(S)\{y_t-\widehat p_t(S)\}\right|}
{\left[T^{-1}\sum_{t=1}^{T}\widetilde{x}_{j,t}^{2}(S)
\widehat w_t(S)\right]^{1/2}}.
\label{eq:score_form}%
\end{equation}
For the logit link, $\widehat a_t(S)=1$ and
$\widehat w_t(S)=\widehat p_t(S)\{1-\widehat p_t(S)\}$. For a non-canonical link when the restricted submodel is misspecified, the LM
statistic should instead be constructed directly from the restricted score,
the Schur complement of the actual restricted sample Hessian, and the relevant
sandwich or long-run score covariance; the simple weights above need not equal
the curvature weights. The formal theory below is Wald-based, so no global
score representation is required.

The formal results below are stated for the Wald statistic $W_{T,j}(S)$. If
the score statistic in \eqref{eq:score_form} is used instead, the same
arguments apply after imposing the analogue of Assumption~\ref{ass:A3}.4 with
$\widetilde W_{T,j}(S)$ replacing $W_{T,j}(S)$. Appendix~\ref{app:wald_score}
records the standard local Wald--score equivalence for population-null and
local alternatives; no global equivalence is required for exact-recovery
results.

\subsection{Nonlinear BMT algorithm}

\label{sec:algo}

To give a formal description of the BMT procedure, fix an initial set
(possibly empty) of \textquotedblleft always-in\textquotedblright\ controls
$S^{(0)}$ and a maximum number of stages $k_{\max}$ (with $k_{\max}\geq k$).
For $\ell\geq0$, given the current set $S^{(\ell)}$, the nonlinear BMT
algorithm comprises four steps.

\begin{enumerate}
\item \emph{Restricted fit}\textbf{.} Estimate the parameters of a GLM with
covariates indexed by $S^{(\ell)}$ and obtain the fitted objects of interest
(residual-like terms and weights).

\item \emph{Candidate screening}\textbf{.} For each $j\notin S^{(\ell)}$,
compute the stagewise statistic $W_{T,j}\bigl(S^{(\ell)}\bigr)$. A score
statistic $\widetilde{W}_{T,j}\bigl(S^{(\ell)}\bigr)$ may be used instead
when the uniform approximation condition is imposed for that statistic.

\item \emph{Multiple-testing filter}\textbf{.} For a family-wise threshold
$c_{T}>0$, form the passing set
\begin{equation}
\mathcal{J}^{(\ell)}=\Bigl\{j\notin S^{(\ell)}:W_{T,j}\bigl(S^{(\ell
)}\bigr)\geq c_{T}\Bigr\}. \label{eq:pass_set}%
\end{equation}

\item \emph{Stagewise selection}\textbf{.} If $\mathcal{J}^{(\ell
)}=\varnothing$, stop and return $\widehat{S}=S^{(\ell)}$. Otherwise, select
\[
j^{(\ell)}\in\arg\max_{j\in\mathcal{J}^{(\ell)}}W_{T,j}\bigl(S^{(\ell
)}\bigr),\qquad S^{(\ell+1)}=S^{(\ell)}\cup\{j^{(\ell)}\},
\]
and continue until the algorithm stops or $k_{\max}$ successful selections
have been made.
\end{enumerate}

Theoretical results in subsequent sections establish: (i) high-probability
stochastic bounds for the test statistics along the selection path
(approximating-model results in Theorems~\ref{thm:approx_stats} and
\ref{thm:approx_stop}); (ii) sufficient conditions ensuring that the stagewise
maximiser is always a signal until $S_{0}$ is exhausted (exact recovery
results in Theorems~\ref{thm:k1_recovery} and \ref{thm:fixedk_recovery}),
leading to oracle post-selection inference (Theorem~\ref{thm:oracle}).
%========================

\section{Assumptions}

\label{sec:assumptions}
%========================

This section collects the assumptions used in the theoretical analysis. They
are organised to mirror the logical roles of the main theorems. Assumptions
\ref{ass:A1}--\ref{ass:A3} are sufficient for the approximating-model results,
while Assumptions \ref{ass:A4}--\ref{ass:A5} introduce additional structure
required for exact model recovery and oracle inference. Henceforth, limits in order symbols are taken as $T\rightarrow\infty$, unless stated otherwise.

%------------------------
\renewcommand{\theassumption}{A\arabic{assumption}}\renewcommand{\theHassumption}{A.\arabic{assumption}}

\begin{assumption} \label{ass:A1}\qquad

\begin{enumerate}
\item[A1.1] The triangular array
$\{(y_t,\boldsymbol x_t')\}_{t=1}^T$ is row-wise strictly stationary and
strongly mixing. Let $\alpha_T(h)$ denote its strong-mixing coefficient,
uniformly over rows. Under A2-ET, there are constants
$C_\alpha,c_\alpha>0$ such that
$\sup_T\alpha_T(h)\le C_\alpha\exp(-c_\alpha h)$ for every $h\ge1$.
Under A2-PM, with $q$ and $\delta$ as defined there,
\[
\sum_{h=1}^{\infty}h^{q/2-1}
\sup_T\alpha_T(h)^{\delta/(q+\delta)}<\infty.
\]

\item[A1.2] The number of covariates $|F_p|=p=p_T$ may diverge with $T$,
while the number of non-always-in signals $|S_0|=k$ and the maximum number of
successful selections $k_{\max}$ are fixed, positive and finite. The number
of always-in covariates $s_0=|S^{(0)}|$ is fixed and finite and may be zero.
The always-in set and the signal set are disjoint:
$S^{(0)}\cap S_0=\varnothing$.\medskip
\end{enumerate}
\end{assumption}

For later use, let $K=k_{\max}+1$, $d_0=s_0+k_{\max}+1$ and
$\ell_T=\log(p\vee T)$, and define the admissible class
$\mathcal A_T=\{(S,j):S^{(0)}\subseteq S\subseteq F_p,
|S\setminus S^{(0)}|\le k_{\max},\ j\notin S\}$.

For an admissible pair $m=(S,j)\in\mathcal A_T$, write
$l_{m,t}(\boldsymbol\vartheta)=
l_t\{\eta_{j,t}(\boldsymbol\vartheta;S),y_t\}$. Let
$F_{1,m,t}$, $F_{2,m,t}$ and $F_{3,m,t}$ be coordinatewise envelopes, over
$\boldsymbol\vartheta\in\mathcal B_{j,S}$ with the compact sets defined
in A3.2, for the score, negative Hessian
and third derivative of $l_{m,t}$, respectively. Let
$F_{\Omega,0,m,t}$ envelope the coordinates of the score outer product and
let $F_{\Omega,1,m,t}$ envelope their first parameter derivatives. For
$\nu>0$, write
$\|Z\|_{\psi_\nu}=\inf\{c>0:\mathbb E\exp[(|Z|/c)^\nu]\le2\}$.

\begin{assumption}
\label{ass:A2} One of the following sets of conditions is satisfied:

\begin{enumerate}
\item[A2-ET] Uniformly over admissible $m$ and $t$,
\[
\|F_{1,m,t}\|_{\psi_2}
+\|F_{2,m,t}\|_{\psi_1}
+\|F_{\Omega,0,m,t}\|_{\psi_1}
+\|F_{3,m,t}\|_{\psi_{2/3}}
+\|F_{\Omega,1,m,t}\|_{\psi_{2/3}}
\le C_F<\infty.
\]

\item[A2-PM] There exist $q>\max\{4,d_0\}$ and $\delta>0$ such that,
uniformly over admissible $m$ and $t$,
\[
\mathbb E\!\left(
F_{1,m,t}^{q+\delta}+F_{2,m,t}^{q+\delta}
+F_{\Omega,0,m,t}^{q+\delta}+F_{3,m,t}^{q+\delta}
+F_{\Omega,1,m,t}^{q+\delta}
\right)\le C_F<\infty.
\]
\end{enumerate}
\end{assumption}

\begin{assumption}
\label{ass:A3}\qquad

\begin{enumerate}
\item[A3.1] The (inverse) link function
$G:\mathbb{R}\rightarrow(0,1)$ is strictly increasing and three-times
continuously differentiable, with derivatives that are bounded on compact
subsets of $\mathbb{R}$.

\item[A3.2] For each admissible pair $(S,j)$, there is a compact parameter
set $\mathcal B_{j,S}\subset\mathbb R^{|S|+1}$ containing
$\boldsymbol\vartheta_j^*(S)$ in its interior. Uniformly over all stagewise
submodels with $|S\setminus S^{(0)}|\le k_{\max}$ and all $j\notin S$, the
norms of the pseudo-true parameters and the diameters of the sets
$\mathcal B_{j,S}$ are bounded, and the distances of the pseudo-true
parameters from the boundaries of $\mathcal B_{j,S}$ are bounded away from
zero. Writing
$Q_{j,S}(\boldsymbol\vartheta)=\mathbb E
l_{m,t}(\boldsymbol\vartheta)$, the maximisers are uniformly well separated:
for every $\epsilon>0$ there is $c(\epsilon)>0$ such that
\[
\inf_{(S,j)\in\mathcal A_T}
\left[
Q_{j,S}\{\boldsymbol\vartheta_j^*(S)\}
-
\sup_{\substack{\boldsymbol\vartheta\in\mathcal B_{j,S}:\\
\|\boldsymbol\vartheta-\boldsymbol\vartheta_j^*(S)\|\ge\epsilon}}
Q_{j,S}(\boldsymbol\vartheta)
\right]\ge c(\epsilon).
\]

\item[A3.3] For any admissible
conditioning set $S$ with $|S\setminus S^{(0)}|\leq k_{\max}$ and any
$j\notin S$, the population information matrix
\[
\boldsymbol{J}_{j}^{\ast}(S)
= -\mathbb{E}\left[
\frac{\partial^{2}l_{t}(\eta_{j,t}(\boldsymbol{\vartheta}_{j}^{\ast}(S);S),y_t)}
{\partial\boldsymbol{\vartheta}_{j}\partial\boldsymbol{\vartheta}_{j}^{\prime}}
\right]
\]
is finite and nonsingular, with smallest eigenvalue uniformly bounded away
from zero. The long-run score covariance matrix
\[
\boldsymbol{\Omega}_{j}^{\ast}(S)
=
\sum_{h=-\infty}^{\infty}
\mathbb{E}\!\left[
\boldsymbol{s}_{j,t}(\boldsymbol{\vartheta}_{j}^{\ast}(S);S)
\boldsymbol{s}_{j,t-h}(\boldsymbol{\vartheta}_{j}^{\ast}(S);S)^{\prime}
\right]
\]
exists and is nonsingular uniformly, where
\(
\boldsymbol{s}_{j,t}(\boldsymbol{\vartheta}_{j};S)
=
\frac{\partial l_t(\eta_{j,t}(\boldsymbol{\vartheta}_{j};S),y_t)}
{\partial\boldsymbol{\vartheta}_{j}}.
\)
The sandwich matrix
\(
\boldsymbol{V}_{j}^{\ast}(S)
=
\boldsymbol{J}_{j}^{\ast}(S)^{-1}
\boldsymbol{\Omega}_{j}^{\ast}(S)
\boldsymbol{J}_{j}^{\ast}(S)^{-1}
\)
is finite, and its candidate-coefficient diagonal element
$V_{\theta\theta,j}^{\ast}(S)$ is uniformly bounded away from zero and infinity.
If the stagewise scores are serially uncorrelated, the displayed long-run matrix 
$\boldsymbol{\Omega}_{j}^{\ast}(S)$ reduces to the usual one-period  covariance.

\item[A3.4] (High-level uniform Wald approximation.) For the admissible
class $\mathcal A_T$ defined above, define
\[
\mathcal E_T
=
\max_{(S,j)\in\mathcal A_T}
\left|W_{T,j}(S)-\mathrm{NC}_{j}^{\ast}(S)\right|.
\]
The following uniform approximation rates hold. Under Assumption A2-ET,
\(\mathcal E_T=O_p(\sqrt{\ell_T}).\)
Under Assumption A2-PM, there exists a deterministic maximal-deviation rate
$a_{T,p}^{\mathrm{PM}}$ such that
\(\mathcal E_T=O_p(a_{T,p}^{\mathrm{PM}}).\)
Primitive sufficient conditions and explicit rates are given in
Appendix~\ref{app:primitive_wald}.
\end{enumerate}
\end{assumption}

For a given conditioning set $S$, define the population \emph{normalised
conditional strength}, or population Wald noncentrality, of covariate
$x_{j,t}$ in the augmented submodel with index \eqref{eq:aug_index} as
\begin{equation}
\mathrm{NC}_{j}^{\ast}(S)
=
\frac{\sqrt{T}\,|\theta_{j}^{\ast}(S)|}
{\{V_{\theta\theta,j}^{\ast}(S)\}^{1/2}}.
\label{eq:nc_def_prim}%
\end{equation}
The factor $\sqrt{T}$ is essential: $\mathrm{NC}_{j}^{\ast}(S)$ is the
population analogue of the absolute Wald statistic, not the unscaled
pseudo-true coefficient. When the particular augmented submodel is correctly specified and its
observations are conditionally independent,
$V_{\theta\theta,j}^{\ast}(S)$ reduces to the corresponding diagonal
element of the inverse information matrix.

\begin{assumption}
\label{ass:A4}
There exists a deterministic sequence $d_T>0$ such that
\begin{equation}
\frac{\mathcal E_T}{d_T}=o_p(1),
\label{eq:dom_error_margin}%
\end{equation}
and, for any conditioning set $S$ satisfying
$S^{(0)}\subseteq S\subseteq S^{(0)}\cup S_0$ and $|S\cap S_0|<k$,
\begin{equation}
\min_{j\in S_{0}\setminus S}\mathrm{NC}_{j}^{\ast}(S)
\ge
\max_{\substack{m\notin S_{0}\\m\notin S}}
\mathrm{NC}_{m}^{\ast}(S)+d_T.
\label{eq:stg_dom}%
\end{equation}
In addition, every remaining signal is above the testing threshold by the same
population margin:
\begin{equation}
\min_{j\in S_{0}\setminus S}\mathrm{NC}_{j}^{\ast}(S)
\ge c_T+d_T.
\label{eq:stg_threshold}%
\end{equation}
\end{assumption}

\paragraph{Remark.}
Because BMT adds one covariate at each stage, the exact-recovery induction also
holds under the weaker requirement that, for every admissible $S$, the
strongest remaining signal satisfies
\(\max_{j\in S_0\setminus S}\mathrm{NC}_j^*(S)\ge
\max_{m\notin S_0,\,m\notin S}\mathrm{NC}_m^*(S)+d_T\)
and
\(\max_{j\in S_0\setminus S}\mathrm{NC}_j^*(S)\ge c_T+d_T\),
with the same condition $\mathcal E_T/d_T=o_p(1)$. Assumption~\ref{ass:A4}
uses the minimum over remaining signals and is therefore stronger: it ensures
that every remaining signal dominates every non-signal.

\begin{assumption}
\label{ass:A5}
The threshold sequence $\{c_T\}$ used in the nonlinear BMT procedure satisfies
$c_T\rightarrow\infty$ and $c_T=o(\sqrt T)$. Moreover, there exists a constant
$\bar{\eta}_{0}\in(0,1)$ such that
\begin{equation}
\Pr\!\left(\mathcal E_T\le (1-\bar{\eta}_{0})c_T\right)\rightarrow1,\quad\mathrm{as}\text{ }T\rightarrow\infty.
\label{eq:threshold_calibration}%
\end{equation}
\end{assumption}

Assumption \ref{ass:A1} replaces pointwise ergodicity by dependence
conditions that support maximal inequalities. Under A2-PM, the displayed
summability condition is satisfied, for example, if
$\sup_T\alpha_T(h)\le C h^{-a}$ with
$a>q(q+\delta)/(2\delta)$. The fixed values of $s_0$ and
$k_{\max}$ imply that every augmented GLM has dimension at most
$d_0=s_0+k_{\max}+1$. Moreover,
\[
|\mathcal A_T|
\le p\sum_{\ell=0}^{k_{\max}}\binom p\ell
\le Cp^{k_{\max}+1}=Cp^K.
\]
Thus the high-dimensional cost is a union bound over polynomially many
fixed-dimensional likelihood problems, rather than control over all sparse
directions.

Assumption \ref{ass:A2} supplies envelopes for the score, Hessian, score outer
product, its first parameter derivative and the third likelihood derivative.
For logit links, the scalar score and curvature weights are bounded. Thus, A2-ET
follows, for example, from uniformly sub-Gaussian covariates, while A2-PM
follows from sufficiently high covariate moments. For probit links, the Mills-ratio
terms are unbounded away from compact index sets; a compact-index condition or
correspondingly stronger covariate tails are therefore required.

Assumption \ref{ass:A3} remains stated at the level used by the recovery
proofs. Appendix~\ref{app:primitive_wald} shows that, for information standard
errors under the information equality or one-period sandwich standard errors
with serially uncorrelated scores, A3.4 follows from A1--A3.3, the envelopes in
A2 and the mild parameter-set regularity stated there. Under A2-ET, if
$\ell_T^3/T\to0$, then $\mathcal E_T=O_p(\sqrt{\ell_T})$. Under A2-PM, the
appendix gives $a_{T,p}^{\mathrm{PM}}=p^{K/q}$ and requires
$p^{K/q}=o(\sqrt T)$. The high-level formulation is retained because HAC
standard errors add a bandwidth-dependent long-run-variance estimation error.
This approximation includes nuisance re-estimation, Hessian approximation and
standard-error estimation; pointwise relative consistency of
$\widehat{\mathrm{se}}_j(S)$ is not sufficient when population
noncentralities diverge.

Assumption \ref{ass:A4}, which is only required for exact model recovery,
formalises stagewise separation with the correct stochastic margin. The
population dominance gap $d_T$ must dominate the maximal sample approximation
error $\mathcal E_T$, and the remaining signals must also clear the testing
threshold. A fixed positive population-coefficient gap is sufficient in common
fixed-signal settings because the Wald noncentrality in \eqref{eq:nc_def_prim}
contains the factor $\sqrt T$, but the theorem is stated in the more general
rate form needed for high-dimensional testing.

Assumption \ref{ass:A5} ensures that false positives are asymptotically eliminated. It requires the threshold to dominate the maximal uniform stochastic approximation
error. Bonferroni or Gaussian-max thresholds must, therefore, be chosen with sufficient
slack for \eqref{eq:threshold_calibration} to hold; the boundary choice
$c_T=\sqrt{2\log p}$ need not be sufficient without such slack.

\section{Approximating-model results}

\label{sec:approx}
%========================

This section contains results relating to the behaviour of the test statistics
and the stopping rule that form the basis of the nonlinear BMT procedure. The
results are formulated at the level of the \emph{approximating model}, that
is, a model whose covariates include all signals as well as one or more of the
pseudo-signals. No dominance or exact-recovery assumptions are imposed and the
conditioning set may be misspecified.

%------------------------

%\subsection{Uniform control of stagewise statistics}

%\label{sec:thm1}
%------------------------

The first result establishes uniform stochastic control of the stagewise Wald
test statistics along the BMT path.

\begin{theorem}
\label{thm:approx_stats} Suppose Assumptions~\ref{ass:A1}--\ref{ass:A3} hold.
Let $\{S^{(\ell)}\}_{\ell\geq0}$ denote any sequence of conditioning sets that
can be generated by the nonlinear BMT procedure, with $|S^{(\ell)}\setminus
S^{(0)}|\leq k_{\max}$ for all $\ell$. Then, as $T\rightarrow\infty$, the
following statements hold. \newline\noindent\textbf{(a) Exponential-tail
regime.} If Assumption A2-ET holds, then
\begin{equation}
\max_{\ell\leq k_{\max}}\max_{j\notin S^{(\ell)}}\bigl|W_{T,j}(S^{(\ell
)})-\mathrm{NC}_{j}^{\ast}(S^{(\ell)})\bigr|=O_{p}\!\left(\sqrt{\ell_T}\right).\label{eq:thm1_ET}%
\end{equation}
\noindent\textbf{(b) Polynomial-moment regime.} If Assumption A2-PM holds,
then
\begin{equation}
\max_{\ell\leq k_{\max}}\max_{j\notin S^{(\ell)}}\bigl|W_{T,j}(S^{(\ell
)})-\mathrm{NC}_{j}^{\ast}(S^{(\ell)})\bigr|=O_{p}(a_{T,p}^{\mathrm{PM}%
}).\label{eq:thm1_PM}%
\end{equation}
\noindent\textbf{(c) Implication for population-null covariates.} If, in
addition, Assumption~\ref{ass:A5} holds, then under either tail regime
\begin{equation}
\Pr\!\left(  \exists\,\ell\leq k_{\max},\ \exists\,j\notin S^{(\ell
)}:\mathrm{NC}_{j}^{\ast}(S^{(\ell)})=0\ \text{and}\ W_{T,j}(S^{(\ell)})\geq
c_{T}\right)  \rightarrow0.\label{eq:thm1_null}%
\end{equation}

\end{theorem}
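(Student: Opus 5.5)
The plan is to reduce everything to Assumption~A3.4, which already controls the uniform deviation $\mathcal E_T$ over the whole admissible class $\mathcal A_T$. The key observation is that any conditioning set $S^{(\ell)}$ that the nonlinear BMT procedure can generate satisfies $S^{(0)}\subseteq S^{(\ell)}$ and $|S^{(\ell)}\setminus S^{(0)}|\le k_{\max}$. Hence every pair $(S^{(\ell)},j)$ with $j\notin S^{(\ell)}$ lies in $\mathcal A_T$, whatever the realised path; this is a deterministic inclusion of index sets. It follows that
\[
\max_{\ell\le k_{\max}}\max_{j\notin S^{(\ell)}}\bigl|W_{T,j}(S^{(\ell)})-\mathrm{NC}_j^{\ast}(S^{(\ell)})\bigr|\le\mathcal E_T
\]
pointwise on the sample space. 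This holds even though the path is data-dependent, because the bound is over a class fixed in advance. This is the essential point and explains why A3.4 is stated uniformly over $\mathcal A_T$ rather than along the path.

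For part (a), under A2-ET, A3.4 gives $\mathcal E_T=O_p(\sqrt{\ell_T})$. The displayed inequality then yields \eqref{eq:thm1_ET}, since the left side is a measurable maximum over finitely many random indices, each dominated by $\mathcal E_T$. Part (b) is identical with $a_{T,p}^{\mathrm{PM}}$ in place of $\sqrt{\ell_T}$. If one prefers not to take A3.4 as a black box, I would recall its derivation from Appendix~\ref{app:primitive_wald}, in four steps:
\begin{itemize}
\item uniform consistency of the stagewise MLEs, via the well-separation in A3.2 and a uniform law of large numbers over $|\mathcal A_T|\le Cp^K$ models;
\item a mean-value expansion of the score using the Hessian and third-derivative envelopes;
\item a maximal inequality for mixing sums under sub-exponential or polynomial tails (Bernstein-type for ET, Fuk--Nagaev-type for PM), combined with a union bound producing $\sqrt{\log(p^K)}\asymp\sqrt{\ell_T}$ or $p^{K/q}$;
\item uniform standard-error consistency at a rate that also absorbs diverging $\mathrm{NC}^\ast$.
\end{itemize}
The main obstacle in that route is the last step: the relative error of $\widehat{\mathrm{se}}$ multiplies $\mathrm{NC}^\ast$, which can be of order $\sqrt T$. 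However, since A3.4 is assumed here, I would simply cite it.

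For part (c), let $E_T=\{\mathcal E_T\le(1-\bar\eta_0)c_T\}$; Assumption~\ref{ass:A5} gives $\Pr(E_T)\to1$. Suppose $E_T$ occurs and, for some $\ell\le k_{\max}$ and $j\notin S^{(\ell)}$, $\mathrm{NC}_j^\ast(S^{(\ell)})=0$. The displayed inequality then gives
\[
W_{T,j}(S^{(\ell)})\le\mathcal E_T\le(1-\bar\eta_0)c_T<c_T,
\]
where the strict inequality uses $c_T>0$ and $\bar\eta_0>0$. So the event in \eqref{eq:thm1_null} is contained in $E_T^c$, and its probability is at most $\Pr(E_T^c)\to0$. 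No tail-regime distinction is needed, since only A3.4 and A5 enter.
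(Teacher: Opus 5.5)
Your proof is correct and follows the paper's argument: every path pair $(S^{(\ell)},j)$ lies in $\mathcal A_T$, so the path-wise maximum is bounded by $\mathcal E_T$. Parts (a) and (b) then follow from A3.4, and part (c) follows on the event $\{\mathcal E_T\le(1-\bar\eta_0)c_T\}$ supplied by Assumption~\ref{ass:A5}. Your optional sketch of where A3.4 comes from is not needed, and it does not match the paper's polynomial-moment route (Yokoyama moment bound, Sobolev extension, Markov and union bound rather than Fuk--Nagaev), but this does not affect the proof.
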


%------------------------

\paragraph{Comments.}

\label{sec:thm1_discussion}
%------------------------

Theorem~\ref{thm:approx_stats} provides a uniform approximation result for the
nonlinear BMT statistics along the entire selection path. Notably, no
assumptions are made regarding the relative magnitude of the population
strengths $\mathrm{NC}_{j}^{\ast}(S)$ across covariates. The theorem,
therefore, applies to arbitrary approximating models, including cases in which
strong pseudo-signal effects are present. 

Parts (a) and (b) establish that, under either exponential-tail or
polynomial-moment conditions, the stagewise Wald statistics
concentrate uniformly around their population counterparts. Part (c) shows
that the multiple-testing threshold eliminates covariates whose population
conditional strength is zero, uniformly across stages. 
\bigskip%

%========================

%\subsection{No-null selection and stopping under exhaustion}

%\label{sec:theorem2}
%========================

The results in Theorem~\ref{thm:approx_stats} form the basis for the stopping and no-false-inclusion properties of the nonlinear BMT procedure, to which we turn our attention to next. Specifically, we consider two key properties: (i) covariates with \emph{zero population conditional strength}
are not selected (uniformly along the path); (ii) the procedure stops once the
approximating model is \emph{exhausted}, in the sense that no remaining
covariate has population conditional strength large enough to cross the
multiple-testing threshold. 

Before stating the relevant theorem, recall that, for a conditioning set $S$ and a candidate $j\notin S$,
$\mathrm{NC}_{j}^{\ast}(S)$ denotes the population normalised conditional
strength of $x_{j,t}$ (see \eqref{eq:nc_def_prim}). Define the
\emph{population-null} set at $S$ as
\begin{equation}
\mathcal{N}^{\ast}(S)=\{j\notin S:\ \mathrm{NC}_{j}^{\ast}(S)=0\},
\label{eq:nullset_def}%
\end{equation}
and the maximal remaining population strength at stage $\ell$ as
\begin{equation}
M_{\ell}^{\ast}=\max_{j\notin S^{(\ell)}}\mathrm{NC}_{j}^{\ast}\bigl(S^{(\ell
)}\bigr). \label{eq:Mell_def}%
\end{equation}

%------------------------

\begin{theorem}
\label{thm:approx_stop}
Suppose Assumptions~\ref{ass:A1}--\ref{ass:A3} and \ref{ass:A5} hold. Let
$\widehat L\leq k_{\max}$ denote the number of successful selections made
before the nonlinear BMT procedure stops, with $\widehat L=k_{\max}$ if the
prespecified stage cap is reached, and let $\widehat S=S^{(\widehat L)}$.
For notational convenience, extend the path after stopping by setting
$S^{(\ell)}=\widehat S$ for $\ell\geq\widehat L$. Then, as
$T\rightarrow\infty$:

\begin{enumerate}
\item[\textup{(i)}] (\emph{No selection of population-null covariates,
uniformly in stage})
\begin{equation}
\Pr\!\left(
\exists\,\ell\leq k_{\max},\ \exists\,j\in
\mathcal N^*\!\left(S^{(\ell)}\right)
\text{ such that BMT selects }j\text{ at stage }\ell
\right)\rightarrow0.
\label{eq:thm2_nonull}%
\end{equation}

\item[\textup{(ii)}] (\emph{Stopping under approximating-model exhaustion}) If
there exists a possibly random stage $\ell^\dagger\le k_{\max}$ such that
\begin{equation}
M_{\ell^\dagger}^*=o_p(c_T),
\label{eq:exhaust_general}%
\end{equation}
then
\begin{equation}
\Pr(\widehat L\le \ell^\dagger)\rightarrow1.
\label{eq:thm2_stop}%
\end{equation}
\end{enumerate}
Part~\textup{(i)} may equivalently be written along the realised selected
path as
\begin{equation}
\Pr\!\left(
\exists\,\ell<\widehat L:
j^{(\ell)}\in\mathcal N^*\!\left(S^{(\ell)}\right)
\right)\rightarrow0.
\label{eq:thm2_path_nonull}%
\end{equation}
\end{theorem}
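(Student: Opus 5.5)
The proof reduces both parts to a single high-probability event built from Theorem~\ref{thm:approx_stats} and Assumption~\ref{ass:A5}. Let
\[
\mathcal G_T=\bigl\{\mathcal E_T\le(1-\bar\eta_0)c_T\bigr\},
\]
where $\mathcal E_T$ is the maximal deviation over the whole admissible class $\mathcal A_T$ from Assumption~A3.4. By \eqref{eq:threshold_calibration}, $\Pr(\mathcal G_T)\to1$.

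The key observation concerns the realised path. Every conditioning set $S^{(\ell)}$ generated by BMT, including the extended path after stopping, satisfies $S^{(0)}\subseteq S^{(\ell)}$ and $|S^{(\ell)}\setminus S^{(0)}|\le k_{\max}$. Hence every pair $(S^{(\ell)},j)$ with $j\notin S^{(\ell)}$ lies in $\mathcal A_T$. Bounding via $\mathcal E_T$ is therefore automatically uniform over the random, data-dependent path. No conditioning on the path is needed, because $\mathcal E_T$ is a maximum over a deterministic class. This is the same mechanism behind Theorem~\ref{thm:approx_stats}(c).

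\emph{Part (i).} Selection of $j$ at stage $\ell$ requires $j\in\mathcal J^{(\ell)}$, that is, $W_{T,j}(S^{(\ell)})\ge c_T$. If in addition $j\in\mathcal N^*(S^{(\ell)})$, then $\mathrm{NC}_j^*(S^{(\ell)})=0$. On $\mathcal G_T$ this gives
\[
W_{T,j}(S^{(\ell)})\le \mathcal E_T\le(1-\bar\eta_0)c_T<c_T,
\]
since $c_T>0$. This is a contradiction. The event in \eqref{eq:thm2_nonull} is therefore contained in the event in \eqref{eq:thm1_null}, which is in turn contained in $\mathcal G_T^c$, and its probability tends to zero. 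The path form \eqref{eq:thm2_path_nonull} follows immediately, since $j^{(\ell)}$ for $\ell<\widehat L$ is exactly the covariate selected at stage $\ell$.

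\emph{Part (ii).} I will show that on a high-probability event the passing set at stage $\ell^\dagger$ is empty, unless the procedure has already stopped earlier. On $\mathcal G_T$, for every $j\notin S^{(\ell^\dagger)}$,
\[
W_{T,j}(S^{(\ell^\dagger)})\le M^*_{\ell^\dagger}+\mathcal E_T\le M^*_{\ell^\dagger}+(1-\bar\eta_0)c_T.
\]
By \eqref{eq:exhaust_general}, $\Pr(M^*_{\ell^\dagger}<\bar\eta_0 c_T)\to1$. On the intersection of $\mathcal G_T$ with this event, which has probability tending to one, all statistics at stage $\ell^\dagger$ fall strictly below $c_T$, so $\mathcal J^{(\ell^\dagger)}=\varnothing$. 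Now split into cases. If BMT reached stage $\ell^\dagger$ with $\ell^\dagger<k_{\max}$ still active, it stops there, giving $\widehat L=\ell^\dagger$. If it stopped earlier, then $\widehat L<\ell^\dagger$. If $\ell^\dagger=k_{\max}$, the claim is trivial because $\widehat L\le k_{\max}$. In every case $\widehat L\le\ell^\dagger$.

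The main subtlety is that $\ell^\dagger$ and $S^{(\ell)}$ are random. The extended-path convention makes $M^*_{\ell^\dagger}$ well defined even when $\ell^\dagger\ge\widehat L$. The bound is valid because the uniform event $\mathcal G_T$ controls every admissible pair simultaneously, so no measurability or selection issue arises. I will state this explicitly rather than invoke Theorem~\ref{thm:approx_stats}(a)--(b) pathwise.
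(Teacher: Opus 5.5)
Your proof is correct and follows essentially the same route as the paper. Both arguments work on the single event $\{\mathcal E_T\le(1-\bar\eta_0)c_T\}$ supplied by Assumption~\ref{ass:A5}, and both use the fact that every set on the path is admissible to make the bound uniform over the random path. For part (ii), both combine $W_{T,j}(S^{(\ell^\dagger)})\le M^*_{\ell^\dagger}+\mathcal E_T$ with $M^*_{\ell^\dagger}=o_p(c_T)$; you use the margin $\bar\eta_0c_T$ where the paper uses $\bar\eta_0c_T/2$, which is immaterial. Your explicit case split simply spells out the paper's closing remark that the stopping rule halts at or before $\ell^\dagger$.
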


%------------------------

\paragraph{Comments.}

\label{sec:thm2_comments}
%------------------------

Part (i) in both regimes is the formal \textquotedblleft
no-null\textquotedblright\ property: variables whose population conditional
strength is zero are not selected, uniformly over all stages $\ell\leq
k_{\max}$. The proof is a direct union-bound argument over $\ell$ combined
with Theorem~\ref{thm:approx_stats}(c), exploiting that $k_{\max}$ is fixed.

Part (ii) formalises the stopping logic in an approximating-model sense. The
condition $M_{\ell^{\dagger}}^{\ast}=o_p(c_{T})$ means that by stage
$\ell^{\dagger}$ the remaining population conditional strengths are
asymptotically too small to cross the multiple-testing threshold. Since
Theorem~\ref{thm:approx_stats} controls the maximal deviation of
$W_{T,j}(S^{(\ell)})$ from $\mathrm{NC}_{j}^{\ast}(S^{(\ell)})$ uniformly in
stage, it follows that the maximal \emph{sample} statistic falls below $c_{T}$
at (or before) $\ell^{\dagger}$ with probability tending to one.

%========================

\section{Exact recovery and oracle inference}

\label{sec:exact}
%========================

This section provides results relating to \emph{exact recovery} of single or
multiple signals by the nonlinear BMT algorithm. The results rely on suitable
dominance conditions and yield statements about the recovery of the true
model, that is the model containing all signals and no pseudo-signals or noise
variables. Conditional on such results, the \emph{oracle properties} 
of the post-selection MLE of the GLM parameters can be established.

%------------------------

\subsection{Single signal}

Recall that, at any stage $\ell\geq0$, the BMT procedure compares the
stagewise statistics $\{W_{T,j}(S^{(\ell)}):j\notin S^{(\ell)}\}$ and selects
the largest one among those exceeding the threshold. Exact recovery in the
single-signal case requires that the population conditional strength of the
signal dominates that of all other covariates uniformly in the conditioning
set. This is formalised in the following dominance condition.

\setcounter{assumption}{0} \renewcommand{\theassumption}{A4-\arabic{assumption}}\renewcommand{\theHassumption}{A4.\arabic{assumption}}

\begin{assumption}
\label{ass:A41}
There exists a deterministic sequence $d_{1T}>0$ such that
\begin{equation}
\frac{\mathcal E_T}{d_{1T}}=o_p(1),
\label{eq:dom_k1_error}%
\end{equation}
and, for every admissible conditioning set $S$ satisfying
$S^{(0)}\subseteq S\subseteq S^{(0)}\cup\{j_0\}$ and $j_0\notin S$,
\begin{equation}
\mathrm{NC}_{j_{0}}^{\ast}(S)
\ge
\max_{\substack{m\neq j_0\\m\notin S}}
\mathrm{NC}_{m}^{\ast}(S)+d_{1T},
\label{eq:dom_k1}%
\end{equation}
and
\begin{equation}
\mathrm{NC}_{j_{0}}^{\ast}(S)
\ge c_T+d_{1T}.
\label{eq:dom_k1_threshold}%
\end{equation}
\end{assumption}

Under the restriction in Assumption~\ref{ass:A41}, the conditioning set before
the signal is selected is $S^{(0)}$. The slightly more general notation is
retained to match the stagewise formulation used below.

We have the following exact-recovery result in this setting.

\begin{theorem}
\label{thm:k1_recovery} Suppose
Assumptions~\ref{ass:A1}--\ref{ass:A3}, \ref{ass:A41} and \ref{ass:A5} hold,
$k=|S_0|=1$, with $S_0=\{j_0\}$, and $k_{\max}\ge1$ in the nonlinear BMT
procedure. Then, as $T\rightarrow\infty$:
\medskip\noindent\newline\textbf{(a)} With probability
tending to one, the first-stage passing set is nonempty and the first selected
covariate is the signal:
\begin{equation}
\Pr\!\left(j^{(0)}=j_0\right)\rightarrow1.
\label{eq:thm3_first}%
\end{equation}

\medskip\noindent\textbf{(b)} Consequently,
\begin{equation}
\Pr\!\left(j_0\in\widehat S\right)\rightarrow1.
\label{eq:thm3_include}%
\end{equation}

\medskip\noindent\textbf{(c)} If, in addition,
\begin{equation}
\max_{\substack{m\neq j_0\\m\notin S^{(0)}}}
\mathrm{NC}_m^*\!\left(S^{(0)}\cup\{j_0\}\right)=o(c_T),
\label{eq:k1_exhaustion}%
\end{equation}
then
\begin{equation}
\Pr\!\left(\widehat S=S^{(0)}\cup\{j_0\}\right)\rightarrow1.
\label{eq:thm3_exact}%
\end{equation}
\end{theorem}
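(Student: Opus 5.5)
The plan is to work on the single high-probability event
\[
\mathcal G_T=\bigl\{\mathcal E_T\le \min\{d_{1T}/2,\ (1-\bar\eta_0)c_T\}\bigr\}.
\]
By \eqref{eq:dom_k1_error}, $\Pr(\mathcal E_T<d_{1T}/2)\to1$, and by \eqref{eq:threshold_calibration} in Assumption~\ref{ass:A5}, $\Pr(\mathcal E_T\le(1-\bar\eta_0)c_T)\to1$. Hence $\Pr(\mathcal G_T)\to1$. Because $\mathcal E_T$ is a maximum over the whole admissible class $\mathcal A_T$, which contains every pair $(S,j)$ that can occur along any BMT path with $|S\setminus S^{(0)}|\le k_{\max}$, the bound holds simultaneously at every stage and for every realised conditioning set. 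This is the content of Theorem~\ref{thm:approx_stats}(a),(b). The randomness of the path therefore causes no difficulty. This uniformity is the point that most needs care, and it is already built into the definition of $\mathcal E_T$.

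For part (a), take the stage $\ell=0$ with $S=S^{(0)}$, which is admissible in Assumption~\ref{ass:A41} because $j_0\notin S^{(0)}$ by A1.2. On $\mathcal G_T$, \eqref{eq:dom_k1_threshold} gives
\[
W_{T,j_0}(S^{(0)})\ge \mathrm{NC}^*_{j_0}(S^{(0)})-\mathcal E_T\ge c_T+d_{1T}/2>c_T,
\]
so $j_0\in\mathcal J^{(0)}$ and the passing set is nonempty. For any $m\ne j_0$ with $m\notin S^{(0)}$, \eqref{eq:dom_k1} gives
\[
W_{T,m}(S^{(0)})\le \mathrm{NC}^*_m(S^{(0)})+\mathcal E_T\le \mathrm{NC}^*_{j_0}(S^{(0)})-d_{1T}+\mathcal E_T<\mathrm{NC}^*_{j_0}(S^{(0)})-\mathcal E_T\le W_{T,j_0}(S^{(0)}).
\]
The strict inequality uses $2\mathcal E_T<d_{1T}$. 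Thus $j_0$ is the unique maximiser over $\mathcal J^{(0)}$, so no tie-breaking issue arises, and $j^{(0)}=j_0$ on $\mathcal G_T$. Part (b) follows immediately, since the selected set is only ever enlarged and $k_{\max}\ge1$ guarantees that stage $0$ is executed.

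For part (c), work on $\mathcal G_T\cap\{j^{(0)}=j_0\}$. Then $S^{(1)}=S^{(0)}\cup\{j_0\}$, which is admissible. If $k_{\max}=1$, the procedure stops at the cap with $\widehat S=S^{(1)}$. Otherwise, for each $m\notin S^{(1)}$, condition \eqref{eq:k1_exhaustion} gives
\[
W_{T,m}(S^{(1)})\le \mathrm{NC}^*_m(S^{(1)})+\mathcal E_T\le o(c_T)+(1-\bar\eta_0)c_T.
\]
The $o(c_T)$ term is uniform in $m$ because \eqref{eq:k1_exhaustion} is stated for the maximum. For $T$ large this is at most $(1-\bar\eta_0/2)c_T<c_T$. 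So $\mathcal J^{(1)}=\varnothing$, the procedure stops, and $\widehat S=S^{(0)}\cup\{j_0\}$.

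Equivalently, part (c) is Theorem~\ref{thm:approx_stop}(ii) with $\ell^\dagger=1$, combined with (a). Since $\Pr(\mathcal G_T)\to1$, \eqref{eq:thm3_exact} follows.
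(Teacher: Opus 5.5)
Your proof is correct and follows the paper's argument. Both proofs work on a high-probability event on which $\mathcal E_T$ is a fixed fraction of $d_{1T}$ (the paper uses $d_{1T}/4$). On that event, the population gap \eqref{eq:dom_k1} and the threshold margin \eqref{eq:dom_k1_threshold} carry over to the sample statistics at $S^{(0)}$. Stopping at $S^{(0)}\cup\{j_0\}$ then follows from the exhaustion condition, exactly as in Theorem~\ref{thm:approx_stop}(ii), which you verify directly; you also handle the $k_{\max}=1$ case correctly.

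One small fix: your strict ordering step needs $2\mathcal E_T<d_{1T}$. You should therefore define $\mathcal G_T$ with $\mathcal E_T<d_{1T}/2$, which matches the probability statement you wrote, or with $d_{1T}/4$, rather than with $\mathcal E_T\le d_{1T}/2$. This rules out a tie between $j_0$ and a competitor.
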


%------------------------

\paragraph{Proof outline.}

\label{sec:thm3_proofoutline}
%------------------------

We sketch the key steps in the proof here; a complete proof is given in
Appendix~\ref{app:thm3}.

\begin{enumerate}
\item \emph{Population separation with the correct margin}. By
\eqref{eq:dom_k1}, for any admissible $S$ not containing the signal,
$\mathrm{NC}_{j_0}^*(S)$ exceeds every competing population noncentrality by
at least $d_{1T}$.

\item \emph{Uniform sample approximation}. By definition of $\mathcal E_T$,
\[
W_{T,j_0}(S)\ge \mathrm{NC}_{j_0}^*(S)-\mathcal E_T,
\qquad
W_{T,m}(S)\le \mathrm{NC}_{m}^*(S)+\mathcal E_T.
\]
Since $\mathcal E_T/d_{1T}=o_p(1)$, the sample ordering inherits the
population ordering with probability tending to one.

\item \emph{Threshold crossing}. The additional condition
\eqref{eq:dom_k1_threshold} ensures that the signal is not only the largest
sample statistic but also exceeds the multiple-testing threshold.

\item \emph{Stopping after selecting the signal}. Condition
\eqref{eq:k1_exhaustion} ensures that, after the signal is included, all
remaining population noncentralities are asymptotically below the threshold.
The stopping theorem then rules out further selections.
\end{enumerate}

%========================

\subsection{Multiple signals}

\label{sec:theorem4}
%========================

Next, we extend the single-signal recovery result to the general fixed-$k$
case. The key additional difficulty is that, after some signals are selected,
the remaining signal variables must continue to dominate all pseudo-signal and
noise covariates conditionally on the currently selected set. This is captured
by the stagewise dominance condition of Assumption~\ref{ass:A4}.

The following theorem is the nonlinear analogue of the multiple-signal
separation and exact recovery theorem established in linear BMT analysis \citep[cf.][]{kapetanios2026}.

%------------------------

\begin{theorem}
\label{thm:fixedk_recovery} Suppose
Assumptions~\ref{ass:A1}--\ref{ass:A5} hold and $k_{\max}\geq k\geq1$ in the
nonlinear BMT procedure. Then, the following statements are true as
$T\rightarrow\infty$. \medskip\noindent\newline\textbf{(a) No false selections
before exhausting }$S_{0}$\textbf{:} With probability tending to one, at each
stage $\ell=0,1,\dots,k-1$,
\begin{equation}
j^{(\ell)}\in S_{0}\setminus S^{(\ell)}, \label{eq:thm4_true_each_stage}%
\end{equation}
i.e.,\ the covariate selected at stage $\ell$ is a previously unselected
signal. In particular,
\begin{equation}
\Pr\!\left(  S^{(k)}=S^{(0)}\cup S_{0}\right)  \ \rightarrow\ 1,
\label{eq:thm4_after_k}%
\end{equation}
where $S^{(k)}$ denotes the set after $k$ successful selections.
\medskip\noindent\newline\textbf{(b) Exact model recovery and stopping:}
Assume additionally that
\begin{equation}
\max_{\substack{m\notin S_{0}\\m\notin S^{(0)}}}
\mathrm{NC}_{m}^{\ast}\bigl(S^{(0)}\cup S_{0}\bigr)=o(c_T),
\label{eq:thm4_null_after_true}%
\end{equation}
i.e., once all signals are included, the remaining population conditional
strengths are asymptotically below the testing threshold. Then,
\begin{equation}
\Pr\!\left(  \widehat{S}=S^{(0)}\cup S_{0}\right)  \rightarrow\ 1,
\label{eq:thm4_exact}%
\end{equation}
i.e.,\ nonlinear BMT selects all and only the true signals and stops after $k$
stages (up to the always-in controls).
\end{theorem}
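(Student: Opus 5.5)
The argument is an induction over stages $\ell=0,1,\dots,k-1$, carried out on a single high-probability event. Define
\[
\Omega_T=\{\mathcal E_T<d_T/2\}\cap\{\mathcal E_T\le(1-\bar\eta_0)c_T\}.
\]
By \eqref{eq:dom_error_margin} and \eqref{eq:threshold_calibration} in Assumptions~\ref{ass:A4} and \ref{ass:A5}, $\Pr(\Omega_T)\to1$. Because $\mathcal E_T$ is a maximum over the whole admissible class $\mathcal A_T$, which does not depend on the data-driven path, there is no conditioning issue: any random $S^{(\ell)}$ with $|S^{(\ell)}\setminus S^{(0)}|\le k_{\max}$ is automatically covered. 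The whole of part (a) will be shown deterministically on $\Omega_T$.

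\emph{Induction for (a).} Assume that on $\Omega_T$ we have $S^{(0)}\subseteq S^{(\ell)}\subseteq S^{(0)}\cup S_0$ and $|S^{(\ell)}\cap S_0|=\ell<k$. The base case holds trivially, since $S^{(0)}\cap S_0=\varnothing$ by A1.2. Such an $S^{(\ell)}$ is admissible for Assumption~\ref{ass:A4}, because $\ell\le k\le k_{\max}$.

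For any remaining signal $j\in S_0\setminus S^{(\ell)}$, \eqref{eq:stg_threshold} gives
\[
W_{T,j}(S^{(\ell)})\ge \mathrm{NC}^*_j-\mathcal E_T> c_T+d_T/2\ge c_T.
\]
Hence the passing set $\mathcal J^{(\ell)}$ is nonempty and the algorithm does not stop.

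For any non-signal $m\notin S_0\cup S^{(\ell)}$, \eqref{eq:stg_dom} gives
\[
W_{T,m}\le \mathrm{NC}^*_m+\mathcal E_T\le \min_{j}\mathrm{NC}^*_j-d_T+\mathcal E_T<\min_j\mathrm{NC}^*_j-\mathcal E_T\le \min_j W_{T,j},
\]
using $2\mathcal E_T<d_T$. So every non-signal's sample statistic lies strictly below every remaining signal's, and the argmax $j^{(\ell)}$ is a remaining signal whatever the tie-breaking rule. This gives $S^{(\ell+1)}\subseteq S^{(0)}\cup S_0$ with $|S^{(\ell+1)}\cap S_0|=\ell+1$. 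The cap $k_{\max}\ge k$ is never reached before stage $k$. After $k$ steps, $S^{(k)}=S^{(0)}\cup S_0$ on $\Omega_T$, which proves \eqref{eq:thm4_true_each_stage} and \eqref{eq:thm4_after_k}.

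\emph{Stopping for (b).} On $\Omega_T$ we have $S^{(k)}=S^{(0)}\cup S_0$. If $k=k_{\max}$, the cap stops the procedure and we are done. Otherwise, \eqref{eq:thm4_null_after_true} gives $M_k^*=o(c_T)$, and
\[
\max_m W_{T,m}(S^{(k)})\le M_k^*+(1-\bar\eta_0)c_T=(1-\bar\eta_0+o(1))c_T<c_T
\]
for large $T$. So $\mathcal J^{(k)}=\varnothing$ and the algorithm stops at $\widehat S=S^{(0)}\cup S_0$. Alternatively, one can invoke Theorem~\ref{thm:approx_stop}(ii) with $\ell^\dagger=k$ and intersect $\{\widehat L\le k\}$ with the event of part (a). That intersection works because on the event of (a) the path cannot stop before $k$, so $\widehat L=k$.

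\emph{Main obstacle.} The delicate step is making sure the dominance margin is used with the factor of two. The comparison between the two sample statistics loses $2\mathcal E_T$, so we need $\mathcal E_T<d_T/2$; this holds with probability tending to one because $\mathcal E_T/d_T=o_p(1)$. A second point to state explicitly is that the stagewise conditions in Assumption~\ref{ass:A4} are only needed at the data-dependent sets $S^{(\ell)}$. This is legitimate because, on $\Omega_T$, the induction hypothesis places those sets inside the deterministic family where Assumption~\ref{ass:A4} is imposed. Finally, it is uniformity of $\mathcal E_T$ over $\mathcal A_T$, rather than Theorem~\ref{thm:approx_stats} applied along a fixed path, that avoids any selection-induced dependence.
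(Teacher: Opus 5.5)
Your proof is correct and follows essentially the same route as the paper. Both arguments run an induction over stages on one high-probability event that controls $\mathcal E_T$ relative to $d_T$ (the paper uses $\mathcal E_T\le d_T/4$, you use $\mathcal E_T<d_T/2$), and both then use the exhaustion condition to force stopping; the paper gets that step by invoking Theorem~\ref{thm:approx_stop}, while you write the same bound out directly and also handle the case $k=k_{\max}$ and the strict ordering that makes tie-breaking irrelevant.
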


%------------------------

\paragraph{Comments.}

\label{sec:thm4_comments}
%------------------------

Part (a) is the central stagewise recovery statement: the algorithm selects a
signal at each of the first $k$ stages, hence it necessarily includes all
signals after $k$ selections. The nontrivial requirement is the uniform dominance gap in \eqref{eq:stg_dom},
which must dominate the maximal sample approximation error and must be large
enough for the remaining signals to cross the threshold at every intermediate
stage.

Part (b) adds a standard \textquotedblleft
no-residual-signal\textquotedblright\ condition after $S_{0}$ is included.
Under correct specification of the single-index GLM, the condition is
satisfied with equality to zero: once $S_{0}$ is included, any covariate
$m\notin S_{0}$ has zero population conditional strength in the augmented
model. The weaker $o(c_T)$ condition is sufficient for stopping because the
threshold dominates the sample approximation error.

%------------------------

\paragraph{Proof outline.}

\label{sec:thm4_proofoutline}
%------------------------

A complete proof is given in Appendix~\ref{app:thm4}; here, we summarise the
main argument.

\begin{enumerate}
\item \emph{Induction over stages}. Suppose that after $\ell<k$ stages the
selected non-always-in covariates are all signals, so that
$S^{(\ell)}\subseteq S^{(0)}\cup S_0$ and
$|S^{(\ell)}\cap S_0|=\ell$.

\item \emph{Population dominance at stage $\ell$}. Assumption~\ref{ass:A4}
gives a population gap of at least $d_T$ between every remaining signal and
every remaining non-signal, and it also ensures that the remaining signals
cross the threshold by at least $d_T$.

\item \emph{Uniform sample approximation}. Since $\mathcal E_T/d_T=o_p(1)$,
the sample Wald statistics preserve the population ordering with probability
tending to one. Hence a non-signal cannot maximise the stagewise statistic at
any stage before $S_0$ is exhausted, and the selected covariate must be a new
signal.

\item \emph{Stopping after exhaustion}. Once $S^{(k)}=S^{(0)}\cup S_0$,
condition \eqref{eq:thm4_null_after_true} and Theorem~\ref{thm:approx_stop}
imply that all remaining sample statistics are below $c_T$ with probability
tending to one. The algorithm therefore stops without false inclusions.
\end{enumerate}

%========================

\subsection{Oracle estimation and inference}

\label{sec:oracle}
%========================

We now establish the oracle properties of the post-nonlinear-BMT
estimator: conditional on exact model recovery, as established in
Theorem~\ref{thm:fixedk_recovery}, the post-selection MLE of the parameters of
the GLM behaves asymptotically (to first order) as if the true model that
contains only signal variables were known in advance. 
%------------------------

%\subsection{Post-selection estimator and oracle equivalence}

%\label{sec:post_mle}
%------------------------

Letting $\widehat{S}$ denote the set selected by the nonlinear BMT procedure, write
\(S^\dagger=S^{(0)}\cup S_0.\)
Define $\widehat{\boldsymbol\beta}_{\rm post}\in\mathbb R^p$ by fitting the
(quasi-) MLE on $\widehat S$ and setting all coordinates outside
$\widehat S$ equal to zero. Write
$\widehat{\boldsymbol\beta}_{{\rm post},S^\dagger}$ for its fixed-dimensional
$S^\dagger$ subvector. Let $\boldsymbol\beta_{0,S^\dagger}$ denote the true
parameter vector restricted to the oracle support $S^\dagger$, including the
always-in controls, and write $\bar d=|S^\dagger|$. The infeasible oracle estimator is computed
using the same maximisation and deterministic tie-breaking rule as the
post-BMT refit, so the two estimators coincide whenever
$\widehat S=S^\dagger$.

\setcounter{assumption}{5} \renewcommand{\theassumption}{A\arabic{assumption}}\renewcommand{\theHassumption}{A.\arabic{assumption}}

\begin{assumption}
\label{ass:A6}
The oracle criterion over $S^\dagger$ has a unique pseudo-true value
$\boldsymbol\beta_{0,S^\dagger}$. Its score
\[
\boldsymbol\psi_{0,t}
=
\frac{\partial}{\partial\boldsymbol\beta_{S^\dagger}}
 l_t(\boldsymbol x_{S^\dagger,t}'\boldsymbol\beta_{0,S^\dagger},y_t)
\]
satisfies
\[
T^{-1/2}\sum_{t=1}^T\boldsymbol\psi_{0,t}
\Rightarrow N(\boldsymbol{0},\boldsymbol\Omega_0),
\]
where $\boldsymbol\Omega_0$ is the relevant long-run score covariance matrix ($\Rightarrow$ signifies convergence in distribution as
$T\rightarrow\infty$).
The oracle Hessian
\[
\boldsymbol J_0
=
-\mathbb E\left[
\frac{\partial^2}{\partial\boldsymbol\beta_{S^\dagger}\partial
\boldsymbol\beta_{S^\dagger}'}
l_t(\boldsymbol x_{S^\dagger,t}'\boldsymbol\beta_{0,S^\dagger},y_t)
\right]
\]
is nonsingular, and the oracle estimator $\widehat{\boldsymbol\beta}_{\rm or}$
satisfies the asymptotic linear expansion
\[
\sqrt T(\widehat{\boldsymbol\beta}_{\rm or}-\boldsymbol\beta_{0,S^\dagger})
=
\boldsymbol J_0^{-1}T^{-1/2}\sum_{t=1}^T\boldsymbol\psi_{0,t}+o_p(1).
\]
\end{assumption}

We have the following result.

%------------------------

\begin{theorem}
\label{thm:oracle} Suppose
Assumption~\ref{ass:A6} holds and
\(\Pr(\widehat S=S^\dagger)\rightarrow1\) as $T\rightarrow\infty$.
Then,
\begin{equation}
\widehat{\boldsymbol\beta}_{{\rm post},S^\dagger}-\boldsymbol\beta_{0,S^\dagger}
=o_p(1),
\label{eq:oracle_consistency}%
\end{equation}
and
\begin{equation}
\sqrt{T}\bigl(\widehat{\boldsymbol\beta}_{{\rm post},S^\dagger}-
\boldsymbol\beta_{0,S^\dagger}\bigr)
\Rightarrow
N\!\left(\boldsymbol{0},\boldsymbol J_0^{-1}\boldsymbol\Omega_0\boldsymbol J_0^{-1}\right).
\label{eq:oracle_an}%
\end{equation}
Consequently, the post-BMT estimator has the same first-order limiting
distribution as the infeasible oracle estimator that knows $S^\dagger$ in advance.
\end{theorem}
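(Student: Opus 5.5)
The plan is the standard coincidence-on-a-high-probability-event argument. Let $\mathcal G_T=\{\widehat S=S^\dagger\}$, so that $\Pr(\mathcal G_T)\to1$ by hypothesis. The infeasible oracle estimator is computed with the same maximisation and deterministic tie-breaking rule as the post-BMT refit. Hence on $\mathcal G_T$ we have the exact identity $\widehat{\boldsymbol\beta}_{{\rm post},S^\dagger}=\widehat{\boldsymbol\beta}_{\rm or}$, with all coordinates of $\widehat{\boldsymbol\beta}_{\rm post}$ outside $S^\dagger$ equal to zero. Define the difference $\boldsymbol\Delta_T=\sqrt T(\widehat{\boldsymbol\beta}_{{\rm post},S^\dagger}-\widehat{\boldsymbol\beta}_{\rm or})$. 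For any $\epsilon>0$,
\[
\Pr(\|\boldsymbol\Delta_T\|>\epsilon)\le\Pr(\mathcal G_T^c)\to0 .
\]
No rate is needed here, because the two estimators agree exactly rather than approximately on $\mathcal G_T$. It follows that $\boldsymbol\Delta_T=o_p(1)$, and a fortiori $\widehat{\boldsymbol\beta}_{{\rm post},S^\dagger}-\widehat{\boldsymbol\beta}_{\rm or}=o_p(1)$.

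Next I establish the oracle limit from Assumption~\ref{ass:A6}. The asymptotic linear expansion gives
\[
\sqrt T(\widehat{\boldsymbol\beta}_{\rm or}-\boldsymbol\beta_{0,S^\dagger})=\boldsymbol J_0^{-1}T^{-1/2}\sum_{t}\boldsymbol\psi_{0,t}+o_p(1).
\]
Combining the score CLT with the continuous mapping theorem for the fixed nonsingular matrix $\boldsymbol J_0^{-1}$, and then applying Slutsky's lemma, yields convergence in distribution to $N(\boldsymbol 0,\boldsymbol J_0^{-1}\boldsymbol\Omega_0\boldsymbol J_0^{-1})$. In particular $\sqrt T(\widehat{\boldsymbol\beta}_{\rm or}-\boldsymbol\beta_{0,S^\dagger})=O_p(1)$, so $\widehat{\boldsymbol\beta}_{\rm or}-\boldsymbol\beta_{0,S^\dagger}=O_p(T^{-1/2})=o_p(1)$.

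Finally, I decompose
\[
\sqrt T(\widehat{\boldsymbol\beta}_{{\rm post},S^\dagger}-\boldsymbol\beta_{0,S^\dagger})=\sqrt T(\widehat{\boldsymbol\beta}_{\rm or}-\boldsymbol\beta_{0,S^\dagger})+\boldsymbol\Delta_T .
\]
Slutsky's lemma then gives \eqref{eq:oracle_an}. Dividing by $\sqrt T$, or equivalently adding the two $o_p(1)$ terms, gives \eqref{eq:oracle_consistency}. The closing sentence of the theorem, that the post-BMT estimator shares the oracle's first-order limiting distribution, is exactly the statement that $\boldsymbol\Delta_T=o_p(1)$.

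The only delicate point, and the one I expect to need care, is the well-definedness of the identity on $\mathcal G_T$. On the complement, $\widehat S$ is random, so the dimension of the fitted model and the stagewise maximiser may vary. I plan to handle this by never analysing the post-selection estimator off $\mathcal G_T$: the bound $\Pr(\mathcal G_T^c)\to0$ absorbs whatever happens there. On $\mathcal G_T$ itself, the common deterministic tie-breaking convention guarantees that the two estimators are literally the same random vector, even when local or multiple maximisers exist. I will state this measurability and tie-breaking remark explicitly. Everything else is routine Slutsky.
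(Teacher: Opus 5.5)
Your proposal is correct and follows essentially the same route as the paper's proof. Both arguments use the exact coincidence $\widehat{\boldsymbol\beta}_{{\rm post},S^\dagger}=\widehat{\boldsymbol\beta}_{\rm or}$ on $\{\widehat S=S^\dagger\}$ to get $\sqrt T(\widehat{\boldsymbol\beta}_{{\rm post},S^\dagger}-\widehat{\boldsymbol\beta}_{\rm or})=o_p(1)$, and then combine this with the oracle limit from Assumption~\ref{ass:A6} via Slutsky's theorem; your derivation of consistency through the $O_p(T^{-1/2})$ rate of the oracle estimator is just a more explicit version of the paper's ``same argument without the $\sqrt T$ scaling.''
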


%------------------------

\paragraph{Proof outline.}

\label{sec:thm5_proofsketch}
%------------------------

On the event $\{\widehat S=S^\dagger\}$, which occurs with probability
tending to one, the fixed-dimensional post-selection subvector equals the
oracle estimator:
$\widehat{\boldsymbol\beta}_{{\rm post},S^\dagger}
=\widehat{\boldsymbol\beta}_{\rm or}$.
Assumption~\ref{ass:A6} gives the fixed-dimensional asymptotic linear
representation and central limit theorem for the oracle estimator. The
high-probability equality between the post-BMT and oracle estimators then
transfers the oracle limit distribution to the post-BMT estimator. A complete
proof is given in Appendix~\ref{app:thm5}.

%------------------------

\subsection{Logit and probit models}

\label{sec:logit_probit}
%------------------------

We verify that the conditions of Theorems~\ref{thm:approx_stats}%
--\ref{thm:oracle} are satisfied for logit and probit models, where the (inverse) link function is the logistic distribution function $\Lambda(z)=1/(1+e^{-z}%
)$ and the standard normal distribution function $\Phi$, respectively.

\begin{lemma}
\label{lem:logit_probit} Let $G=\Lambda$ or $G=\Phi$.
Suppose Assumptions~\ref{ass:A1}--\ref{ass:A2} and conditions A3.2--A3.3
hold. Suppose also that the pseudo-true parameters of all admissible stagewise
submodels lie in a common compact set, that their associated indices lie in a
fixed compact interval, and that the additional parameter-set, growth and
standard-error conditions of Theorem~\ref{thm:primitive_wald} hold. Then A3.4
holds with the rates stated in that theorem, and hence
Assumption~\ref{ass:A3} holds. If, in addition, the postulated conditional
mean is correctly specified and
$S\supseteq S^{\dagger}=S^{(0)}\cup S_0$, then
$\theta_m^*(S)=0$ and hence $\mathrm{NC}_m^*(S)=0$ for every candidate
$m\notin S$.
\end{lemma}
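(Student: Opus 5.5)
The proof splits into two parts: verifying A3.1 and A3.4 (A3.2--A3.3 are assumed), and the population-null claim under correct specification.

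\textbf{A3.1.} This is immediate for both links. $\Lambda$ and $\Phi$ map $\mathbb R$ into $(0,1)$, are strictly increasing (their derivatives $\Lambda(1-\Lambda)$ and $\phi$ are positive), and are $C^\infty$ with derivatives bounded on all of $\mathbb R$, hence on compact subsets.

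\textbf{A3.4.} I would invoke Theorem~\ref{thm:primitive_wald} directly, after checking that its standing regularity requirements hold for the Bernoulli log-likelihood
\[
l_t(\eta,y)=y\log G(\eta)+(1-y)\log\{1-G(\eta)\}.
\]
The key observation is that $\dot l_t$, $\ddot l_t$ and the third derivative in $\eta$ are continuous functions of $(\eta,y)$. Because the indices are confined to a fixed compact interval $I$ and $G(I)\subset[\epsilon,1-\epsilon]$, these functions are bounded uniformly on $I\times\{0,1\}$, by constants $C_1,C_2,C_3$ say.

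For the logit link this boundedness is global, since $\dot l=y-\Lambda$ and $\ddot l=-\Lambda(1-\Lambda)$. For the probit link, the Mills-ratio terms $\phi/\Phi$ and $\phi/(1-\Phi)$ are bounded on $I$; this is exactly why the compact-index hypothesis is imposed.

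The envelopes therefore reduce to polynomial envelopes in the covariates. For example, $F_{1,m,t}\le C_1\max_{i}|x_{i,t}|$, $F_{2,m,t}\le C_2\max_{i,i'}|x_{i,t}x_{i',t}|$, and similarly for the third-order and score-outer-product terms, with the maxima taken over the at most $d_0$ coordinates involved. These are the objects controlled in A2, so the envelope conditions of Theorem~\ref{thm:primitive_wald} are met.

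Together with A1, A3.2--A3.3 and the assumed parameter-set, growth and standard-error conditions, Theorem~\ref{thm:primitive_wald} then gives $\mathcal E_T=O_p(\sqrt{\ell_T})$ under A2-ET and $\mathcal E_T=O_p(p^{K/q})$ under A2-PM. Hence A3.4 holds, and therefore all of Assumption~\ref{ass:A3}.

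\textbf{Main obstacle.} The delicate point is ensuring that the sample estimates used within Theorem~\ref{thm:primitive_wald}, not just the pseudo-true indices, stay in a region where the probit Mills ratios are bounded. I would handle this by restricting to parameters in the compact sets $\mathcal B_{j,S}$ and using the compact-index condition on those sets, as in the theorem's parameter-set hypotheses. The uniform consistency delivered by A3.2's well-separation then places the estimators there with probability tending to one.

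\textbf{Population-null claim.} Take $S\supseteq S^\dagger$, $m\notin S$, and correct specification $\mathbb E(y_t\mid\boldsymbol x_t)=G(\boldsymbol x_{S^\dagger,t}'\boldsymbol\beta_{0,S^\dagger})$. Consider the candidate $\boldsymbol\vartheta^\circ$ whose $\boldsymbol\gamma$-part equals $\boldsymbol\beta_0$ on $S^\dagger$ and zero on $S\setminus S^\dagger$, with $\theta_m=0$. Its index then reproduces the true index.

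Because the Bernoulli log-likelihood is a proper scoring rule, pointwise in $\boldsymbol x_t$ the conditional expected log-likelihood $p\log G(\eta)+(1-p)\log\{1-G(\eta)\}$ is maximised exactly when $G(\eta)=p$, i.e.\ at the true index, since $G$ is strictly increasing. Taking expectations over $\boldsymbol x_t$, $\boldsymbol\vartheta^\circ$ maximises $Q_{m,S}$. The maximiser is unique by A3.2--A3.3, so $\boldsymbol\vartheta_m^*(S)=\boldsymbol\vartheta^\circ$.

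Equivalently, the population score at $\boldsymbol\vartheta^\circ$ is $\mathbb E[\boldsymbol x\,a_t\{y_t-G(\eta_t)\}]=0$ by iterated expectations, where $a_t=G'(\eta_t)/[G(\eta_t)\{1-G(\eta_t)\}]$ as in the score form of Section~\ref{sec:wald}, evaluated at the true index. This gives the same conclusion.

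Hence $\theta_m^*(S)=0$, and since $V^*_{\theta\theta,m}(S)$ is bounded away from zero by A3.3, definition \eqref{eq:nc_def_prim} gives $\mathrm{NC}_m^*(S)=0$.
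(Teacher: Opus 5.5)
Your proof is correct and follows essentially the paper's route. The A3.4 part is the same direct appeal to Theorem~\ref{thm:primitive_wald}, and the population-null part is the paper's correct-specification argument; your envelope discussion and ``main obstacle'' paragraph are not actually needed, since A2 is itself a hypothesis of the lemma and the estimators in Appendix~\ref{app:primitive_wald} are maximisers over $\mathcal B_{j,S}$ by construction. Your explicit check of A3.1 and your proper-scoring-rule argument make explicit two steps the paper's short proof leaves implicit: the latter shows that the true-index parameter is a global maximiser of the population criterion, rather than merely a point where the candidate score vanishes.
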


\paragraph{Comment.}

For logit, the scalar score and curvature weights are bounded, so the
derivative-envelope conditions in Assumption~\ref{ass:A2} follow from the
corresponding covariate tails. For probit, compactness of the index controls
the Mills-ratio terms. Conditions A3.2--A3.3 retain their population
identification and nonsingularity roles, while Appendix~\ref{app:primitive_wald}
provides the uniform Wald approximation. The vanishing of
$\mathrm{NC}_{m}^{\ast}(S)$ after inclusion of $S^{\dagger}$ follows from
correct specification of the conditional mean. \medskip

Together with Theorems~\ref{thm:approx_stats}--\ref{thm:oracle}, the
results in Lemma~\ref{lem:logit_probit} show that the nonlinear BMT procedure
results in consistent model selection and oracle inference for
high-dimensional binary-response models under transparent and verifiable conditions.

%========================

\section{Primitive conditions for stagewise dominance}

\label{sec:primitive}

In this section, we provide primitive sufficient conditions for the stagewise
dominance conditions (\ref{eq:stg_dom}) and (\ref{eq:dom_k1}) required for the
exact-recovery results of Theorems~\ref{thm:k1_recovery} and
\ref{thm:fixedk_recovery}. We consider two distinct routes relying on
assumptions that serve different empirical and conceptual purposes. \emph{The
local (small-index) regime} (Assumptions \ref{ass:L1}--\ref{ass:L3} below) is
most appropriate when the single-index remains uniformly moderate, so that the
nonlinear GLM admits a stable linear approximation --- in this case dominance
reduces to a partial-correlation gap. The \emph{global regime} is designed for settings with strong dependence
and potential proxy covariates, where dominance must be preserved after
conditioning on previously selected variables. It imposes stagewise
\emph{conditional proxy weakness} through Assumption~\ref{ass:GM2}.
Assumption~\ref{ass:G1} gives a primitive moment-based sufficient condition,
whereas Assumption~\ref{ass:G2} is a partial-projection diagnostic that is
useful when additional model-specific restrictions link it to
Assumption~\ref{ass:G1}.

The local conditions and the global-regime condition are independent
alternatives. Either Assumptions~\ref{ass:L1}--\ref{ass:L3}, or
Assumption~\ref{ass:GM2} verified through an appropriate primitive route, is
sufficient for stagewise dominance. Theorems~\ref{thm:k1_recovery} and
\ref{thm:fixedk_recovery} require only one of these alternatives. Recalling the definition of the
population normalised conditional strength $\mathrm{NC}_{j}^{\ast}(S)$ of
covariate $x_{j,t}$, given in \eqref{eq:nc_def_prim}, we note that a dominance
condition such as Assumption \ref{ass:A4} requires that, for any admissible
conditioning set $S$ with $|S\cap S_{0}|\leq k-1$, the remaining signals
dominate all non-signals in $\mathrm{NC}_{j}^{\ast}(S)$ by a uniform gap.

%------------------------

\subsection{Local regime linearisation}

\label{sec:local_regime}
%------------------------

The local regime exploits the fact that, when the index $\boldsymbol{x}%
_{t}^{\prime}\boldsymbol{\beta}_{0}$ is uniformly small, the (inverse) link $G$ can be
linearised and the GLM behaves like a weighted linear model. We make the
following three assumptions.

\setcounter{assumption}{0} \renewcommand{\theassumption}{L\arabic{assumption}}\renewcommand{\theHassumption}{L.\arabic{assumption}}

\begin{assumption}
\label{ass:L1} There are a constant $\delta_0>0$ and a deterministic sequence
$0<\delta_T\leq\delta_0$ such that
\[
\sup_{1\leq t\leq T}|\boldsymbol{x}_{t}^{\prime}\boldsymbol{\beta}_{0}|
\leq\delta_T
\]
with probability tending to one. The sequence $\delta_T$ may be fixed at a
sufficiently small value or may converge to zero in a local triangular array.
\end{assumption}

\begin{assumption}
\label{ass:L2} On $[-\delta_0,\delta_0]$, $G$
satisfies $0<\underline{g}\leq G^{\prime}(z)\leq\bar{g}<\infty$ and
$|G^{\prime\prime}(z)|\leq M<\infty$ for some constants $\underline{g}$,
$\bar{g}$ and $M$.
\end{assumption}

\begin{assumption}
[Scaled local partial-correlation gap]\label{ass:L3} For every admissible
conditioning set $S^{(0)}\subseteq S\subseteq S^{(0)}\cup S_{0}$ with
$|S\cap S_{0}|\leq k-1$, let $r_{j,t}(S)$ denote the residual from the
least-squares projection of $x_{j,t}$ on $\boldsymbol{x}_{S,t}$ and put
\(M_j(S)=\left|\mathbb E[r_{j,t}(S)\boldsymbol x_t'\boldsymbol\beta_0]\right|.\)
There exist constants $0<C_L\leq C_U<\infty$ and a deterministic sequence
$\underline{\kappa}_T>0$ such that
\begin{equation}
C_L\min_{j\in S_0\setminus S}M_j(S)
\geq
C_U\max_{\substack{m\notin S_0\\m\notin S}}M_m(S)
+\underline{\kappa}_T.
\label{eq:local_gap}%
\end{equation}
\end{assumption}

The next lemma provides the link between the local regime and stagewise dominance.

\begin{lemma}
[Local regime implies dominance]\label{lem:local_implies_A4} Under Assumptions
A3.3 and \ref{ass:L1}--\ref{ass:L3}, suppose that the uniform local expansion
takes the form
\begin{equation}
\sup_{\substack{S^{(0)}\subseteq S\subseteq S^{(0)}\cup S_0\\j\notin S}}
\left|
\frac{|\theta_j^*(S)|}{\{V_{\theta\theta,j}^*(S)\}^{1/2}}
-C_j(S)M_j(S)
\right|
\leq r_T,
\qquad
C_L\leq C_j(S)\leq C_U,
\label{eq:local_uniform_expansion}
\end{equation}
where $r_T$ is deterministic and
$r_T=o(\underline\kappa_T)$. If
$c_T+\mathcal E_T=o_p(\sqrt T\,\underline\kappa_T)$, then
Assumption~\ref{ass:A4} holds with
$d_T=\sqrt T\,\underline\kappa_T/4$.
\end{lemma}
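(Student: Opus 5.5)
The argument is deterministic algebra on the expansion \eqref{eq:local_uniform_expansion}, followed by a probabilistic step to handle the random quantity $\mathcal E_T$. Fix an admissible $S$ with $S^{(0)}\subseteq S\subseteq S^{(0)}\cup S_0$ and $|S\cap S_0|\le k-1$. Write $\tau_j(S)=|\theta_j^*(S)|/\{V_{\theta\theta,j}^*(S)\}^{1/2}$, so that $\mathrm{NC}_j^*(S)=\sqrt T\,\tau_j(S)$; A3.3 guarantees that this is well defined and finite.

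\emph{Gap step.} For any remaining signal $j\in S_0\setminus S$ and any non-signal $m\notin S_0\cup S$, the expansion together with $C_L\le C_j(S)\le C_U$ gives
\[
\tau_j(S)\ge C_L M_j(S)-r_T,\qquad \tau_m(S)\le C_U M_m(S)+r_T.
\]
Taking the minimum over $j$ and the maximum over $m$, and then applying \eqref{eq:local_gap}, we obtain
\[
\min_j\tau_j(S)-\max_m\tau_m(S)\ge\underline\kappa_T-2r_T.
\]
Since $r_T=o(\underline\kappa_T)$, we have $2r_T\le\underline\kappa_T/2$ for all large $T$. Multiplying by $\sqrt T$ then gives \eqref{eq:stg_dom} with margin $\sqrt T\underline\kappa_T/2\ge d_T$. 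The bound is uniform in $S$ because the expansion is a supremum over all such $S$ and the class of such $S$ is finite, since $k$ is fixed.

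\emph{Threshold step.} The gap step also gives $\min_j\tau_j(S)\ge\max_m\tau_m(S)+\underline\kappa_T/2\ge\underline\kappa_T/2$, because each $\tau_m\ge0$. If the set of remaining non-signals is empty, the same conclusion follows from $C_LM_j(S)\ge\underline\kappa_T$, using $M_m\ge0$. Hence $\min_j\mathrm{NC}_j^*(S)\ge\sqrt T\underline\kappa_T/2$. To obtain \eqref{eq:stg_threshold} we need $\sqrt T\underline\kappa_T/2\ge c_T+\sqrt T\underline\kappa_T/4$, that is, $c_T\le\sqrt T\underline\kappa_T/4$. This holds for large $T$ because $c_T\le c_T+\mathcal E_T=o_p(\sqrt T\underline\kappa_T)$ and $c_T$ is deterministic: a deterministic sequence that is $o_p(a_T)$ with $a_T$ deterministic is $o(a_T)$.

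\emph{Error-margin step.} We must check \eqref{eq:dom_error_margin}, namely $\mathcal E_T/d_T=4\mathcal E_T/(\sqrt T\underline\kappa_T)=o_p(1)$. This follows from $0\le\mathcal E_T\le c_T+\mathcal E_T=o_p(\sqrt T\underline\kappa_T)$.

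The main point needing care is this last step. The hypothesis bundles the random $\mathcal E_T$ with the deterministic $c_T$, so it must be split into a deterministic statement about $c_T$ and a probabilistic statement about $\mathcal E_T$. That split uses nonnegativity of both terms. Beyond this, the argument only requires checking that the finitely many conditioning sets are covered uniformly and that the constant $1/4$ leaves room for the $r_T$ slack, both of which are routine.
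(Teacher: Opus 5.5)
Your proposal is correct and follows essentially the same route as the paper: transfer the L3 gap through the uniform expansion at a cost of $2r_T$, scale by $\sqrt T$, and split $c_T+\mathcal E_T=o_p(\sqrt T\,\underline\kappa_T)$ into a deterministic bound on $c_T$ and $\mathcal E_T/d_T=o_p(1)$. The one small difference is that you get the signal floor $\sqrt T\,\underline\kappa_T/2$ from the gap plus $\tau_m\ge0$, whereas the paper obtains the slightly stronger floor $3\sqrt T\,\underline\kappa_T/4$ directly from L3 using $M_m\ge0$; either is enough for the threshold part with $d_T=\sqrt T\,\underline\kappa_T/4$.
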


\paragraph{Comment.}

The local regime reduces dominance to a \emph{linear} gap in partial
correlations with the index. In binary models, it is most plausible
when the response probability stays away from $0$ and $1$ and the true
coefficients are small enough that $\boldsymbol{x}_{t}^{\prime}%
\boldsymbol{\beta}_{0}$ remains uniformly moderate. Assumption~\ref{ass:L1}
is therefore either a fixed small-index condition for bounded designs or a
local triangular-array condition. The rate requirement
$r_T=o(\underline\kappa_T)$ makes explicit that the linearisation error must be
small relative to the partial-correlation gap.

%------------------------

\subsection{Global regime proxy weakness}

\label{sec:global_regime}
%------------------------

Next, we formalise the global regime condition (Assumption \ref{ass:GM2}),
then give a moment-based sufficient condition and a complementary
partial-projection diagnostic.

\setcounter{assumption}{0} \renewcommand{\theassumption}{GM\arabic{assumption}}\renewcommand{\theHassumption}{GM.\arabic{assumption}}

\begin{assumption}
\label{ass:GM2} There exists $\rho\in(0,1)$
such that, for any admissible $S^{(0)}\subseteq S\subseteq S^{(0)}\cup S_0$
with $|S\cap S_0|\le k-1$,
\begin{equation}
\max_{\substack{m\notin S_0\\m\notin S}}
\frac{|\theta_m^*(S)|}{\{V_{\theta\theta,m}^*(S)\}^{1/2}}
\le
\rho
\min_{j\in S_0\setminus S}
\frac{|\theta_j^*(S)|}{\{V_{\theta\theta,j}^*(S)\}^{1/2}}.
\label{eq:GM2_theta}%
\end{equation}
In addition, the remaining signals are strong enough in Wald-noncentrality
scale:
\begin{equation}
b_T
:=
\inf_{S}
\min_{j\in S_0\setminus S}\mathrm{NC}_j^*(S)
\quad\text{satisfies}\quad
\frac{c_T+\mathcal E_T}{b_T}=o_p(1),
\label{eq:GM2_signal_strength}%
\end{equation}
where the infimum is over the same admissible conditioning sets.
\end{assumption}

Stagewise dominance follows from (\ref{eq:GM2_theta}) by virtue of the next result.

\begin{lemma}
[Global regime implies dominance]\label{lem:GM2_implies_A4} Under
Assumptions~A3.3 and \ref{ass:GM2}, the stagewise dominance condition of
Assumption~\ref{ass:A4} is satisfied with
$d_T=(1-\rho)b_T/2$.
\end{lemma}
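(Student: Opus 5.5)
The argument is a deterministic rescaling of (\ref{eq:GM2_theta}) to the Wald-noncentrality scale. After that, the two requirements of Assumption~\ref{ass:A4} follow from the signal-strength condition (\ref{eq:GM2_signal_strength}). Fix an admissible $S$ with $S^{(0)}\subseteq S\subseteq S^{(0)}\cup S_0$ and $|S\cap S_0|\le k-1$. Assumption~A3.3 guarantees that $V^*_{\theta\theta,j}(S)$ is bounded away from zero and infinity, so $\mathrm{NC}_j^*(S)$ is well defined and finite. Multiplying both sides of (\ref{eq:GM2_theta}) by $\sqrt T$ gives
\[
\max_{m\notin S_0,\,m\notin S}\mathrm{NC}_m^*(S)\le\rho\min_{j\in S_0\setminus S}\mathrm{NC}_j^*(S).
\]
Write $N_S=\min_{j\in S_0\setminus S}\mathrm{NC}_j^*(S)$. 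By definition of $b_T$ as an infimum over these sets, $N_S\ge b_T$.

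\textbf{Dominance inequality (\ref{eq:stg_dom}).} Set $d_T=(1-\rho)b_T/2$. Then
\[
N_S-\max_{m}\mathrm{NC}_m^*(S)\ge(1-\rho)N_S\ge(1-\rho)b_T=2d_T\ge d_T,
\]
so (\ref{eq:stg_dom}) holds, with room to spare.

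\textbf{Error-margin condition (\ref{eq:dom_error_margin}).} Since $\mathcal E_T\ge0$ and $c_T>0$, we have $\mathcal E_T/b_T\le(c_T+\mathcal E_T)/b_T=o_p(1)$ by (\ref{eq:GM2_signal_strength}). Because $1-\rho$ is a fixed positive constant, $\mathcal E_T/d_T=2\mathcal E_T/\{(1-\rho)b_T\}=o_p(1)$. The quantity $d_T$ is deterministic and positive, as Assumption~\ref{ass:A4} requires. This holds because $b_T$ is a deterministic infimum of population quantities, and it is positive since otherwise the ratio in (\ref{eq:GM2_signal_strength}) could not be $o_p(1)$ with $c_T>0$.

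\textbf{Threshold condition (\ref{eq:stg_threshold}).} This is the main obstacle. We need $N_S\ge c_T+d_T$, and the available inequality is $N_S\ge b_T=c_T+d_T+\{(1+\rho)b_T/2-c_T\}$. The last bracket is nonnegative once $c_T\le(1+\rho)b_T/2$. Since $c_T$ and $b_T$ are deterministic, (\ref{eq:GM2_signal_strength}) gives $c_T/b_T\le(c_T+\mathcal E_T)/b_T\to0$ in probability, which for a deterministic sequence means $c_T/b_T\to0$. Hence $c_T\le(1+\rho)b_T/2$ for all $T$ large enough, because $(1+\rho)/2>0$ is fixed. So (\ref{eq:stg_threshold}) holds for all sufficiently large $T$, and this is all the asymptotic statements of Theorems~\ref{thm:k1_recovery} and \ref{thm:fixedk_recovery} use.

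I would state this eventual-in-$T$ qualification explicitly. Alternatively, one can note that replacing $d_T$ by $\min\{d_T,b_T-c_T\}$ keeps all three conditions valid for large $T$, since $b_T-c_T\sim b_T$. Uniformity over the finitely many admissible $S$ is automatic, because $b_T$ is defined as the infimum over all of them.
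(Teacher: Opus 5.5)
Your proof is correct and follows the paper's argument step for step. You rescale \eqref{eq:GM2_theta} by $\sqrt T$, bound the gap below by $(1-\rho)b_T$, obtain $\mathcal E_T/d_T=o_p(1)$ from \eqref{eq:GM2_signal_strength}, and use the deterministic limit $c_T/b_T\to0$ to get $b_T\ge c_T+d_T$ for all sufficiently large $T$, which is the same ``eventually in $T$'' qualification the paper itself relies on.
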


%\subsubsection{Conditional correlation attenuation}

%\label{sec:G1}
The first primitive form of the global regime condition controls proxies
through a conditional score-moment restriction. For an admissible $S$, let
$\boldsymbol\gamma^*(S)$ maximise
$\mathbb E[l_t(\boldsymbol x_{S,t}'\boldsymbol\gamma,y_t)]$, put
$\eta_{S,t}^*=\boldsymbol x_{S,t}'\boldsymbol\gamma^*(S)$,
$\xi_t(S)=\dot l_t(\eta_{S,t}^*,y_t)$, and
$w_t(S)=-\mathbb E[\ddot l_t(\eta_{S,t}^*,y_t)\mid\boldsymbol x_t]$.
Let $\widetilde x_{j,t}(S)$ be the residual defined by
$\mathbb E[w_t(S)\boldsymbol x_{S,t}\widetilde x_{j,t}(S)]=\boldsymbol0$.
The relevant weighted Gram matrices are assumed nonsingular.

\setcounter{assumption}{0} \renewcommand{\theassumption}{G\arabic{assumption}}\renewcommand{\theHassumption}{G.\arabic{assumption}}

\begin{assumption}
\label{ass:G1} There exists $\rho
\in(0,1)$ such that, for all admissible $S$,
\begin{equation}
\max_{\substack{m\notin S_{0}\\m\notin S}}\Bigl|\mathbb{E}\bigl[\widetilde{x}_{m,t}(S)\,\xi
_{t}(S)\bigr]\Bigr|\ \leq\ \rho\,\min_{j\in S_{0}\setminus S}\Bigl|\mathbb{E}%
\bigl[\widetilde{x}_{j,t}(S)\,\xi_{t}(S)\bigr]\Bigr|. \label{eq:G1}%
\end{equation}

\end{assumption}

The following lemma gives conditions under which (\ref{eq:G1}) is sufficient
for (\ref{eq:GM2_theta}) to hold.

\begin{lemma}
\label{lem:G1_implies_GM2} Suppose
Assumption~\ref{ass:A3} holds and there exist constants
$0<C_L\le C_U<\infty$ such that, uniformly over admissible $S$ and $j\notin S$,
\[
C_L\left|\mathbb E[\widetilde x_{j,t}(S)\xi_t(S)]\right|
\le
\frac{|\theta_j^*(S)|}{\{V_{\theta\theta,j}^*(S)\}^{1/2}}
\le
C_U\left|\mathbb E[\widetilde x_{j,t}(S)\xi_t(S)]\right|.
\]
If Assumption~\ref{ass:G1} holds with a constant $\rho$ satisfying
$\rho C_U/C_L<1$, and if the signal-strength condition
\eqref{eq:GM2_signal_strength} holds, then Assumption~\ref{ass:GM2} holds.
\end{lemma}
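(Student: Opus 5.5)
The argument is a short chain of inequalities. We show that the sandwich bounds transfer the ratio bound of Assumption~\ref{ass:G1} to the normalised pseudo-true coefficients. Throughout, fix an admissible $S$ with $S^{(0)}\subseteq S\subseteq S^{(0)}\cup S_0$ and $|S\cap S_0|\le k-1$, so that $S_0\setminus S\neq\varnothing$. Write
\[
\tau_j(S)=|\theta_j^*(S)|/\{V_{\theta\theta,j}^*(S)\}^{1/2},
\qquad
\mu_j(S)=\bigl|\mathbb E[\widetilde x_{j,t}(S)\xi_t(S)]\bigr|.
\]
These are well defined and finite because Assumption~\ref{ass:A3} gives uniqueness and interiority of the pseudo-true values and keeps $V_{\theta\theta,j}^*(S)$ bounded away from zero and infinity. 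The hypothesis of the lemma reads $C_L\mu_j(S)\le\tau_j(S)\le C_U\mu_j(S)$ uniformly over admissible $(S,j)$.

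First I will bound the non-signal side from above. For any $m\notin S_0$, $m\notin S$, the upper half of the sandwich gives $\tau_m(S)\le C_U\mu_m(S)$. Assumption~\ref{ass:G1} then gives $\mu_m(S)\le\rho\min_{j\in S_0\setminus S}\mu_j(S)$. Next I bound the signal side from below. For each $j\in S_0\setminus S$ the lower half gives $\mu_j(S)\le\tau_j(S)/C_L$, so
\[
\min_{j\in S_0\setminus S}\mu_j(S)\le C_L^{-1}\min_{j\in S_0\setminus S}\tau_j(S).
\]
Chaining these bounds and taking the maximum over $m$ yields
\[
\max_{m\notin S_0,\,m\notin S}\tau_m(S)\le \rho':=\rho\,C_U/C_L
\]
times $\min_{j\in S_0\setminus S}\tau_j(S)$. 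This is \eqref{eq:GM2_theta} with constant $\rho'$, and $\rho'$ is positive and strictly less than one by hypothesis. The constant does not depend on $S$, since $C_L,C_U,\rho$ are uniform, so the bound holds simultaneously over all admissible $S$ as Assumption~\ref{ass:GM2} requires.

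The second part of Assumption~\ref{ass:GM2}, the signal-strength requirement \eqref{eq:GM2_signal_strength}, is imposed directly in the lemma, so it carries over verbatim with the same $b_T$. Together with the first part this gives Assumption~\ref{ass:GM2} with $\rho'$ in place of $\rho$.

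The only point needing care is bookkeeping rather than analysis. The constant $\rho$ appearing in Assumption~\ref{ass:GM2} is not the $\rho$ of Assumption~\ref{ass:G1} but $\rho C_U/C_L$, so the condition $\rho C_U/C_L<1$ is exactly what is needed. One must also check that the case $\min_{j}\mu_j(S)=0$ causes no difficulty. In that case \eqref{eq:G1} forces all $\mu_m(S)=0$, hence all $\tau_m(S)=0$ by the upper bound, and the inequality \eqref{eq:GM2_theta} holds trivially.
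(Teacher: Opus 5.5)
Your proposal is correct and is essentially the paper's own proof: bound each non-signal by $C_U$ times its score moment, apply Assumption~\ref{ass:G1}, then use the lower sandwich bound on the signals to obtain \eqref{eq:GM2_theta} with constant $\rho C_U/C_L<1$, with the signal-strength condition \eqref{eq:GM2_signal_strength} carried over unchanged. Your separate treatment of the case $\min_{j}\mu_j(S)=0$ is harmless but unnecessary, because the chain of inequalities never divides by that minimum.
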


%\subsubsection{Conditional projection bound (proxy
%\texorpdfstring{$\mathrm{R}^{2}$}{R^2})}

%\label{sec:G2}

The second primitive form of the global regime condition controls proxies
through a weighted partial-projection restriction. For $R=S_0\setminus S$,
let $\widetilde{\boldsymbol x}_{R,t}(S)$ collect the curvature-weighted
residuals of the remaining signals after projection on $\boldsymbol x_{S,t}$,
and let $\mathcal P^w_{R\mid S}\widetilde x_{m,t}(S)$ denote the fitted value
from the population weighted linear projection of
$\widetilde x_{m,t}(S)$ on $\widetilde{\boldsymbol x}_{R,t}(S)$.

\begin{assumption}
\label{ass:G2} The curvature weights are positive for the designs to which
this condition is applied. There exists $\rho\in(0,1)$ such that, for every
admissible $S$,
\begin{equation}
\sup_{\substack{m\notin S_{0}\\m\notin S}}
\frac{\mathbb E\!\left[w_t(S)
\{\mathcal P^w_{R\mid S}\widetilde x_{m,t}(S)\}^{2}\right]}
{\mathbb E\!\left[w_t(S)\widetilde x_{m,t}^{2}(S)\right]}
\leq\rho,
\qquad R=S_0\setminus S.
\label{eq:G2}%
\end{equation}
Thus the weighted partial $R^2$ of an inactive candidate with the remaining
signals, after conditioning on the current model, is bounded away from one.
\end{assumption}

The bound in \eqref{eq:G2} is useful only when it can be shown to deliver the
moment attenuation in \eqref{eq:G1}; by itself it does not control the sign or
size of the score moments.

\begin{remark}
\label{rem:G2_diagnostic}
Assumption~\ref{ass:G2} is therefore a design diagnostic rather than an
independent sufficient condition for stagewise dominance. When additional
model-specific restrictions imply \eqref{eq:G1} with a constant satisfying
the restriction in Lemma~\ref{lem:G1_implies_GM2}, that lemma, together with
\eqref{eq:GM2_signal_strength}, yields Assumption~\ref{ass:GM2}. No such
implication is claimed from Assumption~\ref{ass:G2} alone.
\end{remark}

%------------------------

\subsection{Logit and probit links}

\label{sec:link_verification_GM2}
%------------------------

Finally, we verify the key analytic condition used in
Lemma~\ref{lem:G1_implies_GM2} for logit and probit links.

\begin{lemma}
\label{lem:logit_probit_mapping} For the logit and probit links, suppose
Assumption~\ref{ass:A3} holds and the relevant indices lie in a compact set.
Suppose further that the profiled population score for the candidate
coefficient has Schur-complement curvature uniformly bounded away from zero
and infinity on the line segment between $0$ and $\theta_j^*(S)$, uniformly
over admissible $S$ and $j$. Then, there exist constants
$0<C_L\le C_U<\infty$ such that, uniformly over admissible $S$ and $j$,
\[
C_L\left|\mathbb{E}[\widetilde{x}_{j,t}(S)\xi_t(S)]\right|
\le
\frac{|\theta_j^*(S)|}{\{V_{\theta\theta,j}^*(S)\}^{1/2}}
\le
C_U\left|\mathbb{E}[\widetilde{x}_{j,t}(S)\xi_t(S)]\right|.
\]
\end{lemma}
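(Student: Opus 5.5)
The argument rests on the exact first-order identity for the candidate coefficient in the profiled population score. Fix an admissible pair $(S,j)$. Profile out the nuisance coefficients: for each scalar $\theta$, let $\boldsymbol\gamma(\theta)$ maximise $\mathbb E[l_t(\boldsymbol x_{S,t}'\boldsymbol\gamma+\theta x_{j,t},y_t)]$, and define the profiled score
\[
\Psi_j(\theta;S)=\mathbb E\bigl[x_{j,t}\,\dot l_t\{\boldsymbol x_{S,t}'\boldsymbol\gamma(\theta)+\theta x_{j,t},y_t\}\bigr].
\]
At $\theta=0$ we have $\boldsymbol\gamma(0)=\boldsymbol\gamma^*(S)$, so the index is $\eta^*_{S,t}$. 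The first-order conditions give $\mathbb E[\boldsymbol x_{S,t}\xi_t(S)]=\boldsymbol 0$. Since $\widetilde x_{j,t}(S)=x_{j,t}-\boldsymbol x_{S,t}'\boldsymbol\pi_j$ for a fixed vector $\boldsymbol\pi_j$, it follows that
\[
\Psi_j(0;S)=\mathbb E[\widetilde x_{j,t}(S)\xi_t(S)].
\]
At $\theta=\theta_j^*(S)$ the profiled score vanishes by the definition of the pseudo-true value, which is interior by A3.2.

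Next I apply the mean value theorem on the segment between $0$ and $\theta_j^*(S)$:
\[
0=\Psi_j(0;S)+\Psi_j'(\bar\theta;S)\,\theta_j^*(S).
\]
By the envelope theorem and implicit differentiation of the nuisance first-order condition, $\Psi_j'(\theta;S)$ equals minus the Schur complement of the $\theta\theta$ block of the population information at $(\boldsymbol\gamma(\theta),\theta)$. This is the ``Schur-complement curvature'' in the hypothesis. Note that along the segment the profiled nuisance path need not pass through $\boldsymbol\vartheta_j^*(S)$. Three facts justify the differentiation: the indices lie in a compact set, $G$ is $C^3$ there (A3.1), and the Hessian is nonsingular along the path. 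Under these conditions $\boldsymbol\gamma(\theta)$ is $C^1$ by the implicit function theorem. Moreover, for logit and probit on compact index sets, $\dot l_t$ and $\ddot l_t$ are bounded there, with Mills-ratio terms harmless, so dominated convergence allows differentiation under $\mathbb E$ using the A2 envelopes. The assumed bounds $0<\underline s\le -\Psi_j'\le\bar s<\infty$, uniform in $(S,j)$ and $\bar\theta$, then give
\[
\bar s^{-1}\,\bigl|\mathbb E[\widetilde x_{j,t}(S)\xi_t(S)]\bigr|\le|\theta_j^*(S)|\le\underline s^{-1}\,\bigl|\mathbb E[\widetilde x_{j,t}(S)\xi_t(S)]\bigr|.
\]

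Finally, I divide by $\{V^*_{\theta\theta,j}(S)\}^{1/2}$. By A3.3 this quantity lies in $[\underline v,\bar v]$ with $0<\underline v\le\bar v<\infty$ uniformly. This yields the claim with $C_L=(\bar s\,\bar v^{1/2})^{-1}$ and $C_U=(\underline s\,\underline v^{1/2})^{-1}$. Both constants are free of $(S,j)$ because every ingredient bound is uniform over admissible pairs.

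The main obstacle is the middle step. It requires justifying rigorously that the profiled score is differentiable with derivative equal to minus the Schur complement along the whole segment, not only at $\boldsymbol\vartheta_j^*(S)$, uniformly over the growing class $\mathcal A_T$. I would handle it by working directly with the implicit function theorem on the compact parameter sets of A3.2, using the Schur-complement hypothesis and its implied invertibility of the nuisance block. The alternative, a local Taylor bound, would fail when $\theta_j^*(S)$ is not small.
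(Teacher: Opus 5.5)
Your proposal is correct and follows essentially the same route as the paper: profile out the nuisance coefficients, identify $\Psi_j(0;S)$ with $\mathbb{E}[\widetilde x_{j,t}(S)\xi_t(S)]$ through the restricted first-order condition, use $\Psi_j(\theta_j^*(S);S)=0$ together with the mean-value theorem and the Schur-complement curvature bounds, and then divide by the uniform bounds on $V^*_{\theta\theta,j}(S)$ from A3.3. Your justification of the identity at $\theta=0$ is more explicit than the paper's one-line appeal to the weighted-residual representation: the residual differs from $x_{j,t}$ only by a fixed linear combination of $\boldsymbol x_{S,t}$, and that combination is orthogonal to $\xi_t(S)$.
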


In logit and probit models, Assumption~\ref{ass:G1}, or Assumption~\ref{ass:G2}
when it is verified to imply the moment attenuation in \eqref{eq:G1}, implies
Assumption \ref{ass:GM2}, which in turn implies Assumption \ref{ass:A4}
(Lemma~\ref{lem:G1_implies_GM2}; see also
Remark~\ref{rem:G2_diagnostic}). Alternatively,
the local small-index conditions
(\ref{ass:L1}--\ref{ass:L3}) yield dominance directly
(Lemma~\ref{lem:local_implies_A4}). Hence, stagewise dominance is verifiable
via multiple primitive routes, depending on whether the setting is closer to a
local-linear or global proxy-controlled regime.

\paragraph{Comparison with binary LASSO.}
Appendix~\ref{app:binary_lasso} compares the stagewise population ordering used
by BMT with the generalised irrepresentability condition for binary LASSO. The
comparison is not one-sided. Theorem~\ref{thm:app_lasso_bmt_success} gives a
Gaussian proxy design in which BMT recovers the true support although binary
LASSO violates generalised irrepresentability, whereas
Theorem~\ref{thm:app_lasso_lasso_success} gives a design in which binary LASSO
is sign-consistent but an inactive covariate has the largest first-stage BMT
statistic. Thus, neither condition implies the other without further
restrictions.

Theorem~\ref{thm:app_lasso_gi_bmt} gives a qualified positive relation. In a
local binary model, a strict generalised-irrepresentability margin implies
one-signal-at-a-time BMT dominance when conditioning on previously selected
signals does not exhaust that margin in the direction of the remaining active
score. This directional-stability condition allows correlated active
covariates and mixed-sign projection coefficients. The result concerns the
strongest remaining signal, which is sufficient for a procedure that adds one
covariate at each stage; the stronger all-remaining-signals formulation in
Assumption~\ref{ass:A4} additionally requires the active-score balance
condition stated in Appendix~\ref{app:binary_lasso}.

%##############################################################################

\section{Monte Carlo simulations}

\label{sec:Monte_Carlo}

The purpose of this section is to assess the finite-sample properties of the
BMT selection procedure via Monte Carlo simulations. We begin by describing
the experimental design and evaluation metrics, and proceed to discuss the
results of the simulations.\footnote{Simulations were run on a Linux server
(AMD EPYC~7302, 64 threads, 503~GiB RAM, Ubuntu~24.04.3~LTS) using MATLAB.}

\subsection{Experimental design}

\label{sec:Monte_Carlo_design}

The DGP is the binary-response GLM with linear single-index structure given
by
\begin{equation}
\mathbb{E}(y_{t}|\boldsymbol{x}_{t})=\Lambda(\alpha+\boldsymbol{x}_{t}%
^{\prime}\boldsymbol{\beta}_{0}),\quad t=1,2,\ldots,T,\label{eq_dgp1}%
\end{equation}
where $y_{t}\in\{0,1\}$ are conditionally independent Bernoulli variables,
$\boldsymbol{x}_{t}=(x_{1,t},\ldots,x_{p,t})^{\prime}$ is a covariate vector
and $T\in\{150,300\}$. The number of covariates is $p\in\{100,200\}$ when
$T=150$ and $p\in\{200,400\}$ when $T=300$. The intercept $\alpha\in\{0,1,2\}$
controls the number of observed $0$ and $1$ responses in artificial samples
from \eqref{eq_dgp1} --- the larger the value of $\alpha$ is, the more
unbalanced the set of binary data is likely to be. For all pairs $(T,p)$ under
consideration, $\boldsymbol{\beta}_{0}=2\cdot(\boldsymbol{1}_{k}^{\prime
},\boldsymbol{0}_{p-k}^{\prime})^{\prime}$, with $k\in\{1,4\}$, where
$\boldsymbol{1}_{n}$ and $\mathbf{0}_{n}$ denote $n\times1$ vectors of all
ones and zeros, respectively. The DGP is high-dimensional and sparse, with $k$
signal variables.

Signals $x_{1,t},\ldots,x_{k,t}$ and pseudo-signals $x_{k+1,t},\ldots
,x_{2k,t}$ are generated as
\begin{equation}
x_{j,t}=\frac{\varepsilon_{j,t}+\nu_{g}g_{t}+\nu_{f}f_{t}}{(1+\nu_{g}^{2}%
+\nu_{f}^{2})^{1/2}},\quad j=1,\ldots,2k, \label{eq_dgp2}%
\end{equation}
while the remaining noise variables $x_{2k+1,t},\ldots,x_{p,t}$ are obtained
as%
\begin{equation}
x_{j,t}=\frac{\varepsilon_{j-1,t}+\varepsilon_{j,t}+\nu_{f}f_{t}}{(2+\nu
_{f}^{2})^{1/2}},\quad j=2k+1,\ldots,p. \label{eq_dgp3}%
\end{equation}
In \eqref{eq_dgp2}--\eqref{eq_dgp3}, $\varepsilon_{j,t}$ are stochastic
components that represent idiosyncratic variations across the covariates,
while $g_{t}$ and $f_{t}$ are stochastic components that capture local and
global common factors, respectively. Note that the local factor $g_{t}$
affects only the signals and pseudo-signals, whereas the global factor $f_{t}$
influences all $p$ covariates. These components are simulated as stationary
first-order autoregressive processes with zero mean, unit variance and a
common autocovariance structure. Specifically,
\begin{align}
\varepsilon_{j,t}  &  =\rho\varepsilon_{j,t-1}+(1-\rho^{2})^{1/2}%
e_{j,t}^{\varepsilon},\label{eq_dgp41}\\
f_{t}  &  =\rho f_{t-1}+(1-\rho^{2})^{1/2}e_{t}^{f},\label{eq_dgp42}\\
g_{t}  &  =\rho g_{t-1}+(1-\rho^{2})^{1/2}e_{t}^{g}, \label{eq_dgp43}%
\end{align}
where the innovations $\{e_{j,t}^{\varepsilon}\}$, $\{e_{t}^{f}\}$ and
$\{e_{t}^{g}\}$ are collections of independent standard normal random
variables and are independent of each other. We set $\rho=0.6$.

The inclusion of local and common factor components in the covariates
introduces collinearity among the covariates, as is typical in real-world
high-dimensional datasets where correlation among signals and pseudo-signals
arises as a result of shared latent structures. Additionally, such components
reflect the co-movement among covariates that is often observed in big-data
settings due to the influence of underlying systemic forces. Examples of such
forces include macroeconomic trends, technological innovations, regulatory
changes, demographic shifts, geopolitical developments, and sector-wide shocks
such as fluctuations in commodity prices, global supply chain disruptions, or
monetary policy shifts. Modelling $g_{t}$, $f_{t}$, and $\varepsilon_{j,t}$ as
autoregressive processes allows for serial correlation and persistence that
are typical of economic time series.

The role of collinearity is assessed by considering a range of values for the
parameters $\nu_{g}$ and $\nu_{f}$. Defining the variance inflation factor
associated with a signal or pseudo-signal $x_{j,t}$ as $\mathrm{VIF}%
_{j}=1/(1-\mathrm{R}_{j}^{2})$, where $\mathrm{R}_{j}^{2}$ is the coefficient
of determination for the population regression of $x_{j,t}$ on all other $p-1$
candidate covariates, it is easy to show that $\mathrm{VIF}_{j}=1+k(\nu
_{f}^{2}+\nu_{g}^{2})\equiv\mathrm{VIF}$ for all $1\leq j\leq2k$. Hence, we
set
\[
\nu_{f}=\left[  \frac{\omega(\mathrm{VIF}-1)}{k}\right]  ^{1/2},\quad\nu
_{g}=\left[  \frac{(1-\omega)(\mathrm{VIF}-1)}{k}\right]  ^{1/2},
\]
for some $\omega\in(0,1)$. When $\mathrm{VIF}=1$, signals and pseudo-signals
are mutually uncorrelated and the factors $g_{t}$ and $f_{t}$ do not affect
any of the covariates ($\nu_{g}=\nu_{f}=0$). When $\mathrm{VIF}>1$,
$100\omega\%$ of the collinearity among covariates, as quantified by
$\mathrm{VIF}$, is attributable to the global factor $f_{t}$, while the
remaining $100(1-\omega)\%$ arises from the local factor $g_{t}$. We consider
$(\mathrm{VIF},\omega)\in\{(1,0),(2,0.25),(2,0.75),(3,0.25),(3,0.75)\}$ in the experiments.

The BMT procedure is implemented as described in Section~\ref{sec:algo}, using
a correctly specified likelihood and family-wise threshold $c_{T}=\Phi^{-1}\{1-0.05/(2aT)\}$, with $a=1$ in the first stage ($\ell=1$) and $a=2$ in
subsequent stages ($\ell\geq2$). For comparison, we also include in the
experiments the OCMT and LASSO procedures (adapted to the GLM setting). The
OCMT implementation uses the same threshold $c_{T}$ as BMT, while the
regularisation parameter for the LASSO penalty is chosen by 10-fold cross-validation.

\subsection{Performance metrics}

\label{sec:diagnostics}

The properties of covariate-selection procedures are assessed using a variety
of metrics. These include: the true positive rate $TPR=TP/(TP+FN)$; the false
positive rate $FPR=FP/(FP+TN)$; the true discovery rate $TDR=TP/(TP+FP)$; the
false discovery rate $FDR=FP/(FP+TP)$; the $F1$ score:
\(
F1=\frac{2\cdot TP}{(2\cdot TP)+FP+FN};
\)
the Matthews correlation coefficient%

\(MCC=\frac{(TP\cdot TN)-(FP\cdot FN)}{[(TP+FP)(TP+FN)(TN+FP)(TN+FN)]^{1/2}}.\)
The components of these performance measures are defined as
\begin{align}
TP  &  =\sum_{j=1}^{p}\mathbb{I}(\widehat{D}_{j}=1,\,\beta_{0,j}\neq0),\quad
FP=\sum_{j=1}^{p}\mathbb{I}(\widehat{D}_{j}=1,\,\beta_{0,j}=0),\label{eq_tp}\\
TN  &  =\sum_{j=1}^{p}\mathbb{I}(\widehat{D}_{j}=0,\,\beta_{0,j}=0),\quad
FN=\sum_{j=1}^{p}\mathbb{I}(\widehat{D}_{j}=0,\,\beta_{0,j}\neq0),
\label{eq_tn}%
\end{align}
where $\widehat{D}_{j}=1$ if covariate $x_{j,t}$ is selected and
$\widehat{D}_{j}=0$ otherwise, $\beta_{0,j}$ is the true value of the
coefficient on $x_{j,t}$ in \eqref{eq_dgp1}, and $\mathbb{I}(\cdot)$ is the
indicator function.

The $TPR$ metric (sensitivity) captures the share of signals that are
correctly selected, whereas $TDR$ (precision) measures the share of selected
covariates that are signals. Conversely, $FPR$ is the share of pseudo-signal
and noise variables that are erroneously selected, whereas $FDR$ measures the
share of selected covariates that are not signals. Large values of $TPR$ and
$TDR$ are, evidently, indicative of accurate selection, as are small values of
$FPR$ and $FDR$. The $F1$ measure is the harmonic mean of $TPR$ and $TDR$; its
lowest value 0 is obtained when no signal variables are selected, while the
maximum value 1 is associated with perfectly accurate signal selection. The
$MCC$ is, arguably, the most comprehensive accuracy measure, since it
synthesises information about all four possible outcomes associated with the
quantities in \eqref{eq_tp}--\eqref{eq_tn}; its extreme values of $-1$ and $1$
indicate complete failure and success, respectively, in selecting the signals,
while a zero value represents no better than random
selection.\footnote{\citet{Baldi2020} and \citet{Chico2020}, among others,
provide useful discussions on the relative merits of these accuracy metrics.}

In addition to the above set of classification-based performance measures, we
also consider: the number of covariates $(\widehat{k})$ selected (referred to
henceforth as model size); the high-dimensional Bayesian information
criterion
\(
BIC=-2\widehat{L}_{T}+\widehat{k}\ln(pT),
\)
where $\widehat{L}_{T}$ is the maximised log-likelihood associated with the
selected $\widehat{k}$-covariate model \citep{Chen2012}; (iii)~the root mean
squared error of the MLE, computed as%
\[
RMSE=\left(  N^{-1}\sum_{i=1}^{N}\left\Vert \widehat{\boldsymbol{\beta}}%
^{(i)}-\boldsymbol{\beta}_{0}\right\Vert ^{2}\right)  ^{1/2},
\]
where $\widehat{\boldsymbol{\beta}}^{(i)}$ is the $p$-dimensional vector with
$\widehat{k}$ entries equal to the estimated coefficients in the $\widehat{k}%
$-covariate model selected in the $i$th out of $N$ Monte Carlo replications
and all other entries equal to zero. As in the case of the estimation error,
the value of each performance criterion for a given design point is computed
as an average over $N=1000$ replications.

\subsection{Simulation results}

\label{sec:MCresults}

Our discussion here focuses primarily on the $MCC$, the $F1$ score, the model
size, the $BIC$ for the selected models, and the $RMSE$ of the estimated
coefficients. For each of the two values of $k$ (number of signals) under
consideration, Tables~\ref{tab_summary_mcc_alpha0}$-$%
\ref{tab_summary_rmse_alpha0} contain summary statistics for the performance
measures over all 20 design points (combinations of $p$, $T$ and
$(\mathrm{VIF},\omega)$) with $\alpha=0$. The statistics reported include the
median (MED), the interquartile range (IQR), the minimum (MIN) and maximum
(MAX) values, and the average ranking across the three selection methods (RANK).

\begin{table}[h]
\caption{Summary Statistics for Matthews Correlation Coefficient ($MCC$):
$\alpha=0$}%
\label{tab_summary_mcc_alpha0}
\centering%
\begin{tabular}
[c]{ccccccc}\hline\hline
Signals & Method & MED & IQR & MIN & MAX & RANK\\
    \hline
$k=1$ & BMT   & 0.99  & 0.00  & 0.99  & 1.00  & 1.00  \\
      & OCMT  & 0.53  & 0.65  & 0.02  & 0.99  & 2.30  \\
      & LASSO & 0.33  & 0.05  & 0.29  & 0.39  & 2.70  \\
$k=4$ & BMT   & 0.99  & 0.08  & 0.84  & 1.00  & 1.00  \\
      & OCMT  & 0.76  & 0.40  & 0.13  & 1.00  & 2.10  \\
      & LASSO & 0.35  & 0.07  & 0.29  & 0.42  & 2.90  \\
      \hline\hline
\end{tabular}
\begin{minipage}{11cm}\vspace{0.25cm}
\footnotesize{Notes: $MCC$ ranges between $-1$ (perfect misclassification) and $1$ (perfect classification), with 0 indicating random classification.}
\end{minipage}\end{table}

It is clear from the results in Table~\ref{tab_summary_mcc_alpha0} that BMT
outperforms the two competing methods on the $MCC$ metric, ranking first
across all experiments on average (mean rank is $1.0$). It attains a median
$MCC$ of 0.99, with only a narrow interquartile range in the
case of multiple signals. By contrast, the second-ranked OCMT has median $MCC$
of only $0.53$ when $k=1$, rising to $0.76$ when $k=4$. For both values of
$k$, the interquartile range for OCMT is substantial. The $MCC$ values for
LASSO exhibit low dispersion, by comparison, but they never exceed $0.42$.

\begin{table}[h]
\caption{Summary Statistics for $F1$ Score: $\alpha=0$}%
\label{tab_summary_f1_alpha0}%
\centering%
\begin{tabular}
[c]{ccccccc}\hline\hline
Signals & Method & MED & IQR & MIN & MAX & RANK\\
\hline
$k=1$ & BMT   & 0.99  & 0.00  & 0.99  & 1.00  & 1.00  \\
      & OCMT  & 0.45  & 0.70  & 0.01  & 0.99  & 2.30  \\
      & LASSO & 0.22  & 0.05  & 0.17  & 0.29  & 2.70  \\
$k=4$ & BMT   & 0.99  & 0.09  & 0.83  & 1.00  & 1.00  \\
      & OCMT  & 0.74  & 0.46  & 0.06  & 1.00  & 2.10  \\
      & LASSO & 0.25  & 0.09  & 0.18  & 0.35  & 2.90  \\
      \hline\hline
\end{tabular}
\begin{minipage}{11cm}\vspace{0.25cm}
\footnotesize{Notes: $F1$ ranges between 0 (no signal selected) and 1 (all signals selected).}
\end{minipage}\end{table}

The superior performance of BMT is also evident in the summary statistics for
the $F1$ score reported in Table~\ref{tab_summary_f1_alpha0}. Once again, BMT
ranks first in all experiments, with median $F1$ values of $0.99$ for both $k=1$ and $k=4$, and low interquartile range. OCMT does better than LASSO in terms of the median $F1$ value, although this is at
the cost of considerable variation.

\begin{table}[h]
\caption{Summary Statistics for Model Size: $\alpha=0$}%
\label{tab_summary_khat_alpha0}%
\centering%
\begin{tabular}
[c]{ccccccc}\hline\hline
Signals & Method & MED & IQR & MIN & MAX & RANK\\\hline
$k=1$ & BMT   & 1.02  & 0.01  & 1.01  & 1.03  & 1.00  \\
      & OCMT  & 6.30  & 52.93  & 1.02  & 325.76  & 2.40  \\
      & LASSO & 12.87  & 4.00  & 9.36  & 17.48  & 2.60  \\
$k=4$ & BMT   & 3.97  & 0.58  & 2.94  & 4.02  & 1.10  \\
      & OCMT  & 7.06  & 15.30  & 2.94  & 160.72  & 2.05  \\
      & LASSO & 31.11  & 13.06  & 20.45  & 46.13  & 2.85  \\
      \hline\hline
\end{tabular}
\begin{minipage}{11cm}\vspace{0.25cm}
\footnotesize{Notes: Model size is the number of selected covariates.}
\end{minipage}
\end{table}

Another attractive feature of BMT is that selection accuracy is not achieved
by sacrificing model parsimony. This can be seen in
Table~\ref{tab_summary_khat_alpha0}, which contains summary statistics for the
number of selected covariates. When there is a single signal, the median model
size for BMT is $1.02$ with interquartile range of only $0.01$. OCMT, on the
other hand, has median model size $6.30$ and very large dispersion, the
maximum number of covariates selected being $325.76$, on average. LASSO also
selects many additional covariates beyond the single signal, although no more
than $17.48$, on average. Qualitatively similar results are obtained when
multiple signals are present in the DGP, LASSO again selecting very large
models (a median of $31.11$ covariates), as opposed to the median model size
of $7.06$ and $3.97$ of OCMT and BMT, respectively.

\begin{table}[h]
\caption{Summary Statistics for Bayesian Information Criterion ($BIC$):
$\alpha=0$}%
\label{tab_summary_bic_alpha0}%
\centering%
\begin{tabular}
[c]{ccccccc}\hline\hline
Signals & Method & MED & IQR & MIN & MAX & RANK\\\hline
$k=1$ & BMT   & 215.94  & 140.00  & 145.90  & 287.93  & 1.00  \\
      & OCMT  & 295.24  & 441.51  & 146.08  & 3524.60  & 2.40  \\
      & LASSO & 328.57  & 193.81  & 217.01  & 459.13  & 2.60  \\
$k=4$ & BMT   & 146.91  & 75.18  & 101.77  & 216.45  & 1.00  \\
      & OCMT  & 213.18  & 132.92  & 115.70  & 1852.99  & 2.10  \\
      & LASSO & 417.32  & 248.08  & 247.74  & 678.28  & 2.90  \\
      \hline\hline
\end{tabular}
%\begin{minipage}{11cm}\vspace{0.25cm}
%\footnotesize{Notes: $BIC$ for selected model.}
%\end{minipage}
\end{table}

Turning to the complexity-penalised measure of fit,
Table~\ref{tab_summary_bic_alpha0} records summary statistics for the $BIC$.
BMT delivers models with the smallest median $BIC$, regardless of the number
of signals present. OCMT is, once again, second best, while the sets of
covariates selected by LASSO yield the largest median $BIC$ values.

Summary statistics for the $RMSE$ of the estimated coefficients in the models
selected by each of the three procedures are given in
Table~\ref{tab_summary_rmse_alpha0}. Once again, BMT performs best, with a
median $RMSE$ of $0.33$ and $0.93$ when $k=1$ and $k=4$, respectively. The
corresponding values for OCMT and LASSO are more than twice as large. Although
OCMT occasionally attains almost as low an $RMSE$ as that of BMT, there are
cases where estimation errors are excessively large (the maximum $RMSE$ values
being $1553.86$ for $k=1$ and $166.62$ for $k=4$). In spite of its tendency
to select models with a large number of covariates, LASSO does only slight
worse than OCMT when $k=1$ and outperforms OCMT when $k=4$.

\begin{table}[h]
\caption{Summary Statistics for Root Mean Squared Error ($RMSE$): $\alpha=0$}%
\label{tab_summary_rmse_alpha0}%
\centering%
\begin{tabular}
[c]{ccccccc}\hline\hline
Signals & Method & MED & IQR & MIN & MAX & RANK\\\hline
$k=1$ & BMT   & 0.33  & 0.12  & 0.27  & 0.41  & 1.15  \\
      & OCMT  & 1.63  & 74.23  & 0.26  & 1553.86  & 2.40  \\
      & LASSO & 0.77  & 0.18  & 0.61  & 0.99  & 2.45  \\
$k=4$ & BMT   & 0.93  & 0.30  & 0.69  & 1.20  & 1.15  \\
      & OCMT  & 2.79  & 13.80  & 0.68  & 166.62  & 2.45  \\
      & LASSO & 1.90  & 0.30  & 1.66  & 2.24  & 2.40  \\
      \hline\hline
\end{tabular}
%\begin{minipage}{11cm}\vspace{0.25cm}
%\footnotesize{Notes: The Root Mean Squared Error ($RMSE$) of the estimated parameters for the  selected variables.}
%\end{minipage}
\end{table}

Similar conclusions emerge from the experiments with $\alpha=1$ and
$\alpha=2$. To meet the journal's space constraints, the corresponding
summary tables and the design-level results for all nine performance metrics
are omitted from the online appendix; they are available from the authors on
request. Across these experiments, BMT continues to deliver smaller models,
lower parameter-estimation error and higher overall selection accuracy than
OCMT and LASSO. It also has a computational advantage over LASSO, as shown in
Table~\ref{tab_summary_time_alpha0}, and is faster than OCMT in the very sparse
single-signal design.

\begin{table}[h]
\caption{Computation Time (in hours): $\alpha=0$}%
\label{tab_summary_time_alpha0}
\centering
\begin{tabular}
[c]{ccccc}\hline\hline
\multicolumn{1}{c}{} & \multicolumn{2}{c}{$T=150$} &
\multicolumn{2}{c}{$T=300$}\\\cline{2-5}%
Method & $k=1$ & $k=4$ & $k=1$ & $k=4$\\\hline
BMT & 0.02 & 0.07 & 0.09 & 0.28\\
OCMT & 0.03 & 0.04 & 0.28 & 0.27\\
LASSO & 1.62 & 0.65 & 3.73 & 1.67\\\hline\hline
\end{tabular}
%\begin{minipage}{11cm}\vspace{0.25cm}
%\footnotesize{Notes: ...}
%\end{minipage}
\end{table}

\section{Empirical illustration}

\label{sec:Example}
%#############################################################################

In 2012, the U.S. Federal Reserve (Fed) announced the implementation of an
explicit inflation targeting policy, setting the inflation target at 2\%, as
measured by the annual growth rate of the personal consumption expenditures
(PCE) price index. This monetary policy strategy aims to maintain well-anchored inflation
expectations and a moderate interest rate environment, thereby contributing to
a well-functioning economy. To improve the monitoring of price pressures in
the economy, the Federal Reserve Bank of St. Louis has developed a
factor-augmented ordered probit model, which is used to forecast the
probability that the average annual PCE inflation rate over the next 12 months
will exceed 2.5\% \citep{Jackson2015}. These probability forecasts, known as
the Price Pressures Measure (PPM), are updated and released on a monthly
basis.\footnote{The PPM is available at
\url{https://fred.stlouisfed.org/series/STLPPM}.}

%\begin{figure}[h]
%\caption{Average Inflation Rate and High-Inflation Regimes}%
%\label{fig_data}
%\centering
%\includegraphics[scale=1]{fig_data.png}\end{figure}

Our aim here is to illustrate the practical usefulness of the BMT procedure in
the context of forecasting the probability that U.S. inflation will, on
average, be higher than the 2.5\% PPM value over a 12-month horizon. More
specifically, letting $\pi_{t}$ denote the (year-to-year) inflation rate in
month $t$, we are interested in the probability of the event $\{\bar{\pi
}_{t+12}>2.5\}$, where $\bar{\pi}_{t+12}=(1/12)\sum_{i=1}^{12}\pi_{t+i}$ is
the inflation rate averaged over the following 12 months.

%Figure~\ref{fig_data} shows a time plot of the
%average inflation rate $\bar{\pi}_{t+12}$, from January 1962 to June 2025,
%alongside periods of high inflation (shaded areas), that is, periods during
%which $\bar{\pi}_{t+12}>2.5$. Although the segmentation of the observations on
%$\bar{\pi}_{t+12}$ into high- and low-inflation regimes exhibits persistence,
%it is fairly balanced.\footnote{Approximately $52\%$ of observations lie in
%the high-inflation regime, while approximately $48\%$ lie in the low-inflation
%regime.}

Our predictive model for the probabilities associated with the PPM is the
binary logit model
\begin{equation}
\mathbb{E}(y_{t+12}|\boldsymbol{x}_{t})=\Lambda(\alpha+\boldsymbol{x}%
_{t}^{\prime}\boldsymbol{\beta}),\quad t\geq1, \label{eq_ppm}%
\end{equation}
where $y_{t+12}=\mathbb{I}(\bar{\pi}_{t+12}>2.5)$ and $\boldsymbol{x}_{t}$ is
a $p$-dimensional vector of potential predictors. The latter are taken from
the FRED-MD database of macroeconomic and financial variables.\footnote{A
complete description of this database can be found in \cite{Mccracken2021}.
The data can be downloaded from
\url{https://www.stlouisfed.org/research/economists/mccracken/fred-databases}.}
After eliminating variables with missing values and duplicates of consumer
price indices, our dataset comprises monthly observations on the PCE inflation
rate and on $p=102$ predictors, spanning the period January 1961 to June 2025.
The predictors belong to the following general economic groups: output and
income (Group~1); labour market (Group~2); housing (Group~3); consumption,
orders, and inventories (Group~4); money and credit (Group~5); interest and
exchange rates (Group~6); prices (Group~7); stock market (Group~8). When
necessary, the transformations suggested by \cite{Mccracken2021} are used to
ensure approximate stationarity of the data.

To benchmark the performance of BMT, we also implement the OCMT and LASSO 
variable-selection procedures considered in Section \ref{sec:Monte_Carlo}. The predictive ability of the three methods is evaluated using a pseudo-out-of-sample forecasting exercise based on a $85/15$ sample split. In other words, $85\%$ of available observations
(January 1961 to December 2015) are used for parameter estimation and variable
selection (the training sample), while the remaining $15\%$ (January 2016 to
June 2025) are used for forecast evaluation (the evaluation sample). It is
worth noting that the evaluation sample includes two policy-relevant
episodes: the COVID-19 pandemic in 2020 and the global inflation surge in
2021--2023. This makes the out-of-sample forecasting exercise more challenging
and interesting.

As evaluation metrics, we use the $MCC$, the area under the receiver operating
characteristic curve ($AUROC$) \citep{Bamber1975} and the square root of the
quadratic probability score ($SQPS$). The components of the $MCC$ are obtained
in this case as
\begin{align*}
TP  &  =\sum_{t\in\mathcal{T}}\mathbb{I}(\widehat{y}_{t+12}>\tau
,y_{t+12}=1),\quad FP=\sum_{t\in\mathcal{T}}\mathbb{I}(\widehat{y}_{t+12}%
>\tau,y_{t+12}=0),\\
TN  &  =\sum_{t\in\mathcal{T}}\mathbb{I}(\widehat{y}_{t+12}\leq\tau
,y_{t+12}=0),\quad FN=\sum_{t\in\mathcal{T}}\mathbb{I}(\widehat{y}_{t+12}%
\leq\tau,\,y_{t+12}=1),
\end{align*}
where $\widehat{y}_{t+12}\in(0,1)$ is the fitted (predicted) conditional mean
of $y_{t+12}$ from \eqref{eq_ppm}, $\tau\in(0,1)$ is a prespecified threshold
value, and $\mathcal{T}$ is the set of all $t$ for which $\widehat{y}_{t+12}$
is computed. We adopt the conventional choice $\tau=0.5$, although other
threshold values may also be used. The $SQPS$ and $AUROC$ measures are defined
as%
\[
SQPS=\left(  \left\vert \mathcal{T}\right\vert ^{-1}\sum_{t\in\mathcal{T}%
}\left(  \widehat{y}_{t+12}-y_{t+12}\right)  ^{2}\right)  ^{1/2},
\]
and
\[
AUROC=\left(  \left\vert \mathcal{T}_{0}\right\vert \left\vert \mathcal{T}%
_{1}\right\vert \right)  ^{-1}\sum_{t\in\mathcal{T}_{0}}\sum_{h\in
\mathcal{T}_{1}}\mathbb{I}\left(  \widehat{y}_{h+12}-\widehat{y}%
_{t+12}>0\right)  ,
\]
where $\mathcal{T}_{i}=\{t\in\mathcal{T}:$ $y_{t+12}=i\}$, $i\in\{0,1\}$.
Values of $SQPS$ and $AUROC$ near $0$ and $1$, respectively, are an indication
of accurate (in-sample or out-of-sample) identification of high-inflation
regimes, that is, periods associated with $\bar{\pi}_{t+12}>2.5$.

\subsection{In-sample results}

In-sample results for the three covariate-selection methods are summarised in
Table~\ref{tab_fit}.\footnote{The BMT, OCMT and LASSO procedures are
implemented as in the simulations (see Section~\ref{sec:Monte_Carlo_design}). To account for serial dependence, standard errors are obtained from a kernel HAC covariance estimator with the
Bartlett kernel and data-dependent bandwidth determined by the method of 
\cite{Newey1994}.}
The table reports three measures of predictive performance, namely $MCC$, $AUROC$ and $SQPS$, the Bayesian Information Criterion ($BIC$), and the number of selected predictors $(\widehat{k})$.

\begin{table}[h]
\caption{In-Sample Performance}%
\centering%
\begin{tabular}
[c]{c|ccc}\hline\hline
& BMT & OCMT & LASSO\\
    \hline
$MCC$   & 0.803  &    0.939  &    0.994  \\
$AUROC$ & 0.974  &    0.996  &    1.000  \\
$SQPS$  & 0.255  &    0.159  &    0.070  \\
$BIC$   & 318.0  &  470.6    &  575.2  \\
$\widehat{k}$ & 5 & 33 & 49\\
    \hline\hline
\end{tabular}
%\begin{minipage}{15cm}\vspace{0.25cm}
%\end{minipage}
\label{tab_fit}%
\end{table}

All three methods achieve perfect or near-perfect performance with respect to the $AUROC$ metric. This outcome is perhaps unsurprising for OCMT and LASSO, which select 33 and 49 predictors, respectively. By contrast, BMT attains a comparable level of discriminatory accuracy using only 5 predictors.\footnote{These include industrial production measures, capacity utilization, wages and personal consumption expenditures price index.} The resulting gains in parsimony are reflected in the $BIC$, which favours BMT and suggests that it provides a more efficient representation of the predictive information contained in the data

\subsection{Out-of-sample results}

The out-of-sample performance of the variable-selection methods is assessed in
the context of two different forecasting schemes. In the first, the estimated
parameters of the logit model are updated recursively when generating
forecasts (by extending the training sample by one month), while the set of
predictors determined by each method in the initial training sample remains
the same. By contrast, in the second forecasting scheme, both the selected
variables and their estimated coefficients are updated recursively after each
out-of-sample forecast. Although this is computationally more expensive, it
represents a more realistic approach to out-of-sample evaluation.

Results are reported in Table~\ref{tab_oos}. BMT consistently outperforms the
OCMT and LASSO methods on all three evaluation metrics, irrespective of the
forecasting scheme used. The ability of BMT to select predictors that deliver
accurate out-of-sample identification of high-inflation regimes is
particularly impressive in the second setup, with $AUROC$ around $0.92$. 

\begin{table}[h]
\caption{Out-of-Sample Performance}%
\label{tab_oos}%
\centering%
\begin{tabular}
[c]{c|ccc|ccc}\hline\hline
& \multicolumn{3}{c|}{Scheme 1: Fixed Predictors,} & \multicolumn{3}{c}{Scheme
2: Updated Predictors,}\\
& \multicolumn{3}{c|}{Updated Coefficients} & \multicolumn{3}{c}{Updated
Coefficients}\\\hline
& BMT & OCMT & LASSO & BMT & OCMT & LASSO\\\hline
$MCC$   & 0.524  &    0.433  &    0.059  & 0.704  &    0.479  &    0.347  \\
$AUROC$ & 0.863  &    0.640  &    0.281  & 0.917  &    0.695  &    0.616  \\
$SQPS$  & 0.419  &    0.497  &    0.631  & 0.333  &    0.474  &    0.549  \\
 \hline\hline
\end{tabular}
%\begin{minipage}{15cm}\vspace{0.25cm}
%\footnotesize{Notes:...}
%\end{minipage}
\end{table}

\begin{figure}[h]
\caption{Selected Predictors Across Economic Groups Under Forecasting Scheme 1
(Fixed Predictors and Updated Coefficients)}%
\label{fig_heat1}%
\centering  \includegraphics[width=\textwidth]{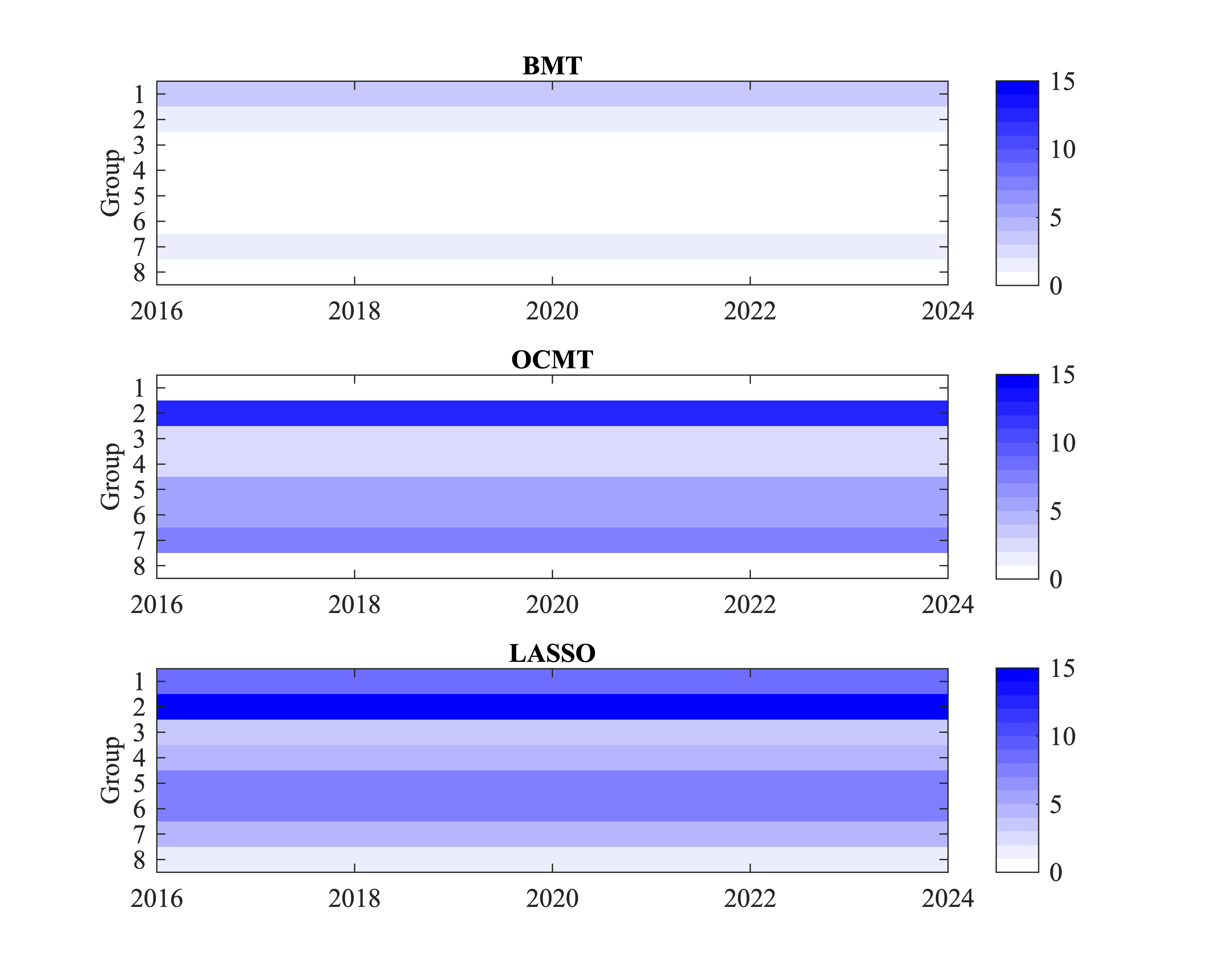}\end{figure}

\begin{figure}[h]
\caption{Selected Predictors Across Economic Groups Under Forecasting Scheme 2
(Updated Predictors and Coefficients)}%
\label{fig_heat2}
\centering
\includegraphics[width=\textwidth]{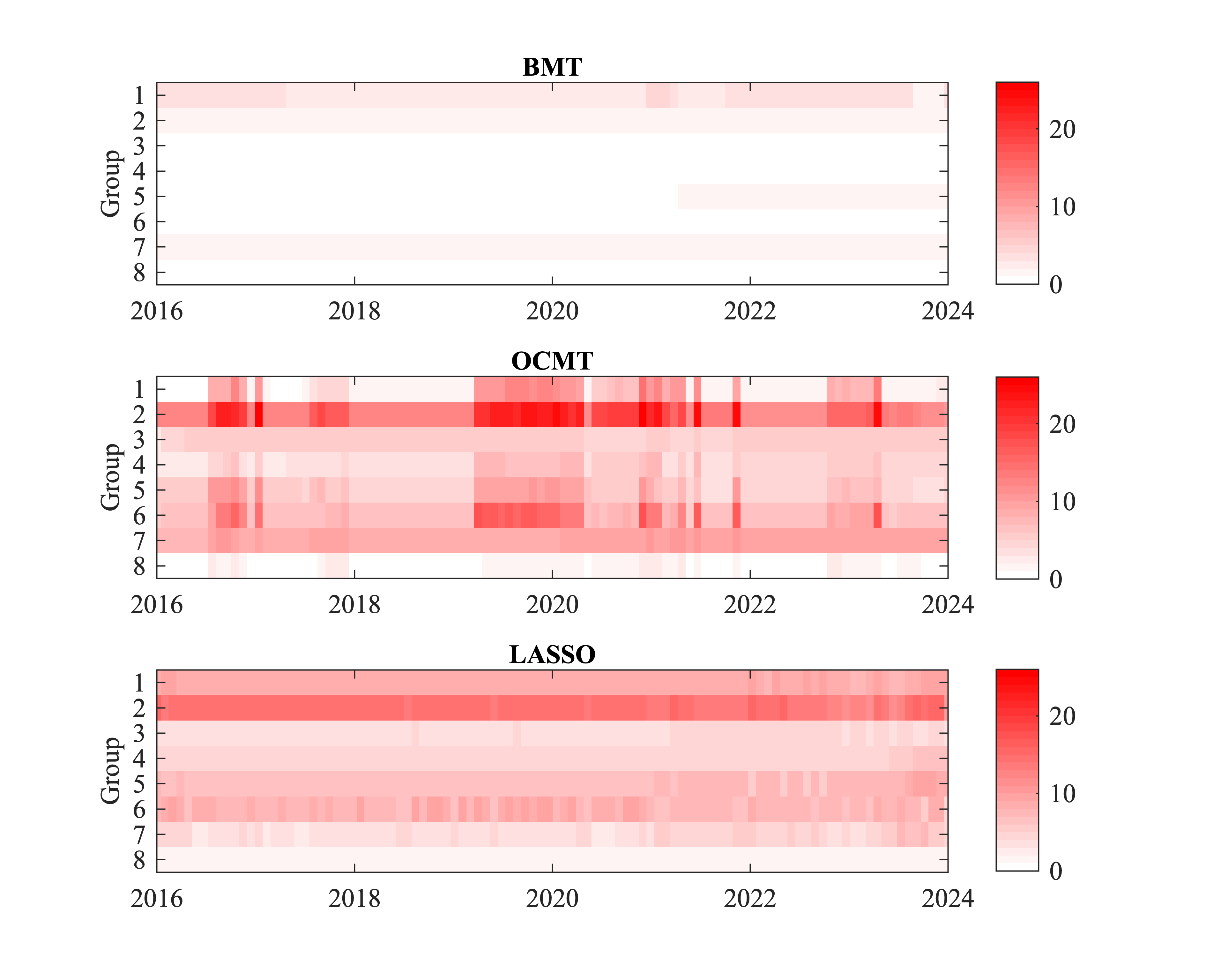}\end{figure}

Figures~\ref{fig_heat1} and \ref{fig_heat2} offer some insight into the
differences in the out-of-sample performance of the three variable-selection
methods under each of the two forecasting schemes used. The heatmaps in these
figures show the number of selected predictors within each of the eight
economic groups in our dataset at each point in the evaluation period; white
indicates that no variables are selected, while dark blue/red indicates that
up to 15/25 variables are selected from a given economic group. As can be seen in
Figure~\ref{fig_heat2}, LASSO selects a stable subset of predictors across the
evaluation period, but this subset tends to be large. OCMT also selects a
large number of predictors across the evaluation period, although not
necessarily the same ones. BMT, on the other hand, selects parsimonious models
with a stable set of predictors, resulting in lower model and parameter
uncertainties compared to OCMT and LASSO. Needless to say, the set of
predictors chosen by each selection procedure remains unaltered over the
evaluation period in Figure~\ref{fig_heat1} due to the design of the
forecasting scheme.

Based on the results obtained by the three methods considered here, most 
selected predictors for price pressures in the U.S. fall into the economic 
groups 1 (output and income), 2 (labour market), 6 (interest/exchange rates)
and 7 (prices). These align with traditional specifications of the Phillips 
curve, which often involve labour market indicators, such as the unemployment 
rate or unit labour costs, as well as economic activity measures, such as the
output gap \citep[e.g.,][]{Stock1999,Faust2013}. They also align with popular
forecasting models for inflation in which factors such as financial variables
\citep{Forni2003} and commodity prices \citep{Chen2014} have been found to
have significant predictive power.\newline

%\subsection*{Appendix}
%Figure~\ref{fig_regimes} shows the average value of the binary dependent variable (i.e. the relative frequency of high-inflation observations) in the training and evaluation samples based on different sample splits. For example, following the $80/20$ rule, there are $60\%$ of observations in high-inflation regime in training sample, whereas only $25\%$ of observations in high-inflation regime in the evaluation sample. In contrast, following the $90/10$ rule, there are $52\%$ of observations in high-inflation regime in the training sample and $49\%$ of observations in high-inflation regime in the evaluation sample.

\section{Conclusion}

\label{sec:conclusion}

This paper has considered the problem of variable selection in binary-response GLMs when the number of possible covariates is large (and potentially exceeds
the number of available observations). Extending the approach of \citet{kapetanios2026}, we propose a nonlinear BMT procedure to select covariates, combining forward stagewise variable addition (boosting) with a family-wise multiple testing stopping rule. A distinctive feature of the procedure is that covariates are introduced sequentially, with at most one covariate added at each stage based on its conditional statistical significance. Following each addition, all remaining candidate covariates are re-evaluated relative to the expanded model before any further selection takes place. This cautious stagewise strategy helps limit the inclusion of pseudo-signals that are correlated with true signals but do not contribute independently to the conditional mean of the response variable.

We have shown that, under the stated conditions, the nonlinear BMT recovers the true
model by selecting all the signals and no other covariates with probability
tending to one. What is more, the post-BMT (quasi-) MLE of the parameters of the
model enjoys an oracle property, in the sense that it has the same asymptotic
properties as the (infeasible) oracle estimator that knows the signal
variables in advance.

An extensive simulation study has demonstrated the good finite-sample
performance of BMT, and its superiority over OCMT and LASSO, in sparse
high-dimensional logit models. Across a variety of designs with varying
collinearity patterns and numbers of covariates and signals, BMT was found to
deliver parsimonious specifications, highly accurate covariate selection and
low parameter estimation errors. Combined with the relatively low
computational cost of BMT (relative to competing procedures such as OCMT and
LASSO) and the attractive asymptotic properties of the procedure, these
findings show that BMT provides a useful and powerful tool for variable selection.

Finally, an empirical application has illustrated the practical usefulness of
the BMT procedure in the context of forecasting the probability that U.S.
inflation will, on average, be higher than 2.5\% over a 12-month horizon. We
have found that BMT delivers a relatively small set of predictors that can
identify accurately out-of-sample periods in which average inflation is high.

\bigskip
\bibliographystyle{plainnat}
\bibliography{literature}

@article{merlevede2011bernstein,
  author  = {Merlev{\`e}de, Florence and Peligrad, Magda and Rio, Emmanuel},
  title   = {A {Bernstein} Type Inequality and Moderate Deviations for
             Weakly Dependent Sequences},
  journal = {Probability Theory and Related Fields},
  year    = {2011},
  volume  = {151},
  number  = {3--4},
  pages   = {435--474},
  doi     = {10.1007/s00440-010-0304-9}
}

@article{yokoyama1980moment,
  author  = {Yokoyama, Ryozo},
  title   = {Moment Bounds for Stationary Mixing Sequences},
  journal = {Zeitschrift f{\"u}r Wahrscheinlichkeitstheorie und
             Verwandte Gebiete},
  year    = {1980},
  volume  = {52},
  number  = {1},
  pages   = {45--57}
}

@article{merlevede2013rosenthal,
  author  = {Merlev{\`e}de, Florence and Peligrad, Magda},
  title   = {{Rosenthal}-Type Inequalities for the Maximum of Partial Sums
             of Stationary Processes and Examples},
  journal = {The Annals of Probability},
  year    = {2013},
  volume  = {41},
  number  = {2},
  pages   = {914--960},
  doi     = {10.1214/11-AOP694}
}

@article{Jackson2015,
	title={A measure of price pressures},
	author={Jackson, Laura E. and Kliesen, Kevin L. and Owyang, Michael},
	journal={Federal Reserve Bank of St. Louis Review},
	volume={97},
	number={1},
	pages={25--52},
	year={2015}}

@article{tibshirani1996,
	title={Regression Shrinkage and Selection Via the Lasso},
	author={Tibshirani, Robert},
	journal={Journal of the Royal Statistical Society B},
	volume={58},
	number={1},
	pages={267--288},
	year={1996}}

@article{fan2001scan,
	title={Variable selection via nonconcave penalized likelihood and its oracle properties},
	author={Fan, Jianqing and Li, Runze},
	journal={Journal of the American statistical Association},
	volume={96},
	number={456},
	pages={1348--1360},
	year={2001}}

@article{park2007lassologit,
  author  = {Park, Mee Young and Hastie, Trevor},
  title   = {${L}_{1}$-regularization path algorithm for generalized linear models},
  journal = {Journal of the Royal Statistical Society B},
  year    = {2007},
  volume  = {69},
  number  = {4},
  pages   = {659--677}
}

@article{zou2005regularization,
	title={Regularization and variable selection via the elastic net},
	author={Zou, Hui and Hastie, Trevor},
	journal={Journal of the Royal Statistical Society B},
	volume={67},
	number={2},
	pages={301--320},
	year={2005}}

@article{chudik2018ocmt,
	title={A one covariate at a time, multiple testing approach to variable selection in high-dimensional linear regression models},
	author={Chudik, Alexander and Kapetanios, George and Pesaran, M. Hashem},
	journal={Econometrica},
	volume={86},
	number={4},
	pages={1479--1512},
	year={2018}}

@article{Mccracken2021,
  title={{FRED-QD}: A Quarterly Database for Macroeconomic Research},
  author={McCracken, Michael W and Ng, Serena},
  journal={Federal Reserve Bank of St. Louis Review},
  volume={103},
  number={1},
  pages={1--44},
  year={2021}}

@article{white1982misspec,
  author  = {White, Halbert},
  title   = {Maximum Likelihood Estimation of Misspecified Models},
  journal = {Econometrica},
  year    = {1982},
  volume  = {50},
  number  = {1},
  pages   = {1--25}
}

@book{mccullagh1989glm,
  author    = {McCullagh, Peter and Nelder, John A.},
  title     = {Generalized Linear Models},
  publisher = {Chapman and Hall},
  year      = {1989},
  edition   = {2nd},
  address   = {London}
  }

@article{fan2008sure,
  author  = {Fan, Jianqing and Lv, Jinchi},
  title   = {Sure independence screening for ultrahigh dimensional feature space},
  journal = {Journal of the Royal Statistical Society B},
  year    = {2008},
  volume  = {70},
  number  = {5},
  pages   = {849--911}
}

@article{fansong2010,
	author  = {Fan, Jianqing and Song, Rui},
	title   = {Sure independence screening for generalized linear models with {NP}-dimensionality},
	journal = {Annals of Statistics},
	year    = {2010},
	volume  = {38},
	number  = {6},
	pages   = {3567--3604}
}

@article{BuhlmannHothorn2007,
	title={Boosting algorithms: regularization, prediction and model fitting},
	author={B\"{u}hlmann, Peter and Hothorn, Torsten},
	journal={Statistical Science},
	volume={22},
	number={7},
	pages={477--505},
	year={2007}}

@incollection{Faust2013,
  	title={Forecasting inflation},
  	author={Faust, Jon and Wright, Jonathan H.},
  	booktitle={Handbook of Economic Forecasting},
  	editor={Elliott, Graham and Timmermann, Allan},
  	volume={2A},
  	publisher={North Holland},
  	address={Amsterdam},
  	pages={2--56},
  	year={2013}}

@article{Chico2020,
  	title={The advantages of the {Matthews} correlation coefficient {(MCC)} over {F1} score
  	and accuracy in binary classification evaluation},
  	author={Chicco, Davide and Jurman, Giuseppe},
  	journal={BMC Genomics},
  	volume={21},
  	number={6},
  	pages={1--13},
  	year={2020}}

@article{Baldi2020,
  	title={Assessing the accuracy of prediction algorithms for classification: an overview},
  	author={Baldi, Pierre and Brunak, S\o{}ren and Chauvin, Yves and Andersen, Claus A. F. and Nielsen, Henrik },
  	journal={Bioinformatics Review},
  	volume={16},
  	number={5},
  	pages={412--424},
  	year={2000}}

@article{Bamber1975,
  		title={The area above the ordinal dominance graph and the area below the 
  		receiver operating characteristic graph},
  		author={Bamber, Donald },
  		journal={Journal of Mathematical Psychology},
  		volume={12},
  		number={4},
  		pages={387--415},
  		year={1975}}

@article{Chen2014,
  	title={Forecasting inflation using commodity price aggregates},
  	author={Chen, Yu-chin and Turnovsky, Stephen J. and Zivot, Eric},
  	journal={Journal of Econometrics},
  	volume={183},
  	number={1},
  	pages={117--134},
  	year={2014}}

@article{Chen2012,
  		title={Extended {BIC} for small-$n$-large-${P}$ sparse {GLM}},
  		author={Chen, Jiahua and Chen, Zehua},
  		journal={Statistica Sinica},
  		volume={22},
  		number={2},
  		pages={555--574},
  		year={2012}}

@article{Forni2003,
  	title={Do financial variables help forecasting inflation and real activity in the euro area?},
  	author={Forni, Mario and Hallin, Marc and Lippi, Marco and Reichlin, Lucrezia},
  	journal={Journal of Monetary Economics},
  	volume={50},
  	number={6},
  	pages={1243--1255},
  	year={2003}}

@unpublished{kapetanios2026,
  author = {Kapetanios, George and Sarafidis, Vasilis and Ventouri, Alexia},
  title  = {Model Selection in High-Dimensional Linear Regression Using Boosting with Multiple Testing},
  note   = {arXiv:2602.19705 [econ.EM]},
  year   = {2026}
}

@article{Newey1994,
  title={{Automatic lag selection in covariance matrix estimation}},
  author={Newey, W. K. and West, K. D.},
  journal={Review of Economic Studies},
  volume={61},
  number={4},
  pages={631--653},
  year={1994}}

@article{Stock1999,
	title={Forecasting inflation},
	author={Stock, James H. and Watson, Mark W.},
	journal={Journal of Monetary Economics},
	volume={44},
	number={2},
	pages={293--335},
	year={1999}}

@BOOK{white94,
  author = {White, Halbert},
  year = 1994,
  title = {Estimation, Inference and Specification Analysis},
  publisher = {Cambridge University Press},
  address = {Cambridge, U.K.}}

\clearpage
\appendix
\setlength{\abovedisplayskip}{4pt plus 1pt minus 1pt}
\setlength{\belowdisplayskip}{4pt plus 1pt minus 1pt}
\setlength{\abovedisplayshortskip}{2pt plus 1pt minus 1pt}
\setlength{\belowdisplayshortskip}{3pt plus 1pt minus 1pt}

\section{Proofs}

\label{app:prelim}
%========================

Definitions and notation used in this Appendix are as in the main text. Limits
are taken as $T\rightarrow\infty$, unless stated otherwise. The constants
$C,C_1,C_2,\ldots$ are finite, positive and may change from line to line.
Throughout this appendix, all maxima over stagewise models are understood over
the admissible class
\(\mathcal A_T = \{(S,j):S^{(0)}\subseteq S\subseteq F_p, |S\setminus S^{(0)}|\le k_{\max},\ j\notin S\}.\)
Recall that
\(
\mathcal E_T
=
\max_{(S,j)\in\mathcal A_T}
\left|W_{T,j}(S)-\mathrm{NC}_j^*(S)\right|.
\)

\subsection{Score implementation and local Wald--score equivalence}

\label{app:wald_score}

The exact-recovery proofs are stated for the Wald statistic $W_{T,j}(S)$ and
only use the uniform approximation encoded in $\mathcal E_T$. They do not
require a global Wald--score equivalence. This distinction matters because
Wald, score and likelihood-ratio statistics are first-order equivalent under
null and local alternatives, but need not be uniformly equivalent over fixed
or diverging alternatives.

If the score statistic $\widetilde W_{T,j}(S)$ in \eqref{eq:score_form} is
used as the implemented stagewise statistic, the results remain valid after
replacing $W_{T,j}(S)$ by $\widetilde W_{T,j}(S)$ in Assumption~\ref{ass:A3}.4
and in the definition of $\mathcal E_T$. Under the usual additional uniform
profiling conditions, the replacement is automatic over population-null and
local alternatives. Specifically, if
\(\mathcal A_T(M)=\{(S,j)\in\mathcal A_T: \mathrm{NC}_{j}^{*}(S)\le M a_{T,p}\},\)
where $M<\infty$ is fixed and $a_{T,p}=\sqrt{\ell_T}$ under A2-ET or
$a_{T,p}=a_{T,p}^{\mathrm{PM}}$ under A2-PM, and if the profiled score admits
a uniform Taylor expansion around the restricted estimator with uniformly
consistent Schur-complement curvature and score denominator, then
\[
\max_{(S,j)\in\mathcal A_T(M)}
|W_{T,j}(S)-\widetilde W_{T,j}(S)|=o_p(a_{T,p}).
\]
For fixed or diverging alternatives, the score implementation should instead
be treated through its own population noncentrality and its own analogue of
Assumption~\ref{ass:A3}.4.

%========================

\subsection{Proof of Theorem~\ref{thm:approx_stats}}

\label{app:thm1}
%========================

\begin{proof}
For every realised stage $\ell\le k_{\max}$ and every candidate
$j\notin S^{(\ell)}$, the pair $(S^{(\ell)},j)$ belongs to $\mathcal A_T$.
Hence,
\[
\max_{\ell\le k_{\max}}
\max_{j\notin S^{(\ell)}}
\left|W_{T,j}(S^{(\ell)})-
\mathrm{NC}_j^*(S^{(\ell)})\right|
\le \mathcal E_T.
\]
Assumption~\ref{ass:A3}.4 gives $\mathcal E_T=O_p(\sqrt{\ell_T})$ under A2-ET and
$\mathcal E_T=O_p(a_{T,p}^{\mathrm{PM}})$ under A2-PM, proving parts (a) and
(b).

For part (c), let
\(
E_T^{(0)}=\{\mathcal E_T\le (1-\bar{\eta}_{0})c_T\}.
\)
By Assumption~\ref{ass:A5}, $\Pr(E_T^{(0)})\to1$. On this event, if
$\mathrm{NC}_{j}^{*}(S^{(\ell)})=0$, then
\[
W_{T,j}(S^{(\ell)})
\le
\left|W_{T,j}(S^{(\ell)})-
\mathrm{NC}_{j}^{*}(S^{(\ell)})\right|
\le \mathcal E_T
\le (1-\bar{\eta}_{0})c_T<c_T.
\]
Thus, no population-null candidate can cross the threshold at any admissible
stage on $E_T^{(0)}$, which proves \eqref{eq:thm1_null}.
\end{proof}

%========================

\subsection{Proof of Theorem~\ref{thm:approx_stop}}

\label{app:thm2}
%========================

\begin{proof}
Part (i) follows immediately from Theorem~\ref{thm:approx_stats}(c), because
BMT can select a covariate only if its statistic is at least $c_T$.

For part (ii), use the event $E_T^{(0)}$ defined in the proof of
Theorem~\ref{thm:approx_stats}. On $E_T^{(0)}$,
\[
\max_{j\notin S^{(\ell^\dagger)}}
W_{T,j}(S^{(\ell^\dagger)})
\le
M_{\ell^\dagger}^*+\mathcal E_T.
\]
By assumption, $M_{\ell^\dagger}^*=o_p(c_T)$, so with probability tending to one, $M_{\ell^\dagger}^*\le \bar{\eta}_{0} c_T/2$. Combining this with
$\mathcal E_T\le (1-\bar{\eta}_{0})c_T$ yields
\[
\max_{j\notin S^{(\ell^\dagger)}}
W_{T,j}(S^{(\ell^\dagger)})
\le (1-\bar{\eta}_{0}/2)c_T<c_T,
\]
with probability tending to one. The stopping rule, therefore, stops at or
before $\ell^\dagger$. The final displayed statement is the selected-stage
form of part (i).
\end{proof}

%========================

\subsection{Proof of Theorem~\ref{thm:k1_recovery}}

\label{app:thm3}
%========================

\begin{proof}
Let $S_0=\{j_0\}$ and define the event
\(E_{1T}=\{\mathcal E_T\le d_{1T}/4\}.\)
By Assumption~\ref{ass:A41}, $\Pr(E_{1T})\to1$. On $E_{1T}$, for every
admissible $S$ not containing $j_0$,
\(W_{T,j_0}(S) \ge \mathrm{NC}_{j_0}^*(S)-d_{1T}/4,\)
whereas for every $m\ne j_0$, $m\notin S$,
\(W_{T,m}(S) \le \mathrm{NC}_{m}^*(S)+d_{1T}/4.\)
Using \eqref{eq:dom_k1},
\[
W_{T,j_0}(S)
\ge
\max_{\substack{m\ne j_0\\m\notin S}}W_{T,m}(S)+d_{1T}/2.
\]
Thus, the signal is the unique maximiser of the stagewise statistic at any
admissible stage before it has been selected. In particular, at $S=S^{(0)}$,
the first-stage maximiser is $j_0$ on $E_{1T}$.

The signal also crosses the testing threshold. Indeed, by
\eqref{eq:dom_k1_threshold}, on $E_{1T}$,
\(W_{T,j_0}(S^{(0)}) \ge c_T+d_{1T}-d_{1T}/4>c_T.\)
Therefore, the first-stage passing set is nonempty and BMT selects $j_0$ with
probability tending to one. This proves parts (a) and (b).

For part (c), after the event $\{j^{(0)}=j_0\}$ occurs, the next conditioning
set is $S^{(0)}\cup\{j_0\}$. Condition \eqref{eq:k1_exhaustion} is exactly the
approximating-model exhaustion condition of Theorem~\ref{thm:approx_stop} at
that stage. Hence, no remaining candidate crosses the threshold with
probability tending to one, and the algorithm stops with
$\widehat S=S^{(0)}\cup\{j_0\}$.
\end{proof}

%========================

\subsection{Proof of Theorem~\ref{thm:fixedk_recovery}}

\label{app:thm4}
%========================

\begin{proof}
Let
\(E_T^{(d)}=\{\mathcal E_T\le d_T/4\}.\)
By Assumption~\ref{ass:A4}, $\Pr(E_T^{(d)})\to1$. We prove by induction that,
on $E_T^{(d)}$, BMT selects a new signal at every stage
$\ell=0,\ldots,k-1$.

The induction hypothesis at stage $\ell<k$ is
\(S^{(\ell)}\subseteq S^{(0)}\cup S_0, \qquad |S^{(\ell)}\cap S_0|=\ell.\)
It is true at $\ell=0$. Suppose it holds at some $\ell<k$. Then
$S^{(\ell)}$ is admissible for Assumption~\ref{ass:A4}. For any remaining
signal $j\in S_0\setminus S^{(\ell)}$ and any remaining non-signal
$m\notin S_0$, $m\notin S^{(\ell)}$, Assumption~\ref{ass:A4} gives
\(\mathrm{NC}_j^*(S^{(\ell)}) \ge \mathrm{NC}_m^*(S^{(\ell)})+d_T.\)
On $E_T^{(d)}$,
\[
W_{T,j}(S^{(\ell)})
\ge
\mathrm{NC}_j^*(S^{(\ell)})-d_T/4,
\qquad
W_{T,m}(S^{(\ell)})
\le
\mathrm{NC}_m^*(S^{(\ell)})+d_T/4.
\]
Consequently,
\[
\min_{j\in S_0\setminus S^{(\ell)}}
W_{T,j}(S^{(\ell)})
\ge
\max_{\substack{m\notin S_0\\m\notin S^{(\ell)}}}
W_{T,m}(S^{(\ell)})+d_T/2.
\]
A non-signal, therefore, cannot maximise the stagewise statistic.

It remains to check that at least one remaining signal passes the threshold.
By \eqref{eq:stg_threshold}, for every remaining signal $j$,
\(W_{T,j}(S^{(\ell)}) \ge c_T+d_T-d_T/4>c_T\)
on $E_T^{(d)}$. Thus, the passing set is nonempty and the selected covariate is
an element of $S_0\setminus S^{(\ell)}$. This closes the induction and proves
\eqref{eq:thm4_true_each_stage} and \eqref{eq:thm4_after_k}.

After $k$ successful selections, $S^{(k)}=S^{(0)}\cup S_0$ on an event whose
probability tends to 1. The exhaustion condition \eqref{eq:thm4_null_after_true}
gives
\(
\max_{j\notin S^{(k)}}\mathrm{NC}_j^*(S^{(k)})=o(c_T),
\)
so Theorem~\ref{thm:approx_stop} implies that no remaining candidate crosses
$c_T$ with probability tending to one. Hence BMT stops with
$\widehat S=S^{(0)}\cup S_0$.
\end{proof}

%========================

\subsection{Proof of Theorem~\ref{thm:oracle}}
\label{app:thm5}
\begin{proof}
By Assumption~\ref{ass:A6},
\[
\sqrt T(\widehat{\boldsymbol\beta}_{\rm or}-\boldsymbol\beta_{0,S^\dagger})
=
\boldsymbol J_0^{-1}T^{-1/2}\sum_{t=1}^T\boldsymbol\psi_{0,t}+o_p(1)
\Rightarrow
N(\boldsymbol{0},\boldsymbol J_0^{-1}\boldsymbol\Omega_0\boldsymbol J_0^{-1}).
\]
On $\{\widehat S=S^\dagger\}$,
$\widehat{\boldsymbol\beta}_{{\rm post},S^\dagger}
=\widehat{\boldsymbol\beta}_{\rm or}$. Hence, for every $\varepsilon>0$,
\[
\Pr\!\left(
\sqrt T\left\|
\widehat{\boldsymbol\beta}_{{\rm post},S^\dagger}
-\widehat{\boldsymbol\beta}_{\rm or}
\right\|>\varepsilon
\right)
\leq
\Pr(\widehat S\neq S^\dagger)\longrightarrow0.
\]
Thus
$\sqrt T(\widehat{\boldsymbol\beta}_{{\rm post},S^\dagger}
-\widehat{\boldsymbol\beta}_{\rm or})=o_p(1)$, and the asserted limiting
distribution follows by Slutsky's theorem. Consistency follows from the same
argument without the $\sqrt T$ scaling.
\end{proof}

%========================
\subsection{Proof of Lemma \ref{lem:logit_probit}}

\begin{proof}
For logit and probit links, the score and Hessian of the binary-response
quasi-log-likelihood are smooth functions of the single index. On the stated
compact index sets, the derivative envelopes required in
Assumption~\ref{ass:A2} are finite. Conditions A3.2--A3.3 are imposed in the
statement, and Theorem~\ref{thm:primitive_wald} gives A3.4 under its
additional growth and standard-error conditions. Hence
Assumption~\ref{ass:A3} holds.

If the postulated conditional mean is correctly specified and
$S\supseteq S^{\dagger}=S^{(0)}\cup S_0$, then
\(\mathbb E(y_t\mid\boldsymbol x_t)=
G(\boldsymbol x_{S^{\dagger},t}'\boldsymbol\beta_{0,S^{\dagger}}).\)
The augmented model then contains the true index. For every candidate
$m\notin S$, the population score in the candidate direction is zero at
$\theta_m=0$. Pseudo-true uniqueness gives $\theta_m^*(S)=0$, and
\eqref{eq:nc_def_prim} gives $\mathrm{NC}_m^*(S)=0$.
\end{proof}

\subsection{Proof of Lemma \ref{lem:local_implies_A4}}

\label{app:lem3}
%========================

\begin{proof}
Write \(M_j(S)=|\mathbb E[r_{j,t}(S)\boldsymbol x_t'\boldsymbol\beta_0]|\).
By \eqref{eq:local_uniform_expansion}, uniformly over admissible $S$,
\[
\min_{j\in S_0\setminus S}
\frac{|\theta_j^*(S)|}{\{V_{\theta\theta,j}^*(S)\}^{1/2}}
\geq
C_L\min_{j\in S_0\setminus S}M_j(S)-r_T,
\]
whereas
\[
\max_{\substack{m\notin S_0\\m\notin S}}
\frac{|\theta_m^*(S)|}{\{V_{\theta\theta,m}^*(S)\}^{1/2}}
\leq
C_U\max_{\substack{m\notin S_0\\m\notin S}}M_m(S)+r_T.
\]
Assumption~\ref{ass:L3} therefore gives a variance-normalised
pseudo-true-coefficient gap of at least
$\underline\kappa_T-2r_T\geq\underline\kappa_T/2$ for all sufficiently large
$T$. It also gives
\[
\min_{j\in S_0\setminus S}
\frac{|\theta_j^*(S)|}{\{V_{\theta\theta,j}^*(S)\}^{1/2}}
\geq \underline\kappa_T-r_T
\geq 3\underline\kappa_T/4.
\]
After multiplication by $\sqrt T$, the population noncentrality gap is at
least $\sqrt T\,\underline\kappa_T/2$, while every remaining signal has
noncentrality at least $3\sqrt T\,\underline\kappa_T/4$. Put
$d_T=\sqrt T\,\underline\kappa_T/4$. The rate condition in the lemma gives
$\mathcal E_T/d_T=o_p(1)$. The same rate condition implies the deterministic
relation $c_T/(\sqrt T\,\underline\kappa_T)\to0$, so, for all sufficiently
large $T$, $c_T+d_T\leq3\sqrt T\,\underline\kappa_T/4$. Both the dominance and
threshold-crossing parts of Assumption~\ref{ass:A4} follow.
\end{proof}

%========================

\subsection{Proof of Lemma \ref{lem:GM2_implies_A4}}

\label{app:lem4}
%========================

\begin{proof}
By Assumption~\ref{ass:GM2}, for every admissible $S$,
\[
\max_{\substack{m\notin S_0\\m\notin S}}\mathrm{NC}_m^*(S)
\le
\rho
\min_{j\in S_0\setminus S}\mathrm{NC}_j^*(S).
\]
Let
\(
b_T=
\inf_S\min_{j\in S_0\setminus S}\mathrm{NC}_j^*(S).
\)
Then
\[
\min_{j\in S_0\setminus S}\mathrm{NC}_j^*(S)
-
\max_{\substack{m\notin S_0\\m\notin S}}\mathrm{NC}_m^*(S)
\ge (1-\rho)b_T.
\]
With $d_T=(1-\rho)b_T/2$, the dominance gap condition holds. The signal-strength condition \eqref{eq:GM2_signal_strength} implies
$\mathcal E_T/d_T=o_p(1)$. Since $c_T$ and $b_T$ are deterministic and
$\mathcal E_T\geq0$, it also implies $c_T/b_T\to0$. Hence, for all
sufficiently large $T$,
\(
\min_{j\in S_0\setminus S}\mathrm{NC}_j^*(S)
\ge b_T\ge c_T+d_T.
\)
Thus Assumption~\ref{ass:A4} follows.
\end{proof}

%#############################################################################

\subsection{Proof of Lemma \ref{lem:G1_implies_GM2}}

\label{app:lem5}
%========================

\begin{proof}
Let
\(
M_j(S)=\mathbb E[\widetilde x_{j,t}(S)\xi_t(S)].
\)
The two-sided moment-to-coefficient bound in Lemma~\ref{lem:G1_implies_GM2}
and Assumption~\ref{ass:G1} imply
\[
\max_{\substack{m\notin S_0\\m\notin S}}
\frac{|\theta_m^*(S)|}{\{V_{\theta\theta,m}^*(S)\}^{1/2}}
\le
C_U\max_{\substack{m\notin S_0\\m\notin S}}|M_m(S)|
\le
C_U\rho\min_{j\in S_0\setminus S}|M_j(S)|.
\]
Using the lower bound for the signal coefficients,
\[
\min_{j\in S_0\setminus S}|M_j(S)|
\le
C_L^{-1}
\min_{j\in S_0\setminus S}
\frac{|\theta_j^*(S)|}{\{V_{\theta\theta,j}^*(S)\}^{1/2}}.
\]
Thus, \eqref{eq:GM2_theta} holds with constant $\rho C_U/C_L<1$. The
signal-strength part of Assumption~\ref{ass:GM2} is imposed directly, so
Assumption~\ref{ass:GM2} follows.
\end{proof}

%========================

\subsection{Proof of Lemma \ref{lem:logit_probit_mapping}}

\label{app:lem7}

\begin{proof}
Fix an admissible pair $(S,j)$. Let
$\eta_{S,t}^*=\boldsymbol x_{S,t}'\boldsymbol{\gamma}^*(S)$ denote the restricted
pseudo-true index and let $\xi_t(S)=\dot l_t(\eta_{S,t}^*,y_t)$ be the
restricted population score residual. Define the population nuisance profile
$\boldsymbol{\gamma}_j(\theta;S)$ as the solution to
\(
\mathbb E\!\bigl[
\boldsymbol x_{S,t}\dot l_t(\boldsymbol x_{S,t}'\boldsymbol{\gamma}+\theta x_{j,t},y_t)
\bigr]=0,
\)
with $\boldsymbol{\gamma}_j(0;S)=\boldsymbol{\gamma}^*(S)$. The profiled candidate score is
\(
\Psi_j(\theta;S)
=
\mathbb E\!\bigl[
 x_{j,t}\dot l_t(\boldsymbol x_{S,t}'\boldsymbol{\gamma}_j(\theta;S)+\theta x_{j,t},y_t)
\bigr].
\)
Equivalently, after differentiating the nuisance first-order condition, the
profile derivative is the Schur complement
\[
-\dot\Psi_j(\theta;S)
=
J_{\theta\theta,j}(\theta;S)
-J_{\theta\gamma,j}(\theta;S)J_{\gamma\gamma,j}(\theta;S)^{-1}
J_{\gamma\theta,j}(\theta;S).
\]
By the curvature condition in Lemma~\ref{lem:logit_probit_mapping}, this
profile curvature is bounded and bounded away from zero uniformly on the line
segment between $0$ and $\theta_j^*(S)$:
\(0<C_1\le -\dot\Psi_j(\theta;S) \le C_2<\infty .\)
The equality $\Psi_j(0;S)=\mathbb E[\widetilde x_{j,t}(S)\xi_t(S)]$ follows
from the weighted-residual representation of the Schur complement; the
residualisation removes the nuisance-score component. The pseudo-true
candidate coefficient satisfies
\(\Psi_j(\theta_j^*(S);S)=0.\)
The mean-value theorem, therefore, gives, for some intermediate
$\bar\theta_j(S)$ between $0$ and $\theta_j^*(S)$,
\(|\Psi_j(0;S)| = |\dot\Psi_j(\bar\theta_j(S);S)|\,|\theta_j^*(S)|.\)
The displayed curvature bounds imply the uniform two-sided inequality
\(C_2^{-1}|\Psi_j(0;S)| \le |\theta_j^*(S)| \le C_1^{-1}|\Psi_j(0;S)|.\)
Since $V_{\theta\theta,j}^*(S)$ is uniformly bounded above and away from zero
by Assumption~\ref{ass:A3}, the same two-sided inequality holds, up to finite
positive constants, after division by
$\{V_{\theta\theta,j}^*(S)\}^{1/2}$. This proves the displayed two-sided bound.
\end{proof}

%========================

\clearpage

%#############################################################################

\section{Extension of Nonlinear BMT to General Exponential-Family GLM}
\label{app:addGLM}

\setcounter{equation}{0} \renewcommand{\theequation}{B.\arabic{equation}}\renewcommand{\theHequation}{B.\arabic{equation}}
\setcounter{theorem}{0} \renewcommand{\thetheorem}{B.\arabic{theorem}}\renewcommand{\theHtheorem}{B.\arabic{theorem}}
\setcounter{lemma}{0} \renewcommand{\thelemma}{B.\arabic{lemma}}\renewcommand{\theHlemma}{B.\arabic{lemma}}
\setcounter{remark}{0} \renewcommand{\theremark}{B.\arabic{remark}}\renewcommand{\theHremark}{B.\arabic{remark}}
\setcounter{assumption}{0} \renewcommand{\theassumption}{B.\arabic{assumption}}\renewcommand{\theHassumption}{B.\arabic{assumption}}

In this Appendix, we provide results on the properties of nonlinear BMT in the
context of high-dimensional GLMs\ in which the response variable takes values in an arbitrary
subset of the real line and its conditional distribution given a vector of
covariates is assumed to belong to a regular one-parameter exponential family.
The notation used here is largely self-contained and, unless indicated
otherwise, should not be interpreted as referring to the same quantities as
elsewhere in the main text. Throughout, an overdot indicates differentiation,
with multiple overdots indicating the order of the derivative. All limits are
taken as $T\rightarrow\infty$, unless stated otherwise.

\subsection{Setup}

\label{app:glm_setup}
%----------------------------------------------------

For $t=1,2,\ldots T$, let $y_{t}\in\mathcal{Y}\subseteq\mathbb{R}$ be a scalar
response and $\boldsymbol{x}_{t}=(x_{1,t},\dots,x_{p,t})^{\prime}\in
\mathbb{R}^{p}$ be a covariate vector whose dimension $p=p_{T}$ may increase
with $T$. To avoid notational clutter, intercepts and deterministic controls
are not explicitly included in $\boldsymbol{x}_{t}$. Such controls are to be
considered as always-in in all stagewise submodels and, if present, all
selected sets below are to be understood as referring to only non-always-in covariates.

Conditional on $\boldsymbol{x}_{t}$, $y_{t}$ is assumed to have a (quasi-) density of
the form
\begin{equation}
f(y_{t}|\boldsymbol{x}_{t};\boldsymbol{\beta},\phi)=\exp\left\{  \frac
{y_{t}\zeta_{t}-B(\zeta_{t})}{A(\phi)}+K(y_{t},\phi)\right\}  ,
\label{eq:app_ef_density}%
\end{equation}
where $\zeta_{t}\in\mathcal{Z}\subseteq\mathbb{R}$ is the scalar natural
parameter, $B:\mathcal{Z}\rightarrow\mathbb{R}$ is the cumulant function,
$\phi\in\bar{\Phi}\subseteq\mathbb{R}_{+}$ is a dispersion parameter (possibly
known), $A:\bar{\Phi}\rightarrow(0,\infty)$ is the dispersion or scale
function, and $K:\mathcal{Y}\times\bar{\Phi}\rightarrow\mathbb{R}$ is the
carrier term. Hence, the conditional mean of $y_{t}$ is $\mu_{t}%
=\mathbb{E}(y_{t}|\boldsymbol{x}_{t})=\dot{B}(\zeta_{t})$. The mean space is
denoted by $\mathcal{M}=\dot{B}(\mathcal{Z})$. It is assumed that $B$ is
strictly convex on $\mathcal{Z}$, so that $\dot{B}$ is one-to-one and $\dot
{B}^{-1}$ is well-defined on $\mathcal{M}$.

The conditional mean $\mu_{t}$ is related to the index $\eta_{t}%
=\boldsymbol{x}_{t}^{\prime}\boldsymbol{\beta}$, where $\boldsymbol{\beta}%
\in\mathbb{R}^{p}$ is an unknown parameter, by the GLM link, that is, an
one-to-one function $g:\mathcal{M}\rightarrow\mathcal{H}$, $\mathcal{H}%
\subseteq\mathbb{R}$, such that $g(\mu_{t})=\eta_{t}$. Let $G:\mathcal{H}%
\rightarrow\mathcal{M}$ denote the inverse link function $g^{-1}$. Since the
conditional mean at index value $\eta\in\mathcal{H}$ is $\mu(\eta)=G(\eta)$,
the natural parameter corresponding to $\eta\in\mathcal{H}$ is $\zeta
(\eta)=\dot{B}^{-1}\{G(\eta)\}\in\mathcal{Z}$. Consequently, the covariates
enter (\ref{eq:app_ef_density}) through $\zeta_{t}=\dot{B}^{-1}%
\{G(\boldsymbol{x}_{t}^{\prime}\boldsymbol{\beta})\}$. For canonical-link
GLMs, $\zeta(\eta)=\eta$; for non-canonical links, $\zeta(\eta)\neq\eta$ in general.

It is assumed throughout that: (i)~there exists a sparse vector
$\boldsymbol{\beta}_{0}=(\beta_{0,1},\ldots,\beta_{0,p})^{\prime}\in
\mathbb{R}^{p}$ such that $G(\boldsymbol{x}_{t}^{\prime}\boldsymbol{\beta}%
_{0})$ equals the true conditional expectation of $y_{t}$ given
$\boldsymbol{x}_{t}$; (ii)~the true active set $S_{0}=\{j\in\{1,\ldots
,p\}:\beta_{0,j}\neq0\}$ has fixed and finite cardinality $k$.

Given (\ref{eq:app_ef_density}), the (quasi-) log-likelihood for observation
$t$ and index value $\eta$ is
\begin{equation}
l_{t}(\eta)=l(\eta,y_{t};\phi)=\frac{y_{t}\zeta(\eta)-B\{\zeta(\eta)\}}%
{A(\phi)}+K(y_{t},\phi). \label{eq:app_index_loglik}%
\end{equation}
Hence, since $\dot{B}\{\zeta(\eta)\}=G(\eta)$ and $\dot{G}(\eta)=\ddot
{B}\{\zeta(\eta)\}\dot{\zeta}(\eta)$, we have
\begin{equation}
\dot{l}_{t}(\eta)=\frac{\{y_{t}-G(\eta)\}\dot{\zeta}(\eta)}{A(\phi)},
\label{eq:app_index_score}%
\end{equation}%
\begin{equation}
\ddot{l}_{t}(\eta)=\frac{\{y_{t}-G(\eta)\}\ddot{\zeta}(\eta)-\dot{G}(\eta
)\dot{\zeta}(\eta)}{A(\phi)}=\frac{\{y_{t}-G(\eta)\}\ddot{\zeta}(\eta
)-\ddot{B}\{\zeta(\eta)\}\{\dot{\zeta}(\eta)\}^{2}}{A(\phi)},
\label{eq:app_index_hessian}%
\end{equation}
where all derivatives of $l_{t}(\eta)$ are with respect to the index $\eta$.

Next, we define the stagewise submodels of interest. To this end, let
$S\subseteq\{1,\ldots,p\}$ be a conditioning set with $|S|=s$ and write
$\boldsymbol{x}_{S,t}\in\mathbb{R}^{s}$ for the subvector of $\boldsymbol{x}%
_{t}$ indexed by $S$. For a candidate $j\notin S$, define the augmented
stagewise index
\begin{equation}
\eta_{j,t}(\boldsymbol{\vartheta}_{j};S)=\boldsymbol{x}_{S,t}^{\prime
}\boldsymbol{\gamma}_{j}+\tau_{j}x_{j,t},\qquad\boldsymbol{\vartheta}%
_{j}=(\boldsymbol{\gamma}_{j}^{\prime},\tau_{j})^{\prime}\in\mathbb{R}^{s+1},
\label{eq:app_aug_index}%
\end{equation}
so that the scalar $\tau_{j}$ is the coefficient on the candidate covariate
$x_{j,t}$. The pseudo-true parameter associated with the augmented submodel is
defined as
\begin{equation}
\boldsymbol{\vartheta}_{j}^{\ast}(S)\in\arg\max_{\boldsymbol{\vartheta}_{j}%
\in\Theta_{j,S}}\mathbb{E}\left[  l_{t}\{\eta_{j,t}(\mathbf{\vartheta}%
_{j};S)\}\right]  , \label{eq:app_pseudotrue}%
\end{equation}
where $\Theta_{j,S}\subseteq\mathbb{R}^{s+1}$ is the parameter space for
$\boldsymbol{\vartheta}_{j}$ and the expectation is taken under the true
conditional distribution of $y_{t}$ given $\boldsymbol{x}_{t}$. We write
\(\boldsymbol{\vartheta}_{j}^{\ast}(S)=(\boldsymbol{\gamma}_{j}^{\ast }(S)^{\prime},\tau_{j}^{\ast}(S))^{\prime},\)
with $\boldsymbol{\gamma}_{j}^{\ast}(S)\in\mathbb{R}^{s}$ and $\tau_{j}^{\ast
}(S)\in\mathbb{R}$. It is assumed throughout that, for every $S\in
\mathcal{S}_{k_{\max}}$ (to be defined) and every $j\notin S$, the population
criterion $\vartheta_{j}\mapsto\mathbb{E}\left[  l_{t}\{\eta_{j,t}%
(\vartheta_{j};S)\}\right]  $ has a unique maximiser $\boldsymbol{\vartheta
}_{j}^{\ast}(S)$. Thus, $\tau_{j}^{\ast}(S)$ and the quantities
$\boldsymbol{J}_{j}^{\ast}(S)$, $\boldsymbol{V}_{j}^{\ast}(S)$ and
$\Lambda_{T,j}^{\ast}(S)$ appearing in (\ref{eq:app_population_hessian}%
)--(\ref{eq:app_lambda}) below are uniquely defined. If uniqueness is not
imposed, all results below should instead be read under a fixed deterministic
tie-breaking rule, together with the additional requirement that all tied
pseudo-true maximisers yield the same value of $\tau_{j}^{\ast}(S)$ and the
same relevant curvature and variance quantities.

The estimator
\[
\widehat{\boldsymbol{\vartheta}}_{j}(S)=(\widehat{\boldsymbol{\gamma}}%
_{j}(S)^{\prime},\widehat{\tau}_{j}(S))^{\prime}%
\]
of $\boldsymbol{\vartheta}_{j}$ is defined as a maximiser over $\Theta_{j,S}$
of the average (quasi-) log-likelihood $L_{T,j}(\boldsymbol{\vartheta}%
_{j};S)=T^{-1}\sum_{t=1}^{T}l_{t}\{\eta_{j,t}(\boldsymbol{\vartheta}_{j}%
;S)\}$. The associated stagewise Wald statistic for testing $\tau_{j}=0$ is
\begin{equation}
W_{T,j}(S)=\frac{|\widehat{\tau}_{j}(S)|}{\widehat{\mathrm{se}}_{j}(S)},
\label{eq:app_wald}%
\end{equation}
where $\widehat{\mathrm{se}}_{j}(S)$ is the estimated standard error of
$\widehat{\tau}_{j}(S)$.

Putting $\boldsymbol{q}_{j,t}(S)=(\boldsymbol{x}_{S,t}^{\prime},x_{j,t}%
)^{\prime}\in\mathbb{R}^{s+1}$, the (quasi-) score vector for the augmented
model is
\begin{equation}
\boldsymbol{s}_{j,t}(\boldsymbol{\vartheta}_{j};S)=\frac{\partial}%
{\partial\boldsymbol{\vartheta}_{j}}l_{t}\{\eta_{j,t}(\boldsymbol{\vartheta
}_{j};S)\}=\boldsymbol{q}_{j,t}(S)\dot{l}_{t}\{\eta_{j,t}%
(\boldsymbol{\vartheta}_{j};S)\}. \label{eq:app_score_vector}%
\end{equation}
The population Hessian matrix is
\begin{align}
\boldsymbol{J}_{j}^{\ast}(S)  &  =-\mathbb{E}\left[  \frac{\partial^{2}%
}{\partial\boldsymbol{\vartheta}_{j}\partial\boldsymbol{\vartheta}_{j}%
^{\prime}}l_{t}\{\eta_{j,t}(\boldsymbol{\vartheta}_{j}^{\ast}(S);S)\}\right]
\label{eq:app_population_hessian}\\
&  =-\mathbb{E}\left[  \boldsymbol{q}_{j,t}(S)\boldsymbol{q}_{j,t}(S)^{\prime
}\ddot{l}_{t}\{\eta_{j,t}(\boldsymbol{\vartheta}_{j}^{\ast}(S);S)\}\right]  .
\label{eq:app_population_hessian_alt}%
\end{align}
For robust inference, define the long-run score covariance matrix at
$\boldsymbol{\vartheta}_{j}^{\ast}(S)$ as
\[
\boldsymbol{\Omega}_{j}^{\ast}(S)=\sum_{h=-\infty}^{\infty}\mathbb{E}\left[
\boldsymbol{s}_{j,t}(\boldsymbol{\vartheta}_{j}^{\ast}(S);S)\boldsymbol{s}%
_{j,t-h}(\boldsymbol{\vartheta}_{j}^{\ast}(S);S)^{\prime}\right]  ,
\]
whenever the series converges to a nonsingular sum. The corresponding sandwich
covariance matrix is
\begin{equation}
\boldsymbol{V}_{j}^{\ast}(S)=\boldsymbol{J}_{j}^{\ast}(S)^{-1}%
\boldsymbol{\Omega}_{j}^{\ast}(S)\boldsymbol{J}_{j}^{\ast}(S)^{-1}%
\in\mathbb{R}^{(s+1)\times(s+1)}. \label{eq:app_population_sandwich}%
\end{equation}
Note that, if the observations are conditionally independent, or the (quasi-)
score sequence is a martingale difference, $\boldsymbol{\Omega}_{j}^{\ast
}(S)=\mathbb{E}\left[  \boldsymbol{s}_{j,t}(\boldsymbol{\vartheta}_{j}^{\ast
}(S);S)\boldsymbol{s}_{j,t}(\boldsymbol{\vartheta}_{j}^{\ast}(S);S)^{\prime
}\right]  $. The standard error $\widehat{\mathrm{se}}_{j}(S)$ in
(\ref{eq:app_wald}) is assumed to estimate
$T^{-1/2}\{V_{\tau\tau,j}^{\ast}(S)\}^{1/2}$ uniformly, where
$V_{\tau\tau,j}^{\ast}(S)$ denotes the $(s+1,s+1)$ element of
$\boldsymbol{V}_{j}^{\ast}(S)$ corresponding to $\tau_j$.

Finally, the population noncentrality strength of $x_{j,t}$ is defined as
\begin{equation}
\Lambda_{T,j}^{\ast}(S)=\frac{\sqrt{T}\,|\tau_{j}^{\ast}(S)|}{\{V_{\tau\tau
,j}^{\ast}(S)\}^{1/2}}. \label{eq:app_lambda}%
\end{equation}
This is the deterministic population component of $W_{T,j}(S)$.

The BMT algorithm is as follows. Given a current conditioning set $S^{(\ell)}%
$, $\ell\geq0$, compute $W_{T,j}(S^{(\ell)})$ for all $j\notin S^{(\ell)}$. At
each stage $\ell$, BMT adds the candidate covariate with the largest Wald
statistic among those satisfying $W_{T,j}(S^{(\ell)})\geq c_{p}$, where
$c_{p}$ is a multiple-testing threshold. If no candidate crosses $c_{p}$, the
procedure stops; otherwise, the procedure continues until it either stops or a
prespecified maximum number of stages $k_{\max}$ is reached (which is assumed
to be no smaller than $k$).

The threshold $c_{p}$ may be set as $c_{p}(\alpha,\delta)=\Phi^{-1}\left(
1-\alpha/(2Cp^{\delta})\right)  $, where $0<\alpha<1$, $C>0$ and $\delta>0$. More
generally, the only properties used below are that $c_{p}$ is indexed by the
number of simultaneous candidate tests and that $c_{p}\rightarrow\infty$. If
the number of remaining candidates at stage $\ell$ is $p_{\ell}$, one may use
$c_{p_{\ell}}$; using $c_{p}$ throughout is a conservative simplification.
Since $p=p_{T}$, $c_{p}$ is implicitly sample-size dependent, but its origin
is multiplicity of candidate testing.

In subsequent analysis, we use two classes of conditioning sets. For
approximating-model arguments, we consider
\(\mathcal{S}_{k_{\max}}=\{S\subseteq\{1,\ldots,p\}:|S|\leq k_{\max}\},\)
whereas
\(\mathcal{S}_{0}=\{S\subseteq S_{0}:|S|<k\},\)
is used for exact-recovery arguments. Fixed controls, if present, can be added
to every conditioning set in either class.

%----------------------------------------------------

\subsection{Assumptions}

\label{app:glm_assumptions}
%----------------------------------------------------

The assumptions are divided into three groups. The first group gives uniform
likelihood expansions and maximal score control over all low-dimensional
conditioning sets. These conditions are used for the approximating-model and
stopping results. The second group gives the high-level dominance conditions
needed for exact recovery. The third group provides primitive sufficient
conditions for dominance through a local score-expansion argument.

%....................................................

\subsubsection*{Uniform likelihood and score conditions}

\begin{assumption}
\label{ass:app_smooth} The index log-likelihood $l_{t}(\eta)$ in
\eqref{eq:app_index_loglik} is three times continuously differentiable in
$\eta$ on the relevant index region. Its first three derivatives $\dot{l}%
_{t}(\eta)$, $\ddot{l}_{t}(\eta)$ and $\dddot{l}_{t}(\eta)$ are dominated by
random variables with moments sufficient for the uniform laws of large numbers
and maximal inequalities stated below.
\end{assumption}

\begin{assumption}
\label{ass:app_compact} For every $S\in\mathcal{S}_{k_{\max}}$ and every
$j\notin S$, the pseudo-true parameter $\boldsymbol{\vartheta}_{j}^{\ast}(S)$
lies in the interior of a compact subset of $\mathbb{R}^{|S|+1}$. Moreover,
the pseudo-true parameters are uniformly bounded and uniformly separated
from the boundaries of these compact sets. The estimator
$\widehat{\boldsymbol{\vartheta}}_{j}(S)$ lies in a neighbourhood of
$\boldsymbol{\vartheta}_{j}^{\ast}(S)$, uniformly over
$S\in\mathcal{S}_{k_{\max}}$ and $j\notin S$, with probability tending to one.
\end{assumption}

\begin{assumption}
\label{ass:app_hessian_ulln} For each $S\in\mathcal{S}_{k_{\max}}$, $j\notin
S$, and $\boldsymbol{\vartheta}_{j}\in\Theta_{j,S}$, define
\[
\boldsymbol{J}_{j}(\vartheta_{j};S)=-\mathbb{E}\left[  \frac{\partial
\boldsymbol{s}_{j,t}(\boldsymbol{\vartheta}_{j};S)}{\partial
\boldsymbol{\vartheta}_{j}^{\prime}}\right]  .
\]
Uniformly over $S\in\mathcal{S}_{k_{\max}}$, $j\notin S$, and
$\boldsymbol{\vartheta}_{j}$ in a neighbourhood of $\boldsymbol{\vartheta}%
_{j}^{\ast}(S)$,
\[
\left\Vert -T^{-1}\sum_{t=1}^{T}\frac{\partial\boldsymbol{s}_{j,t}%
(\boldsymbol{\vartheta}_{j};S)}{\partial\boldsymbol{\vartheta}_{j}^{\prime}%
}-\boldsymbol{J}_{j}(\boldsymbol{\vartheta}_{j};S)\right\Vert =o_{p}(1).
\]
In addition, $\boldsymbol{J}_{j}(\boldsymbol{\vartheta}_{j};S)$ is uniformly
continuous at $\boldsymbol{\vartheta}_{j}^{\ast}(S)$: whenever
\[
\sup_{S\in\mathcal{S}_{k_{\max}}}\sup_{j\notin S}\Vert\boldsymbol{\vartheta
}_{j}-\boldsymbol{\vartheta}_{j}^{\ast}(S)\Vert\rightarrow0,
\]
we have
\[
\sup_{S\in\mathcal{S}_{k_{\max}}}\sup_{j\notin S}\Vert\boldsymbol{J}%
_{j}(\boldsymbol{\vartheta}_{j};S)-\boldsymbol{J}_{j}^{\ast}(S)\Vert
\rightarrow0.
\]

\end{assumption}

\begin{assumption}
\label{ass:app_info} There exist constants $0<c<C<\infty$ such that, uniformly
over $S\in\mathcal{S}_{k_{\max}}$ and $j\notin S$,
\(c\leq\lambda_{\min}\{\boldsymbol{J}_{j}^{\ast}(S)\}\leq\lambda_{\max }\{\boldsymbol{J}_{j}^{\ast}(S)\}\leq C,\) where $\lambda_{\min}\{\boldsymbol{A}\}$ and $\lambda_{\max}\{\boldsymbol{A}%
\}$ denote the minimum and maximum eigenvalues of $\boldsymbol{A}$,
respectively.
When robust standard errors are used, the same type of uniform boundedness
holds for $\boldsymbol{V}_{j}^{\ast}(S)$. In particular, there exist constants
$0<\underline{v}<\overline{v}<\infty$ such that
\(\underline{v}\leq V_{\tau\tau,j}^{\ast}(S)\leq\overline{v},\)
uniformly over $S\in\mathcal{S}_{k_{\max}}$ and $j\notin S$.
\end{assumption}

\begin{assumption}
\label{ass:app_score} For $S\in\mathcal{S}_{k_{\max}}$ and $j\notin S$,
define
\[
Z_{T,j}(S)=\frac{\boldsymbol{e}_{\tau}^{\prime}\boldsymbol{J}_{j}^{\ast
}(S)^{-1}T^{-1/2}\sum_{t=1}^{T}\boldsymbol{s}_{j,t}(\boldsymbol{\vartheta}%
_{j}^{\ast}(S);S)}{\{V_{\tau\tau,j}^{\ast}(S)\}^{1/2}},
\]
where
\(\boldsymbol{e}_{\tau}=(0,\ldots,0,1)^{\prime}\in\mathbb{R}^{|S|+1}\)
selects the candidate coefficient. Let
\[
\mathcal{Z}_{T}=\max_{S\in\mathcal{S}_{k_{\max}}}\max_{j\notin S}|Z_{T,j}(S)|
\]
and
\[
\mathcal{L}_{T}=\max_{S\in\mathcal{S}_{k_{\max}}}\max_{j\notin S}\Lambda
_{T,j}^{\ast}(S).
\]
There exist nonnegative random sequences $\rho_{T}$ and $\Delta_{T}$ such
that, uniformly over $S\in\mathcal{S}_{k_{\max}}$ and $j\notin S$,
\[
\left\vert \frac{\sqrt{T}\{\widehat{\tau}_{j}(S)-\tau_{j}^{\ast}%
(S)\}}{\{V_{\tau\tau,j}^{\ast}(S)\}^{1/2}}-Z_{T,j}(S)\right\vert \leq\rho
_{T},
\]
with probability tending to one, and
\[
\Delta_{T}=\max_{S\in\mathcal{S}_{k_{\max}}}\max_{j\notin S}\left\vert
\frac{\widehat{\mathrm{se}}_{j}(S)}{T^{-1/2}\{V_{\tau\tau,j}^{\ast}%
(S)\}^{1/2}}-1\right\vert =o_{p}(1).
\]

\end{assumption}

Define the uniform approximation remainder
\(\mathcal{R}_{T}=2\rho_{T}+2\Delta_{T}(\mathcal{L}_{T}+\mathcal{Z}_{T}).\)
This term explicitly records the effect of standard-error estimation on
possibly diverging population noncentralities.

\begin{assumption}
\label{ass:app_se} The multiple-testing threshold $c_{p}$ satisfies
$c_{p}\rightarrow\infty$. Moreover, there exists a constant $\bar{\eta}_{0}%
\in(0,1)$ such that
\(\mathbb{P}\left( \mathcal{Z}_{T}+\mathcal{R}_{T}\leq(1-\bar{\eta}_{0})c_{p}\right) \rightarrow1.\)
Equivalently, the threshold is asymptotically separated from the maximal
uniform stochastic error.
\end{assumption}

%....................................................

\subsubsection*{High-level conditions for exact recovery}

\begin{assumption}
\label{ass:app_dom} There exists a deterministic sequence $d_{T}>0$ such that
\(\frac{\mathcal{Z}_{T}+\mathcal{R}_{T}}{d_{T}}=o_{p}(1),\)
and, for every $S\in\mathcal{S}_{0}$,
\[
\max_{j\in S_{0}\setminus S}\Lambda_{T,j}^{\ast}(S)\geq\max_{\substack{m\notin
S_{0}\\m\notin S}}\Lambda_{T,m}^{\ast}(S)+d_{T}.
\]
In addition, at every such $S$, the strongest remaining signal crosses the
multiple-testing threshold by the same population margin:
\[
\max_{j\in S_{0}\setminus S}\Lambda_{T,j}^{\ast}(S)\geq c_{p}+d_{T}.
\]

\end{assumption}

\begin{assumption}
\label{ass:app_exhaust} Once all signals are included, the remaining
population noncentralities are asymptotically below the multiple-testing
threshold:
\(\max_{m\notin S_{0}} \Lambda_{T,m}^{*}(S_{0}) = o(c_{p}).\)

\end{assumption}

%....................................................

\subsubsection*{Primitive local conditions for dominance}

The next set of conditions give one route to Assumption~\ref{ass:app_dom}.
They avoid a deterministic bound on the full index $\boldsymbol{x}_{t}%
^{\prime}\boldsymbol{\beta}_{0}$. Instead, they impose a local population
score expansion, stable curvature, and a negligible nonlinear remainder.

Before stating the local primitive conditions, define the restricted
pseudo-true parameter for a model containing only the variables in $S$:
\begin{equation}
\boldsymbol{\gamma}^{\ast}(S)\in\arg\max_{\boldsymbol{\gamma}}\mathbb{E}%
\left[  l_{t}(\boldsymbol{x}_{S,t}^{\prime}\boldsymbol{\gamma})\right]  .
\label{eq:app_gammas}%
\end{equation}
This maximiser is assumed to be unique. Let
\begin{equation}
\eta_{S,t}^{\ast}=\boldsymbol{x}_{S,t}^{\prime}\boldsymbol{\gamma}^{\ast
}(S)\label{eq:app_etas}%
\end{equation}
and define the restricted population score residual
\begin{equation}
\xi_{t}(S)=\dot{l}_{t}(\eta_{S,t}^{\ast}). \label{eq:app_ksi}%
\end{equation}
Furthermore, let
\begin{equation}
w_{t}(S)=-\mathbb{E}\left[  \ddot{l}_{t}(\eta_{S,t}^{\ast})|\boldsymbol{x}%
_{t}\right]  \label{eq:app_weight}%
\end{equation}
denote the local curvature weight. For $j\notin S$, define $\boldsymbol{\pi
}_{j}(S)$ and $r_{j,t}(S)$ by the weighted population projection
\begin{equation}
x_{j,t}=\boldsymbol{x}_{S,t}^{\prime}\boldsymbol{\pi}_{j}(S)+r_{j,t}%
(S),\qquad\mathbb{E}\left[  w_{t}(S)\boldsymbol{x}_{S,t}r_{j,t}(S)\right]  =0.
\label{eq:app_proj}%
\end{equation}
The projection matrix $\mathbb{E}\left[  w_{t}(S)\boldsymbol{x}_{S,t}%
\boldsymbol{x}_{S,t}^{\prime}\right]  $ is assumed to be nonsingular uniformly
over the relevant conditioning sets. For non-canonical links, $w_{t}(S)$ need
not be positive automatically. Positivity and stability of the profile
curvature are imposed below as primitive conditions rather than derived from
the GLM form alone.

\begin{assumption}
\label{ass:app_local_expansion}
For $S\in\mathcal S_0$ and $j\notin S$, let
$\boldsymbol\vartheta_j^0(S)=(\boldsymbol\gamma^*(S)',0)'$ and define the
negative population Hessian at this restricted point by
\[
\boldsymbol J_j^0(S)
=
-\mathbb E\!\left[
\boldsymbol q_{j,t}(S)\boldsymbol q_{j,t}(S)'
\ddot l_t(\eta_{S,t}^*)
\right]
=
\begin{pmatrix}
\boldsymbol J_{\gamma\gamma,j}^0(S)&J_{\gamma\tau,j}^0(S)\\
J_{\tau\gamma,j}^0(S)&J_{\tau\tau,j}^0(S)
\end{pmatrix}.
\]
The restricted nuisance block is nonsingular, and the profile curvature
\[
A_j(S)
=
J_{\tau\tau,j}^0(S)
-J_{\tau\gamma,j}^0(S)
\{\boldsymbol J_{\gamma\gamma,j}^0(S)\}^{-1}
J_{\gamma\tau,j}^0(S)
=
\mathbb E\{w_t(S)r_{j,t}^2(S)\}
\]
satisfies
\(0<\underline A\leq\inf_{S\in\mathcal S_0}\inf_{j\notin S}A_j(S)
\leq\sup_{S\in\mathcal S_0}\sup_{j\notin S}A_j(S)\leq\overline A<\infty\).
For every such $(S,j)$, the profiled population score equation admits the
uniform expansion
\[
0=A_j(S)\tau_j^*(S)-B_j(S)+R_j(S),
\qquad
B_j(S)=\mathbb E\{r_{j,t}(S)\xi_t(S)\}.
\]
\end{assumption}

\begin{assumption}
\label{ass:app_local_remainder} There exists a deterministic sequence
$a_{T}=o(1)$ such that
\(\frac{\mathcal{Z}_{T}+\mathcal{R}_{T}}{\sqrt{T}\,a_{T}}=o_{p}(1)\)
and
\[
\sup_{S\in\mathcal{S}_{0}}\sup_{j\notin S}|R_{j}(S)|=o(a_{T}).
\]

\end{assumption}

\begin{assumption}
\label{ass:app_local_gap} For every $S\in\mathcal{S}_{0}$,
\[
\max_{j\in S_{0}\setminus S}\frac{|B_{j}(S)|}{A_{j}(S)\{V_{\tau\tau,j}^{\ast
}(S)\}^{1/2}}\geq\max_{\substack{m\notin S_{0}\\m\notin S}}\frac{|B_{m}%
(S)|}{A_{m}(S)\{V_{\tau\tau,m}^{\ast}(S)\}^{1/2}}+a_{T}.
\]
Moreover, the strongest remaining signal crosses the threshold with a margin:
there exists a deterministic sequence $d_{T}^{(2)}>0$ such that
\(\frac{\mathcal{Z}_{T}+\mathcal{R}_{T}}{d_{T}^{(2)}}=o_{p}(1)\)
and
\[
\max_{j\in S_{0}\setminus S}\Lambda_{T,j}^{\ast}(S)\geq c_{p}+d_{T}^{(2)}%
\]
for every $S\in\mathcal{S}_{0}$.
\end{assumption}

%....................................................

\subsubsection*{Alternative global route}

The local expansion is not the only way to verify stagewise dominance. A more
direct, but higher-level, alternative is to impose a global separation
condition on variance-normalised pseudo-true coefficients.

\begin{assumption}
[Global variance-normalised pseudo-true separation]\label{ass:app_global_gap}
There exists a deterministic sequence $b_{T}>0$ such that
\(\frac{\mathcal{Z}_{T}+\mathcal{R}_{T}}{\sqrt{T}\,b_{T}}=o_{p}(1)\)
and, for every $S\in\mathcal{S}_{0}$,
\[
\max_{j\in S_{0}\setminus S}\frac{|\tau_{j}^{\ast}(S)|}{\{V_{\tau\tau,j}%
^{\ast}(S)\}^{1/2}}\geq\max_{\substack{m\notin S_{0}\\m\notin S}}\frac
{|\tau_{m}^{\ast}(S)|}{\{V_{\tau\tau,m}^{\ast}(S)\}^{1/2}}+b_{T}.
\]
Moreover, the strongest remaining signal crosses the threshold with a margin:
there exists a deterministic sequence $d_{T}^{(2)}>0$ such that
\(\frac{\mathcal{Z}_{T}+\mathcal{R}_{T}}{d_{T}^{(2)}}=o_{p}(1)\)
and
\[
\max_{j\in S_{0}\setminus S}\Lambda_{T,j}^{\ast}(S)\geq c_{p}+d_{T}^{(2)}%
\]
for every $S\in\mathcal{S}_{0}$.
\end{assumption}

Assumptions~\ref{ass:app_smooth}--\ref{ass:app_se} yield the uniform bound
\[
 |W_{T,j}(S)-\Lambda_{T,j}^{\ast}(S)|
 \leq |Z_{T,j}(S)|+\mathcal R_T,
\]
over all admissible conditioning sets and candidates. Uniformity is needed
because the BMT path is random. The threshold condition allows the maximal
score and approximation errors to be of order $\sqrt{\ell_T}$, provided that
the multiplicity correction leaves a strict margin below $c_p$.

Assumption~\ref{ass:app_dom} is the central exact-recovery condition. Since
BMT adds one covariate at a time, it is enough that the strongest remaining
signal dominate all non-signals by a gap $d_T$ that exceeds
$\mathcal Z_T+\mathcal R_T$, and that this signal also exceed $c_p+d_T$.
Assumption~\ref{ass:app_exhaust} then requires only that the remaining
population noncentralities be $o(c_p)$ after $S_0$ has been selected; exact
nullity is sufficient but not necessary.

Assumptions~\ref{ass:app_local_expansion}--\ref{ass:app_local_gap} provide a
primitive route through
\(0=A_j(S)\tau_j^{\ast}(S)-B_j(S)+R_j(S)\), with
\(R_j(S)=o(a_T)\) uniformly. Thus the relevant score moment is
$|B_j(S)|$ after normalization by
$A_j(S)\{V_{\tau\tau,j}^{\ast}(S)\}^{1/2}$, rather than the unnormalised
moment itself. Assumption~\ref{ass:app_global_gap} is the more
direct alternative, imposing the required separation on the
variance-normalised pseudo-true coefficients themselves. These are sufficient,
not necessary, conditions; their purpose is to isolate the population
ordering and threshold crossing needed by the algorithm.

\subsection{Model-specific verification of the local score expansion}
\label{app:model_specific_local_verification}

The expansion in Assumption~\ref{ass:app_local_expansion} follows from the
profile likelihood and therefore accounts for re-optimisation of the nuisance
coefficient. Fix $S\in\mathcal S_0$ and $j\notin S$. In a neighbourhood of
zero, let $\boldsymbol\gamma_j(\tau;S)$ solve
\[
\mathbb E\!\left[
\boldsymbol x_{S,t}\dot l_t\{\boldsymbol x_{S,t}'
\boldsymbol\gamma_j(\tau;S)+\tau x_{j,t}\}
\right]=\boldsymbol0,
\qquad
\boldsymbol\gamma_j(0;S)=\boldsymbol\gamma^*(S).
\]
Uniform nonsingularity of the nuisance Hessian and the implicit function
theorem give a locally unique differentiable path. Define
\begin{equation}
Q_j(\tau;S)
=
\mathbb E\!\left[
l_t\{\boldsymbol x_{S,t}'\boldsymbol\gamma_j(\tau;S)+\tau x_{j,t}\}
\right],
\qquad
\Psi_j(\tau;S)=\dot Q_j(\tau;S).
\label{eq:app_profile_score}
\end{equation}
The envelope theorem gives
\(\Psi_j(\tau;S)=\mathbb E[x_{j,t}\dot l_t\{
\boldsymbol x_{S,t}'\boldsymbol\gamma_j(\tau;S)+\tau x_{j,t}\}]\).
At $\tau=0$, the restricted nuisance score is zero; hence, using
$x_{j,t}=\boldsymbol x_{S,t}'\boldsymbol\pi_j(S)+r_{j,t}(S)$,
\(\Psi_j(0;S)=\mathbb E[r_{j,t}(S)\xi_t(S)]=B_j(S)\).

Differentiating the nuisance first-order condition at zero gives
\[
\dot{\boldsymbol\gamma}_j(0;S)
=
-\{\mathbb E[w_t(S)\boldsymbol x_{S,t}\boldsymbol x_{S,t}']\}^{-1}
\mathbb E[w_t(S)\boldsymbol x_{S,t}x_{j,t}]
=
-\boldsymbol\pi_j(S).
\]
Consequently,
\[
-\dot\Psi_j(0;S)
=
J_{\tau\tau,j}^0(S)
-J_{\tau\gamma,j}^0(S)
\{\boldsymbol J_{\gamma\gamma,j}^0(S)\}^{-1}
J_{\gamma\tau,j}^0(S)
=
\mathbb E\{w_t(S)r_{j,t}^2(S)\}
=
A_j(S).
\]
The pseudo-true coefficient is an interior maximiser of the profile
criterion, so $\Psi_j\{\tau_j^*(S);S\}=0$. Taylor expansion around zero gives
\begin{equation}
0=A_j(S)\tau_j^*(S)-B_j(S)+R_j(S),
\qquad
R_j(S)
=
-\frac12\ddot\Psi_j\{\bar\tau_j(S);S\}\{\tau_j^*(S)\}^2,
\label{eq:app_model_specific_expansion}
\end{equation}
for some $\bar\tau_j(S)$ between zero and $\tau_j^*(S)$.

One convenient set of primitive sufficient conditions is as follows. Along
the profile paths with $|\tau|\leq Ca_T$, the nuisance Hessian is uniformly
nonsingular with uniformly bounded inverse. The index log-likelihood has three
derivatives on the corresponding index neighbourhoods and, for fixed
$k_{\max}$,
\[
\sup_{S\in\mathcal S_0}\sup_{j\notin S}
\mathbb E\!\left[
\sup_{\eta\in\mathcal N_{j,S}}|\dddot l_t(\eta)|
\{\|\boldsymbol x_{S,t}\|+|x_{j,t}|\}^3
\right]<\infty,
\]
together with the analogous second-derivative moment needed to differentiate
the nuisance first-order condition twice. These conditions imply
$\sup_{S,j,|\tau|\leq Ca_T}|\ddot\Psi_j(\tau;S)|<\infty$. If, in addition,
$\sup_{S\in\mathcal S_0}\sup_{j\notin S}|\tau_j^*(S)|=O(a_T)$ with
$a_T\to0$, then \eqref{eq:app_model_specific_expansion} gives
$\sup_{S,j}|R_j(S)|=O(a_T^2)=o(a_T)$. Uniform upper and lower bounds on
$A_j(S)$ complete the verification of Assumptions
\ref{ass:app_local_expansion} and \ref{ass:app_local_remainder}.

For the canonical Poisson model,
\(l_t(\eta)=y_t\eta-\exp(\eta)-\log(y_t!)\), so
$\ddot l_t(\eta)=\dddot l_t(\eta)=-\exp(\eta)$ and
$w_t(S)=\exp(\eta_{S,t}^*)$. The preceding conditions reduce to stable
exponential-weighted residual second moments and uniform integrability of the
corresponding third moments along the profile paths. For Gamma and related
GLMs, the same argument applies with weights determined by the relevant mean
and variance functions; the required weighted moment and curvature bounds
must be checked for the chosen link and parametrisation.

\subsection{Results}

\label{app:glm_theorems}
%----------------------------------------------------

We now state the main results. The first two results are approximating-model
results; they do not require any dominance condition. The third result gives
exact recovery under the relaxed stagewise dominance condition. The fourth and
fifth results show how exact recovery follows from the local and global
primitive conditions, respectively. The final result gives oracle inference.

\begin{lemma}
\label{lem:app_wald} Suppose Assumptions~\ref{ass:app_smooth},
\ref{ass:app_compact}, \ref{ass:app_hessian_ulln}, \ref{ass:app_info}, and
\ref{ass:app_score} hold. Then, uniformly over $S\in\mathcal{S}_{k_{\max}}$
and $j\notin S$,
\begin{equation}
\left\vert W_{T,j}(S)-\Lambda_{T,j}^{\ast}(S)\right\vert \leq|Z_{T,j}%
(S)|+\mathcal{R}_{T}, \label{eq:app_wald_expansion}%
\end{equation}
with probability tending to one.
\end{lemma}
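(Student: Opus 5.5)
The bound \eqref{eq:app_wald_expansion} can be read off from Assumption~\ref{ass:app_score} with a short algebraic decomposition; no further probabilistic argument is needed. Put $v=\{V_{\tau\tau,j}^{\ast}(S)\}^{1/2}$ and $\widehat{\mathrm{se}}_j(S)=T^{-1/2}v(1+\delta_{T,j}(S))$, so that $\max_{S,j}|\delta_{T,j}(S)|=\Delta_T=o_p(1)$. Define the normalised estimator $U_{T,j}(S)=\sqrt T\,\widehat\tau_j(S)/v$. Then
\[
W_{T,j}(S)=\frac{|U_{T,j}(S)|}{1+\delta_{T,j}(S)}
\qquad\text{and}\qquad
\Lambda_{T,j}^{\ast}(S)=\frac{\sqrt T\,|\tau_j^{\ast}(S)|}{v}.
\]

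The proof splits $W_{T,j}(S)-\Lambda_{T,j}^{\ast}(S)$ into an estimation part and a standard-error part.

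\emph{Estimation part.} By the reverse triangle inequality, $\bigl||U_{T,j}(S)|-\Lambda_{T,j}^{\ast}(S)\bigr|\le |\sqrt T\{\widehat\tau_j(S)-\tau_j^{\ast}(S)\}/v|$. Assumption~\ref{ass:app_score} bounds the right-hand side by $|Z_{T,j}(S)|+\rho_T$, uniformly over $S\in\mathcal S_{k_{\max}}$ and $j\notin S$, on an event $E_{1T}$ with $\Pr(E_{1T})\to1$. Uniformity matters here because the event does not depend on $(S,j)$.

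\emph{Standard-error part.} We have $\bigl|W_{T,j}(S)-|U_{T,j}(S)|\bigr|=|U_{T,j}(S)|\,|\delta_{T,j}(S)|/(1+\delta_{T,j}(S))$. Let $E_{2T}=\{\Delta_T\le 1/2\}$, which has probability tending to one. On $E_{2T}$ the factor $1/(1+\delta_{T,j}(S))$ is at most $2$, so this term is at most $2\Delta_T|U_{T,j}(S)|$. On $E_{1T}$, $|U_{T,j}(S)|\le \Lambda_{T,j}^{\ast}(S)+|Z_{T,j}(S)|+\rho_T\le \mathcal L_T+\mathcal Z_T+\rho_T$. The standard-error term is therefore at most $2\Delta_T(\mathcal L_T+\mathcal Z_T)+2\Delta_T\rho_T$.

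Adding the two parts on $E_{1T}\cap E_{2T}$ gives
\[
|W_{T,j}(S)-\Lambda_{T,j}^{\ast}(S)|\le |Z_{T,j}(S)|+\rho_T+2\Delta_T\rho_T+2\Delta_T(\mathcal L_T+\mathcal Z_T).
\]
Since $\Delta_T\le 1/2$ on $E_{2T}$, we have $\rho_T+2\Delta_T\rho_T\le 2\rho_T$. The right-hand side is therefore at most $|Z_{T,j}(S)|+\mathcal R_T$, and $\mathcal R_T$ was defined with exactly this slack. Because $\Pr(E_{1T}\cap E_{2T})\to1$, the bound holds uniformly with probability tending to one.

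The main point of care is bookkeeping, not a hard estimate. The constant $1/(1+\delta)$ must be controlled on a uniform event, and the factor $|U_{T,j}(S)|$, which may diverge, must be bounded by $\mathcal L_T+\mathcal Z_T$ rather than by a constant. This is why the standard-error error enters multiplied by the noncentrality, as the definition of $\mathcal R_T$ anticipates. Assumptions~\ref{ass:app_smooth}--\ref{ass:app_info} are used only implicitly, to justify that $V_{\tau\tau,j}^{\ast}(S)$ is bounded away from zero (so the normalisations are well defined) and that the linear representation in Assumption~\ref{ass:app_score} is meaningful.
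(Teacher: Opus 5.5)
Your proof is correct and takes the same route as the paper. The paper works on the joint event where the uniform bounds of Assumption~\ref{ass:app_score} hold and $\Delta_T\le 1/2$. There it writes $W_{T,j}(S)=|A_{T,j}(S)+Z_{T,j}(S)+R_{T,j}(S)|/(1+\delta_{T,j}(S))$ and combines $|(1+\delta)^{-1}-1|\le 2|\delta|$ with the reverse triangle inequality, which is exactly your split through $|U_{T,j}(S)|$. Your explicit use of $\Delta_T\le 1/2$ to absorb $2\Delta_T\rho_T$ into $\rho_T$ is a step the paper's final display takes without comment.
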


\begin{theorem}
\label{thm:app_no_null} Suppose Assumptions~\ref{ass:app_smooth},
\ref{ass:app_compact}, \ref{ass:app_hessian_ulln}, \ref{ass:app_info},
\ref{ass:app_score}, and \ref{ass:app_se} hold. Then,
\[
\mathbb{P}\left(  \exists\,S\in\mathcal{S}_{k_{\max}},\ \exists\,j\notin
S:\Lambda_{T,j}^{\ast}(S)=0,\ W_{T,j}(S)\geq c_{p}\right)  \rightarrow0.
\]
Consequently, along the BMT path, no candidate with zero population
noncentrality at the stage where it is tested is selected, with probability
tending to one.
\end{theorem}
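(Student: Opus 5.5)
The plan is to combine Lemma~\ref{lem:app_wald} with the threshold separation in Assumption~\ref{ass:app_se}, mirroring the proof of Theorem~\ref{thm:approx_stats}(c). First define the event
\[
\mathcal G_T=\Bigl\{\,|W_{T,j}(S)-\Lambda_{T,j}^{\ast}(S)|\le|Z_{T,j}(S)|+\mathcal R_T\ \text{for all }S\in\mathcal S_{k_{\max}},\ j\notin S\,\Bigr\}.
\]
Lemma~\ref{lem:app_wald} gives $\mathbb P(\mathcal G_T)\to1$, and uniformity over the class matters here. Next define
\[
\mathcal H_T=\{\mathcal Z_T+\mathcal R_T\le(1-\bar\eta_0)c_p\},
\]
which has probability tending to one by Assumption~\ref{ass:app_se}. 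Hence $\mathbb P(\mathcal G_T\cap\mathcal H_T)\to1$.

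On $\mathcal G_T\cap\mathcal H_T$, take any admissible pair $(S,j)$ with $\Lambda_{T,j}^{\ast}(S)=0$. Then
\[
W_{T,j}(S)\le|Z_{T,j}(S)|+\mathcal R_T\le\mathcal Z_T+\mathcal R_T\le(1-\bar\eta_0)c_p<c_p .
\]
The last inequality uses $c_p>0$, which holds for all large $T$ since $c_p\to\infty$. So the event in the statement is contained in the complement of $\mathcal G_T\cap\mathcal H_T$, and its probability tends to zero.

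For the ``consequently'' part, note two facts. Every conditioning set generated along the BMT path has at most $k_{\max}$ elements, so $(S^{(\ell)},j)$ lies in the class $\mathcal S_{k_{\max}}$ (with fixed controls appended as described). Also, a covariate can be selected only if its statistic is at least $c_p$. On the same high-probability event, therefore, no candidate with $\Lambda_{T,j}^{\ast}(S^{(\ell)})=0$ can be selected at any stage $\ell$. This holds simultaneously over the random path, precisely because the bound is uniform over the deterministic class containing all possible paths.

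The only delicate point is making sure Lemma~\ref{lem:app_wald} holds simultaneously over all pairs on a single event, rather than pairwise. I would check that its proof delivers the bound on one event of probability tending to one. This is the case because $\rho_T$ and $\Delta_T$ in Assumption~\ref{ass:app_score} are already defined as uniform maxima. If the lemma were only pointwise, I would re-derive the bound directly from Assumption~\ref{ass:app_score} through the triangle inequality
\[
\Bigl|\tfrac{\widehat\tau}{\widehat{\rm se}}-\tfrac{\sqrt T\tau^\ast}{V^{1/2}}\Bigr|\le\tfrac{1}{1-\Delta_T}\bigl(|Z_{T,j}(S)|+\rho_T\bigr)+\tfrac{\Delta_T}{1-\Delta_T}\Lambda_{T,j}^{\ast}(S),
\]
valid when $\Delta_T<1$, and then absorb the terms into $\mathcal R_T$ for large $T$.
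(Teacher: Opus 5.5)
Your proposal is correct and takes essentially the same route as the paper. Both arguments intersect the uniform high-probability event of Lemma~\ref{lem:app_wald} with the threshold-separation event of Assumption~\ref{ass:app_se}. On that intersection every population-null statistic is bounded by $\mathcal Z_T+\mathcal R_T\le(1-\bar\eta_0)c_p<c_p$, and the result transfers to the BMT path because every visited conditioning set lies in $\mathcal S_{k_{\max}}$.

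Your uniformity concern is resolved the way you expected: the paper's proof of Lemma~\ref{lem:app_wald} builds a single joint event on which the bound holds for all admissible pairs at once.
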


\begin{theorem}
[Stopping under exhaustion]\label{thm:app_stop} Suppose
Assumptions~\ref{ass:app_smooth}, \ref{ass:app_compact},
\ref{ass:app_hessian_ulln}, \ref{ass:app_info}, \ref{ass:app_score}, and
\ref{ass:app_se} hold. Let
\[
M_{\ell,T}^{*} = \max_{j\notin S^{(\ell)}} \Lambda_{T,j}^{*}(S^{(\ell)}).
\]
If there exists a stage $\ell^{\dagger}\le k_{\max}$ such that
\(M_{\ell^{\dagger},T}^{*}=o_{p}(c_{p}),\)
then
\(\mathbb{P}(\widehat{L}\le\ell^{\dagger})\to1,\)
where $\widehat L$ is the number of successful selections before stopping
(and equals $k_{\max}$ if the stage cap is reached). For a possibly random
$\ell^\dagger$, the path is extended after stopping by setting
$S^{(\ell)}=S^{(\widehat L)}$ for $\ell\geq\widehat L$.
\end{theorem}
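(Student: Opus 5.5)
We prove Theorem~\ref{thm:app_stop} along the same lines as Theorem~\ref{thm:approx_stop}(ii), with the appendix's error bound replacing $\mathcal E_T$.

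\emph{Step 1 (high-probability event).} Let $E_T$ be the intersection of two events:
\begin{itemize}
\item the event on which the bound \eqref{eq:app_wald_expansion} of Lemma~\ref{lem:app_wald} holds simultaneously for all $S\in\mathcal S_{k_{\max}}$ and $j\notin S$;
\item the event $\{\mathcal Z_T+\mathcal R_T\le(1-\bar\eta_0)c_p\}$ from Assumption~\ref{ass:app_se}.
\end{itemize}
Both events have probability tending to one, so $\mathbb P(E_T)\to1$.

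\emph{Step 2 (reduction to a deterministic class).} The conditioning set $S^{(\ell^\dagger)}$ is random, because it depends on the path, and $\ell^\dagger$ may itself be random. However, $S^{(\ell^\dagger)}$ always belongs to $\mathcal S_{k_{\max}}$, since at most $k_{\max}$ selections are made; this still holds after the extension $S^{(\ell)}=S^{(\widehat L)}$ for $\ell\ge\widehat L$. The bound on $E_T$ is uniform over the whole deterministic class. Therefore, on $E_T$,
\[
\max_{j\notin S^{(\ell^\dagger)}}W_{T,j}(S^{(\ell^\dagger)})\le M^*_{\ell^\dagger,T}+\mathcal Z_T+\mathcal R_T\le M^*_{\ell^\dagger,T}+(1-\bar\eta_0)c_p,
\]
using $|Z_{T,j}(S)|\le\mathcal Z_T$. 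This uniformity is the only delicate point: pointwise bounds at a fixed $S$ would not suffice for a data-dependent path. Lemma~\ref{lem:app_wald} is stated uniformly precisely so that this step goes through.

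\emph{Step 3 (exhaustion).} The hypothesis $M^*_{\ell^\dagger,T}=o_p(c_p)$ gives $\mathbb P(M^*_{\ell^\dagger,T}\le\bar\eta_0c_p/2)\to1$. On the intersection of this event with $E_T$, every statistic at stage $\ell^\dagger$ satisfies
\[
W_{T,j}(S^{(\ell^\dagger)})\le(1-\bar\eta_0/2)c_p<c_p,
\]
so the passing set $\mathcal J^{(\ell^\dagger)}$ is empty.

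\emph{Step 4 (conclusion).} There are two cases.
\begin{itemize}
\item If the algorithm has not stopped before $\ell^\dagger$, it stops at $\ell^\dagger$, so $\widehat L=\ell^\dagger$.
\item If it has already stopped, then $\widehat L<\ell^\dagger$. The extended path convention makes this case consistent: the statement at $\ell^\dagger$ then concerns the terminal set, and nothing further needs to be checked.
\end{itemize}
In either case $\widehat L\le\ell^\dagger$ on an event of probability tending to one.
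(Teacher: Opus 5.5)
Your proposal is correct and follows essentially the same argument as the paper. The paper intersects three high-probability events: the uniform event of Lemma~\ref{lem:app_wald}, the threshold-separation event of Assumption~\ref{ass:app_se}, and $\{M^*_{\ell^\dagger,T}\le\bar\eta_0c_p/2\}$. It then bounds the maximal stage-$\ell^\dagger$ statistic by $(1-\bar\eta_0/2)c_p<c_p$, exactly as you do; your explicit case split in Step~4 just spells out the paper's concluding sentence.
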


\begin{theorem}
[Exact recovery under relaxed stagewise dominance]\label{thm:app_exact}
Suppose Assumptions~\ref{ass:app_smooth}, \ref{ass:app_compact},
\ref{ass:app_hessian_ulln}, \ref{ass:app_info}, \ref{ass:app_score},
\ref{ass:app_se}, \ref{ass:app_dom}, and \ref{ass:app_exhaust} hold. Suppose
also that $k_{\max}\ge k$. Then
\(\mathbb{P}(\widehat{S}=S_{0})\to1.\)
Here $\widehat{S}$ denotes the selected set of non-always-in covariates. If
always-in controls are present, the final selected model is the union of those
controls and $S_{0}$.
\end{theorem}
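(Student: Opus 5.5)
The argument is an induction over stages on a single high-probability event, followed by the stopping result of Theorem~\ref{thm:app_stop}; it parallels the proof of Theorem~\ref{thm:fixedk_recovery}. First I will use Lemma~\ref{lem:app_wald} to obtain, uniformly over $S\in\mathcal S_{k_{\max}}$ and $j\notin S$ and with probability tending to one,
\[
|W_{T,j}(S)-\Lambda_{T,j}^*(S)|\le \mathcal Z_T+\mathcal R_T .
\]
This bound holds because $|Z_{T,j}(S)|\le\mathcal Z_T$ by the definition of $\mathcal Z_T$. Then I will define
\[
E_T=\{\mathcal Z_T+\mathcal R_T\le d_T/4\}\cap\{\mathcal Z_T+\mathcal R_T\le(1-\bar\eta_0)c_p\}
\]
and intersect it with the event of Lemma~\ref{lem:app_wald}. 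By Assumption~\ref{ass:app_dom}, which gives $(\mathcal Z_T+\mathcal R_T)/d_T=o_p(1)$, and by Assumption~\ref{ass:app_se}, this intersection has probability tending to one. The rest of the argument is deterministic on this event. Uniformity is essential here, since the conditioning sets along the path are random; they always lie in $\mathcal S_{k_{\max}}$ because $k\le k_{\max}$.

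For the induction, the hypothesis at stage $\ell<k$ is $S^{(\ell)}\subseteq S_0$ with $|S^{(\ell)}|=\ell$, so that $S^{(\ell)}\in\mathcal S_0$. Let $j^\star$ be the strongest remaining signal, i.e.\ the maximiser of $\Lambda^*_{T,j}(S^{(\ell)})$ over $j\in S_0\setminus S^{(\ell)}$. Assumption~\ref{ass:app_dom} gives $W_{T,j^\star}\ge c_p+d_T-d_T/4>c_p$, so the passing set is nonempty. For every non-signal $m\notin S_0$, $m\notin S^{(\ell)}$, it gives
\[
W_{T,m}\le \Lambda^*_{T,m}+d_T/4\le \Lambda^*_{T,j^\star}-3d_T/4\le W_{T,j^\star}-d_T/2 .
\]
Hence the stagewise maximiser cannot be a non-signal. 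It may be a signal other than $j^\star$, which is harmless: only the relaxed max-form condition is needed, exactly as in the Remark after Assumption~\ref{ass:A4}. Thus $j^{(\ell)}\in S_0\setminus S^{(\ell)}$, the induction closes, and $S^{(k)}=S_0$ on the event.

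For stopping, Assumption~\ref{ass:app_exhaust} gives $M^*_{k,T}=\max_{m\notin S_0}\Lambda^*_{T,m}(S_0)=o(c_p)$ on the event $S^{(k)}=S_0$. If $k=k_{\max}$, the cap ends the procedure immediately. Otherwise I will apply Theorem~\ref{thm:app_stop} with $\ell^\dagger=k$, or argue directly: on $E_T$, for large $T$,
\[
\max_m W_{T,m}(S_0)\le o(c_p)+(1-\bar\eta_0)c_p<c_p ,
\]
so the passing set is empty, the procedure stops, and $\widehat S=S_0$. Always-in controls are carried along in every conditioning set and change nothing.

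The main obstacle is that Theorem~\ref{thm:app_stop} is stated with a deterministic or random $\ell^\dagger$, whereas here $M^*_{k,T}$ equals $\max_m\Lambda^*_{T,m}(S_0)$ only on the recovery event. I will handle this by working on the intersection of events, where the direct bound above suffices, rather than invoking Theorem~\ref{thm:app_stop} verbatim.
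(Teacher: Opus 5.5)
Your proposal is correct and follows essentially the same route as the paper's proof. Both arguments build a single high-probability event from Lemma~\ref{lem:app_wald} and the $d_T/4$ margin, run the induction through the strongest remaining signal, and then stop via a direct threshold bound at $S_0$ using Assumptions~\ref{ass:app_exhaust} and \ref{ass:app_se}, rather than appealing verbatim to Theorem~\ref{thm:app_stop}. That appeal would in fact also have been legitimate: $M^*_{k,T}$ equals $\max_{m\notin S_0}\Lambda^*_{T,m}(S_0)$ on an event whose probability tends to one, so it is $o_p(c_p)$ and the obstacle you raise does not arise.
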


\begin{theorem}
\label{thm:app_local} Suppose Assumptions~\ref{ass:app_smooth},
\ref{ass:app_compact}, \ref{ass:app_hessian_ulln}, \ref{ass:app_info},
\ref{ass:app_score}, \ref{ass:app_se}, \ref{ass:app_local_expansion},
\ref{ass:app_local_remainder}, \ref{ass:app_local_gap}, and
\ref{ass:app_exhaust} hold and $k_{\max}\geq k$. Then
\(\mathbb{P}(\widehat{S}=S_{0})\to1.\)

\end{theorem}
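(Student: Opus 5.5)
The strategy is to reduce Theorem~\ref{thm:app_local} to Theorem~\ref{thm:app_exact}. That requires showing that Assumptions~\ref{ass:app_local_expansion}--\ref{ass:app_local_gap} imply the high-level dominance condition of Assumption~\ref{ass:app_dom} with a suitable deterministic margin $d_T$. All the other hypotheses of Theorem~\ref{thm:app_exact}, including Assumption~\ref{ass:app_exhaust} and $k_{\max}\ge k$, are imposed directly.

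The first step converts the profiled score expansion into an approximation of the variance-normalised pseudo-true coefficients. Assumption~\ref{ass:app_local_expansion} gives
\[
\tau_j^*(S)=\frac{B_j(S)-R_j(S)}{A_j(S)}.
\]
Since $A_j(S)\ge\underline A$ and $V_{\tau\tau,j}^*(S)\ge\underline v$ (Assumption~\ref{ass:app_info}), we obtain, uniformly over $S\in\mathcal S_0$ and $j\notin S$,
\[
\left|\frac{|\tau_j^*(S)|}{\{V_{\tau\tau,j}^*(S)\}^{1/2}}-\frac{|B_j(S)|}{A_j(S)\{V_{\tau\tau,j}^*(S)\}^{1/2}}\right|\le \frac{|R_j(S)|}{\underline A\,\underline v^{1/2}}=o(a_T),
\]
where the last equality uses Assumption~\ref{ass:app_local_remainder}.

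The second step transfers the gap. Insert this approximation into the gap inequality of Assumption~\ref{ass:app_local_gap}: the maximum over the remaining signals is at least the maximum over non-signals plus $a_T-2\,o(a_T)\ge a_T/2$ for large $T$. Multiplying by $\sqrt T$ shows that
\[
\max_{j\in S_0\setminus S}\Lambda_{T,j}^*(S)\ge\max_{m\notin S_0,\,m\notin S}\Lambda_{T,m}^*(S)+\sqrt T a_T/2
\]
uniformly over $S\in\mathcal S_0$. Now set
\[
d_T=\min\{\sqrt T a_T/2,\ d_T^{(2)}\}.
\]
Both $(\mathcal Z_T+\mathcal R_T)/(\sqrt T a_T)=o_p(1)$ (Assumption~\ref{ass:app_local_remainder}) and $(\mathcal Z_T+\mathcal R_T)/d_T^{(2)}=o_p(1)$ hold, so $(\mathcal Z_T+\mathcal R_T)/d_T=o_p(1)$. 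Because $d_T\le\sqrt T a_T/2$, the dominance gap holds with $d_T$. Because $d_T\le d_T^{(2)}$, the threshold-crossing requirement $\max_{j}\Lambda_{T,j}^*(S)\ge c_p+d_T$ follows from the second part of Assumption~\ref{ass:app_local_gap}. Hence Assumption~\ref{ass:app_dom} holds, and Theorem~\ref{thm:app_exact} yields $\mathbb P(\widehat S=S_0)\to1$.

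If one prefers to avoid citing Theorem~\ref{thm:app_exact}, its argument can be replayed directly. Work on the event $\{\mathcal Z_T+\mathcal R_T\le d_T/4\}$, whose probability tends to one. Lemma~\ref{lem:app_wald} gives $|W_{T,j}(S)-\Lambda^*_{T,j}(S)|\le d_T/4$ uniformly. Induction over stages then shows that the strongest remaining signal beats every non-signal by $d_T/2$ and exceeds $c_p$. Finally, Assumption~\ref{ass:app_exhaust} combined with Theorem~\ref{thm:app_stop} at $\ell^\dagger=k$ gives stopping.

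The main obstacle is the first step, specifically keeping the remainder uniform and correctly normalised: the $o(a_T)$ bound on $R_j(S)$ must survive division by $A_j(S)\{V^*_{\tau\tau,j}(S)\}^{1/2}$. This needs the uniform lower bounds $\underline A$ and $\underline v$. It also needs the observation that $\big||x|-|y|\big|\le|x-y|$, so the absolute values cause no sign issues. A smaller point concerns the two different margins $\sqrt T a_T$ and $d_T^{(2)}$, which are reconciled by taking their minimum as $d_T$.
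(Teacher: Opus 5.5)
Your proposal is correct and follows the paper's proof essentially step for step. You convert the profiled expansion into a uniform $o(a_T)$ approximation of the variance-normalised coefficients using the lower bounds on $A_j(S)$ and $V^*_{\tau\tau,j}(S)$, transfer the gap to a noncentrality margin $\sqrt T a_T/2$, and set $d_T=\min\{\sqrt T a_T/2,\,d_T^{(2)}\}$ to verify Assumption~\ref{ass:app_dom}, after which Theorem~\ref{thm:app_exact} finishes the argument; your optional direct replay is simply that theorem's proof inlined.
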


\begin{theorem}
\label{thm:app_global} Suppose Assumptions~\ref{ass:app_smooth},
\ref{ass:app_compact}, \ref{ass:app_hessian_ulln}, \ref{ass:app_info},
\ref{ass:app_score}, \ref{ass:app_se}, \ref{ass:app_global_gap}, and
\ref{ass:app_exhaust} hold and $k_{\max}\geq k$. Then
\(\mathbb{P}(\widehat{S}=S_{0})\to1.\)

\end{theorem}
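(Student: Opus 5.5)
The plan is to reduce Theorem~\ref{thm:app_global} to Theorem~\ref{thm:app_exact}. All the assumptions of that theorem are already in force except Assumption~\ref{ass:app_dom}. So the only task is to show that Assumption~\ref{ass:app_global_gap} implies Assumption~\ref{ass:app_dom} for a suitable deterministic sequence $d_T$. This mirrors the main-text argument in Lemma~\ref{lem:GM2_implies_A4}.

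\emph{Dominance gap.} Multiply the variance-normalised separation in Assumption~\ref{ass:app_global_gap} by $\sqrt T$. Recalling \eqref{eq:app_lambda}, $\Lambda_{T,j}^*(S)=\sqrt T|\tau_j^*(S)|/\{V_{\tau\tau,j}^*(S)\}^{1/2}$. Hence, for every $S\in\mathcal S_0$,
\[
\max_{j\in S_0\setminus S}\Lambda_{T,j}^*(S)\ge\max_{m\notin S_0,\,m\notin S}\Lambda_{T,m}^*(S)+\sqrt T\,b_T .
\]

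\emph{Threshold margin.} Assumption~\ref{ass:app_global_gap} separately provides a sequence $d_T^{(2)}$ with $\max_{j}\Lambda^*_{T,j}(S)\ge c_p+d_T^{(2)}$ and $(\mathcal Z_T+\mathcal R_T)/d_T^{(2)}=o_p(1)$.

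\emph{Choice of $d_T$.} I will set $d_T=\min\{\sqrt T\,b_T,\ d_T^{(2)}\}$, which is deterministic and positive. Both displayed inequalities then hold with $d_T$ replacing the larger margin. For the error condition, note that $(\mathcal Z_T+\mathcal R_T)/d_T\le(\mathcal Z_T+\mathcal R_T)/(\sqrt T b_T)+(\mathcal Z_T+\mathcal R_T)/d_T^{(2)}$, and both terms on the right are $o_p(1)$ by Assumption~\ref{ass:app_global_gap}. Hence Assumption~\ref{ass:app_dom} holds.

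\emph{Conclusion.} With Assumption~\ref{ass:app_dom} verified, Theorem~\ref{thm:app_exact} applies directly under Assumptions \ref{ass:app_smooth}--\ref{ass:app_se} and \ref{ass:app_exhaust} with $k_{\max}\ge k$, giving $\mathbb P(\widehat S=S_0)\to1$.

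For completeness, I would recall how Theorem~\ref{thm:app_exact} itself works, since any subtlety would sit there. Lemma~\ref{lem:app_wald} gives, with probability tending to one, $|W_{T,j}(S)-\Lambda^*_{T,j}(S)|\le\mathcal Z_T+\mathcal R_T\le d_T/4$ uniformly over all $S$ and $j$. By induction over stages $\ell<k$, the strongest remaining signal then beats every non-signal by at least $d_T/2$ and exceeds $c_p$, so a new signal is selected at each stage. After $k$ stages, Assumption~\ref{ass:app_exhaust} and Theorem~\ref{thm:app_stop} force the procedure to stop.

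The step needing the most care is not the algebra but the uniformity of the Wald approximation along the random path. This is already handled by Lemma~\ref{lem:app_wald}, because its bound holds simultaneously over all of $\mathcal S_{k_{\max}}$, and the induction only visits sets in $\mathcal S_0\subseteq\mathcal S_{k_{\max}}$.
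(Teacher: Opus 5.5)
Your proposal is correct and follows the paper's own proof essentially step for step. Like the paper, you take $d_T=\min\{\sqrt T\,b_T,\ d_T^{(2)}\}$ and bound $(\mathcal Z_T+\mathcal R_T)/d_T$ by the sum of the two ratios. This verifies Assumption~\ref{ass:app_dom}, and the result then follows from Theorem~\ref{thm:app_exact} together with Assumption~\ref{ass:app_exhaust}.
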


For the oracle result, let $\widehat{\boldsymbol\beta}_{\mathrm{or}}$ denote the
infeasible MLE or quasi-MLE of the subvector $\boldsymbol{\beta}_{0,S_{0}}$ of
$\boldsymbol{\beta}_{0}$ corresponding to the true active set $S_{0}$. It is
computed using the same maximisation and deterministic tie-breaking rule as
the post-BMT refit. Define the oracle score by
\[
\boldsymbol{s}_{t}(\boldsymbol{\beta}_{0,S_{0}})=\boldsymbol{x}_{S_{0},t}%
\dot{l}_{t}(\boldsymbol{x}_{S_{0},t}^{\prime}\boldsymbol{\beta}_{0,S_{0}}).
\]
If serial dependence is allowed, let
\[
\boldsymbol{\Omega}_{0}=\sum_{h=-\infty}^{\infty}\mathbb{E}\left[
\boldsymbol{s}_{t}(\boldsymbol{\beta}_{0,S_{0}})\boldsymbol{s}_{t-h}%
(\boldsymbol{\beta}_{0,S_{0}})^{\prime}\right]  ,
\]
whenever the series converges to a nonsingular sum. Under independent
observations or an uncorrelated martingale-difference score, this reduces to
\[
\boldsymbol{\Omega}_{0}=\mathbb{E}\left[  \boldsymbol{s}_{t}(\boldsymbol{\beta
}_{0,S_{0}})\boldsymbol{s}_{t}(\boldsymbol{\beta}_{0,S_{0}})^{\prime}\right]
.
\]

\begin{assumption}
[Oracle regularity]\label{ass:app_oracle_regularity} The oracle score
satisfies
\[
T^{-1/2}\sum_{t=1}^{T}\boldsymbol{s}_{t}(\boldsymbol{\beta}_{0,S_{0}%
})\Rightarrow N(\boldsymbol{0},\boldsymbol{\Omega}_{0}).
\]
The oracle Hessian
\[
\boldsymbol{J}_{0}=-\mathbb{E}\left[
\boldsymbol{x}_{S_{0},t}\boldsymbol{x}_{S_{0},t}^{\prime}
\ddot{l}_{t}(\boldsymbol{x}_{S_{0},t}^{\prime}\boldsymbol{\beta}_{0,S_{0}})
\right]
\]
is nonsingular, and the oracle estimator admits the expansion
\[
\sqrt{T}(\widehat{\boldsymbol{\beta}}_{\mathrm{or}}-\boldsymbol{\beta
}_{0,S_{0}})=\boldsymbol{J}_{0}^{-1}T^{-1/2}\sum_{t=1}^{T}\boldsymbol{s}%
_{t}(\boldsymbol{\beta}_{0,S_{0}})+o_{p}(1).
\]

\end{assumption}

\begin{theorem}
[Oracle inference]\label{thm:app_oracle} Suppose
Assumption~\ref{ass:app_oracle_regularity} holds. Suppose also that
\(\mathbb{P}(\widehat{S}=S_{0})\rightarrow1.\)
Define the post-BMT estimator as the MLE or quasi-MLE on the selected
support $\widehat S$. Embed this estimator in $\mathbb R^p$ by setting
coefficients outside $\widehat S$ equal to zero, and write
$\widehat{\boldsymbol\beta}_{\widehat S,S_0}$ for its $S_0$ subvector.
Then,
\[
\sqrt{T}(\widehat{\boldsymbol{\beta}}_{\widehat{S},S_0}-\boldsymbol{\beta
}_{0,S_{0}})\Rightarrow N(\boldsymbol{0},\boldsymbol{J}_{0}^{-1}%
\boldsymbol{\Omega}_{0}\boldsymbol{J}_{0}^{-1}).
\]
Thus, the post-BMT estimator has the same limiting distribution as the oracle
estimator that knows $S_{0}$ in advance.
\end{theorem}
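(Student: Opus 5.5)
The plan is to follow the main-text proof of Theorem~\ref{thm:oracle} almost verbatim. On the event $\{\widehat S=S_0\}$ the post-BMT refit coincides with the oracle estimator, and the oracle limit is supplied directly by Assumption~\ref{ass:app_oracle_regularity}.

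First, I will establish the oracle limit. Assumption~\ref{ass:app_oracle_regularity} gives
\[
\sqrt T(\widehat{\boldsymbol\beta}_{\mathrm{or}}-\boldsymbol\beta_{0,S_0})
=\boldsymbol J_0^{-1}T^{-1/2}\sum_{t=1}^T\boldsymbol s_t(\boldsymbol\beta_{0,S_0})+o_p(1).
\]
Since $\boldsymbol J_0$ is nonsingular and fixed, the continuous mapping theorem applied to the score CLT yields convergence to $N(\boldsymbol0,\boldsymbol J_0^{-1}\boldsymbol\Omega_0\boldsymbol J_0^{-1})$. Slutsky's theorem absorbs the $o_p(1)$ term.

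Second, I will establish coincidence on the selection event. The post-BMT estimator is the (quasi-) MLE on $\widehat S$, computed with the same maximisation and deterministic tie-breaking rule as the oracle. On $\{\widehat S=S_0\}$ the two optimisation problems are therefore identical, so $\widehat{\boldsymbol\beta}_{\widehat S,S_0}=\widehat{\boldsymbol\beta}_{\mathrm{or}}$ exactly. If always-in controls are present, the same identity holds on the corresponding subvector, since both fits contain the controls. For any $\varepsilon>0$ it follows that
\[
\Pr\bigl(\sqrt T\|\widehat{\boldsymbol\beta}_{\widehat S,S_0}-\widehat{\boldsymbol\beta}_{\mathrm{or}}\|>\varepsilon\bigr)\le\Pr(\widehat S\neq S_0)\to0.
\]
Hence the scaled difference is $o_p(1)$, and a second application of Slutsky's theorem transfers the oracle limit to the post-BMT estimator.

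The only delicate point is measurability and well-definedness of the post-BMT estimator off the selection event. Its $S_0$ subvector there may include zeros or coefficients from a misspecified fit. This is harmless, because that event has vanishing probability and no moment conditions are required for convergence in distribution. I therefore do not expect a genuine obstacle; the substantive work lies in establishing $\Pr(\widehat S=S_0)\to1$, which is an assumption of the statement (delivered, for example, by Theorems~\ref{thm:app_exact}--\ref{thm:app_global}).
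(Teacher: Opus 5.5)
Your proposal is correct and follows the paper's own proof essentially step for step. The oracle limit comes from the asymptotic linear expansion in Assumption~\ref{ass:app_oracle_regularity} combined with the score CLT via Slutsky. The $S_0$ subvector of the post-BMT refit equals $\widehat{\boldsymbol\beta}_{\mathrm{or}}$ on $\{\widehat S=S_0\}$ (by the shared maximisation and tie-breaking rule), so the $\sqrt T$-scaled difference is bounded in probability by $\Pr(\widehat S\neq S_0)\to0$, and a second application of Slutsky transfers the limit.
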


%----------------------------------------------------

\subsection{Proofs}
\label{app:glm_proofs}

\begin{proof}[Proof of Lemma~\ref{lem:app_wald}]
Fix $S\in\mathcal S_{k_{\max}}$ and $j\notin S$, and put
\(A_{T,j}(S)=\sqrt T\,\tau_j^{\ast}(S)/\{V_{\tau\tau,j}^{\ast}(S)\}^{1/2}\),
so that $\Lambda_{T,j}^{\ast}(S)=|A_{T,j}(S)|$.  Define $R_{T,j}(S)$ by
\[
 \frac{\sqrt T\{\widehat\tau_j(S)-\tau_j^{\ast}(S)\}}
 {\{V_{\tau\tau,j}^{\ast}(S)\}^{1/2}}
 =Z_{T,j}(S)+R_{T,j}(S).
\]
Assumption~\ref{ass:app_score} gives
$\sup_{S,j}|R_{T,j}(S)|\leq\rho_T$ with probability tending to one. Write
\(\widehat{\mathrm{se}}_j(S)=T^{-1/2}\{V_{\tau\tau,j}^{\ast}(S)\}^{1/2}
\{1+\delta_{T,j}(S)\}\), where
$\sup_{S,j}|\delta_{T,j}(S)|\leq\Delta_T=o_p(1)$. Let $H_T$ be the joint
event on which both uniform bounds hold and $\Delta_T\leq1/2$; then
$\mathbb P(H_T)\to1$. On $H_T$,
\[
 \frac{\widehat\tau_j(S)}{\widehat{\mathrm{se}}_j(S)}
 =\frac{A_{T,j}(S)+Z_{T,j}(S)+R_{T,j}(S)}{1+\delta_{T,j}(S)}.
\]
Moreover, $1+\delta_{T,j}(S)>0$ and
$|(1+\delta)^{-1}-1|\leq2|\delta|$.  The reverse triangle inequality therefore yields
\begin{align*}
 |W_{T,j}(S)-\Lambda_{T,j}^{\ast}(S)|
 &\leq |Z_{T,j}(S)|+|R_{T,j}(S)| \\
 &\quad+2|\delta_{T,j}(S)|
 \{|A_{T,j}(S)|+|Z_{T,j}(S)|+|R_{T,j}(S)|\} \\
 &\leq |Z_{T,j}(S)|+2\rho_T
       +2\Delta_T(\mathcal L_T+\mathcal Z_T) \\
 &=|Z_{T,j}(S)|+\mathcal R_T.
\end{align*}
The event on which these inequalities hold has probability tending to one,
and the bound is uniform over all admissible $(S,j)$.
\end{proof}

\begin{proof}[Proof of Theorem~\ref{thm:app_no_null}]
Let $H_T$ denote the event on which the bound in Lemma~\ref{lem:app_wald}
holds uniformly; then $\mathbb P(H_T)\to1$.  On $H_T$, every population-null
candidate satisfies \(W_{T,j}(S)\leq |Z_{T,j}(S)|+\mathcal R_T\), and hence
\[
 \max_{\substack{S\in\mathcal S_{k_{\max}},\ j\notin S\\
                  \Lambda_{T,j}^{\ast}(S)=0}}
 W_{T,j}(S)
 \leq \mathcal Z_T+\mathcal R_T.
\]
Intersecting $H_T$ with the event in Assumption~\ref{ass:app_se}, whose
probability also tends to one, makes the right-hand side no larger than
$(1-\bar\eta_0)c_p<c_p$.  This proves the uniform statement.  The conclusion
for the realised BMT path follows because every conditioning set visited by
the procedure belongs to $\mathcal S_{k_{\max}}$.
\end{proof}

\begin{proof}[Proof of Theorem~\ref{thm:app_stop}]
Let $\ell^{\dagger}$ satisfy the stated condition and let $H_T$ be the
uniform high-probability event in Lemma~\ref{lem:app_wald}.  On $H_T$,
\[
 \max_{j\notin S^{(\ell^{\dagger})}}
 W_{T,j}(S^{(\ell^{\dagger})})
 \leq M_{\ell^{\dagger},T}^{\ast}+\mathcal Z_T+\mathcal R_T.
\]
Since $M_{\ell^{\dagger},T}^{\ast}=o_p(c_p)$, the event
$\{M_{\ell^{\dagger},T}^{\ast}\leq\bar\eta_0c_p/2\}$ has probability
tending to one.  Assumption~\ref{ass:app_se} supplies another such event on
which $\mathcal Z_T+\mathcal R_T\leq(1-\bar\eta_0)c_p$.  On the intersection
of these three events,
\[
 \max_{j\notin S^{(\ell^{\dagger})}}
 W_{T,j}(S^{(\ell^{\dagger})})
 \leq (1-\bar\eta_0/2)c_p<c_p.
\]
No remaining candidate can then pass the multiple-testing filter, and BMT
stops no later than stage $\ell^{\dagger}$.
\end{proof}

\begin{proof}[Proof of Theorem~\ref{thm:app_exact}]
Let $H_T$ be the uniform high-probability event in Lemma~\ref{lem:app_wald},
put $\mathcal E_T^{\ast}=\mathcal Z_T+\mathcal R_T$, and define
$E_T=H_T\cap\{\mathcal E_T^{\ast}\leq d_T/4\}$.  Assumption~\ref{ass:app_dom}
then gives $\mathbb P(E_T)\to1$.  We show by induction that, on $E_T$, every
variable selected during the first $k$ stages belongs to $S_0$.

Suppose that only signals have been selected up to some stage $\ell<k$.
Then $S^{(\ell)}\in\mathcal S_0$.  Choose
\[
 j_{\ell}^{\ast}\in
 \arg\max_{j\in S_0\setminus S^{(\ell)}}
 \Lambda_{T,j}^{\ast}(S^{(\ell)}).
\]
For each non-signal $m\notin S_0$ with $m\notin S^{(\ell)}$, Lemma~\ref{lem:app_wald},
Assumption~\ref{ass:app_dom}, and the definition of $E_T$ give
\begin{align*}
 W_{T,j_{\ell}^{\ast}}(S^{(\ell)})-W_{T,m}(S^{(\ell)})
 &\geq
 \Lambda_{T,j_{\ell}^{\ast}}^{\ast}(S^{(\ell)})
 -\Lambda_{T,m}^{\ast}(S^{(\ell)})-2\mathcal E_T^{\ast} \\
 &\geq d_T/2>0.
\end{align*}
Thus no non-signal can maximise the sample Wald statistic at stage $\ell$.
The threshold part of Assumption~\ref{ass:app_dom} also gives
\(W_{T,j_{\ell}^{\ast}}(S^{(\ell)})\geq c_p+d_T-
\mathcal E_T^{\ast}\geq c_p+3d_T/4\).  Hence the passing set is nonempty and
BMT selects a signal.  This completes the induction, so after $k$ stages the
selected set is $S_0$.

It remains to exclude further selections.  By Assumption~\ref{ass:app_exhaust},
$\max_{m\notin S_0}\Lambda_{T,m}^{\ast}(S_0)=o(c_p)$, so this maximum is at
most $\bar\eta_0c_p/2$ for all sufficiently large $T$.  Intersecting $H_T$
with the event in Assumption~\ref{ass:app_se}, we obtain
\[
 \max_{m\notin S_0}W_{T,m}(S_0)
 \leq \bar\eta_0c_p/2+(1-\bar\eta_0)c_p
 =(1-\bar\eta_0/2)c_p<c_p
\]
with probability tending to one.  Thus BMT stops after selecting $S_0$, and
$\mathbb P(\widehat S=S_0)\to1$.
\end{proof}

\begin{proof}[Proof of Theorem~\ref{thm:app_local}]
Fix $S\in\mathcal S_0$ and $j\notin S$.  The expansion in
Assumption~\ref{ass:app_local_expansion} gives
\(\tau_j^{\ast}(S)=B_j(S)/A_j(S)-R_j(S)/A_j(S)\).  The lower bounds on
$A_j(S)$ and $V_{\tau\tau,j}^{\ast}(S)$ therefore imply, uniformly in $(S,j)$,
\[
 \left|
 \frac{|\tau_j^{\ast}(S)|}{\{V_{\tau\tau,j}^{\ast}(S)\}^{1/2}}
 -\frac{|B_j(S)|}{A_j(S)\{V_{\tau\tau,j}^{\ast}(S)\}^{1/2}}
 \right|
 \leq C|R_j(S)|=o(a_T).
\]
Let the uniform remainder in this display be $e_T=o(a_T)$.  The gap in
Assumption~\ref{ass:app_local_gap} then implies, for every $S\in\mathcal S_0$,
\begin{align*}
 \max_{j\in S_0\setminus S}
 \frac{|\tau_j^{\ast}(S)|}{\{V_{\tau\tau,j}^{\ast}(S)\}^{1/2}}
 &\geq
 \max_{\substack{m\notin S_0\\m\notin S}}
 \frac{|\tau_m^{\ast}(S)|}{\{V_{\tau\tau,m}^{\ast}(S)\}^{1/2}}
 +a_T-2e_T \\
 &\geq
 \max_{\substack{m\notin S_0\\m\notin S}}
 \frac{|\tau_m^{\ast}(S)|}{\{V_{\tau\tau,m}^{\ast}(S)\}^{1/2}}
 +a_T/2
\end{align*}
for all sufficiently large $T$.  After multiplication by $\sqrt T$, the
population noncentrality gap is at least
$d_T^{(1)}=\sqrt T\,a_T/2$.  Assumption~\ref{ass:app_local_remainder} gives
$(\mathcal Z_T+\mathcal R_T)/d_T^{(1)}=o_p(1)$.

Let $d_T^{(2)}$ be the threshold margin in Assumption
\ref{ass:app_local_gap}.  Set $d_T=\min\{d_T^{(1)},d_T^{(2)}\}$.  Since
\[
 \frac{\mathcal Z_T+\mathcal R_T}{d_T}
 \leq
 \frac{\mathcal Z_T+\mathcal R_T}{d_T^{(1)}}
 +
 \frac{\mathcal Z_T+\mathcal R_T}{d_T^{(2)}}=o_p(1),
\]
the population ordering and threshold inequalities above verify both parts of
Assumption~\ref{ass:app_dom}.  Assumption~\ref{ass:app_exhaust} is imposed,
so Theorem~\ref{thm:app_exact} yields the result.
\end{proof}

\begin{proof}[Proof of Theorem~\ref{thm:app_global}]
Assumption~\ref{ass:app_global_gap} gives, uniformly over $S\in\mathcal S_0$,
\[
 \max_{j\in S_0\setminus S}\Lambda_{T,j}^{\ast}(S)
 \geq
 \max_{\substack{m\notin S_0\\m\notin S}}
 \Lambda_{T,m}^{\ast}(S)+\sqrt T\,b_T.
\]
Set $d_T^{(1)}=\sqrt T\,b_T$, let $d_T^{(2)}$ be the threshold margin in the
same assumption, and put $d_T=\min\{d_T^{(1)},d_T^{(2)}\}$.  The two rate
conditions in Assumption~\ref{ass:app_global_gap} imply
$(\mathcal Z_T+\mathcal R_T)/d_T=o_p(1)$, by the same inequality used in the
proof of Theorem~\ref{thm:app_local}.  The population gap is at least $d_T$
and the strongest remaining signal exceeds $c_p+d_T$.  Hence
Assumption~\ref{ass:app_dom} holds.  Theorem~\ref{thm:app_exact}, together
with Assumption~\ref{ass:app_exhaust}, proves exact recovery.
\end{proof}

\begin{proof}[Proof of Theorem~\ref{thm:app_oracle}]
Assumption~\ref{ass:app_oracle_regularity} and the score central limit theorem
give, by Slutsky's theorem,
\[
 \sqrt T(\widehat{\boldsymbol\beta}_{\mathrm{or}}-
 \boldsymbol\beta_{0,S_0})
 \Rightarrow
 N(\boldsymbol0,\boldsymbol J_0^{-1}\boldsymbol\Omega_0\boldsymbol J_0^{-1}).
\]
By the embedding used in the theorem, on $\{\widehat S=S_0\}$ the
$S_0$ subvector of the post-BMT refit equals the oracle estimator. Therefore,
for every $\varepsilon>0$,
\[
 \mathbb P\!\left(
 \sqrt T\,\|\widehat{\boldsymbol\beta}_{\widehat S,S_0}-
 \widehat{\boldsymbol\beta}_{\mathrm{or}}\|>\varepsilon
 \right)
 \leq \mathbb P(\widehat S\neq S_0)\longrightarrow0.
\]
Thus
$\sqrt T(\widehat{\boldsymbol\beta}_{\widehat S,S_0}-
\widehat{\boldsymbol\beta}_{\mathrm{or}})=o_p(1)$, and a second application
of Slutsky's theorem gives the stated oracle limit.
\end{proof}

\clearpage\newpage

\section{Comparison with Binary LASSO}
\label{app:binary_lasso}

\setcounter{equation}{0} \renewcommand{\theequation}{C.\arabic{equation}}\renewcommand{\theHequation}{C.\arabic{equation}}
\setcounter{theorem}{0} \renewcommand{\thetheorem}{C.\arabic{theorem}}\renewcommand{\theHtheorem}{C.\arabic{theorem}}
\setcounter{lemma}{0} \renewcommand{\thelemma}{C.\arabic{lemma}}\renewcommand{\theHlemma}{C.\arabic{lemma}}
\setcounter{remark}{0} \renewcommand{\theremark}{C.\arabic{remark}}\renewcommand{\theHremark}{C.\arabic{remark}}

This appendix compares the stagewise condition used by BMT with the
irrepresentability condition used in model-selection results for binary
LASSO. The comparison is confined to correctly specified logit and probit
models. The observations are independent across $t$. The first example gives a design in which BMT recovers the true
model while binary LASSO is not sign consistent. The second gives the
reverse comparison. We then give an additional condition under which
generalised irrepresentability implies the weaker, one-signal-at-a-time
dominance condition that is sufficient for BMT. 

For clarity, the examples have no always-in covariates, so that
$S^{(0)}=\varnothing$.  With always-in controls, the same calculations apply
after replacing the covariates and covariance matrices below by their
population residual counterparts conditional on $S^{(0)}$.  The ordinary
linear projection residual appears because the calculations are local around
a zero binary index.  At that point, the GLM curvature weight is constant, so
this residual is the leading-order version of the GLM-weighted residual used
in Section~\ref{sec:primitive}.

Let
\begin{equation}
R_T(\boldsymbol\beta)
=-T^{-1}\sum_{t=1}^T
l_t(\boldsymbol x_t'\boldsymbol\beta,y_t)
\label{eq:app_lasso_risk}
\end{equation}
be the average negative Bernoulli log-likelihood.  The binary LASSO estimator
is
\begin{equation}
\widehat{\boldsymbol\beta}^{L}
\in\operatorname*{arg\,min}_{\boldsymbol\beta}
\left\{R_T(\boldsymbol\beta)+\lambda_T\|\boldsymbol\beta\|_1\right\},
\label{eq:app_lasso_def}
\end{equation}
where $\left\Vert \cdot \right\Vert _{1}$ is the $L_{1}$ norm. Under correct specification, put
\begin{equation}
\boldsymbol Q(\boldsymbol\beta_0)
=
\mathbb E\!\left[
\omega(\boldsymbol x_t'\boldsymbol\beta_0)
\boldsymbol x_t\boldsymbol x_t'
\right],
\qquad
\omega(v)=\frac{\{G'(v)\}^2}{G(v)\{1-G(v)\}}.
\label{eq:app_lasso_information}
\end{equation}
If $\boldsymbol\sigma_0=\operatorname{sign}(\boldsymbol\beta_{0,S_0})$, the
generalised irrepresentability condition is
\begin{equation}
\left\|
\boldsymbol Q_{S_0^cS_0}(\boldsymbol\beta_0)
\boldsymbol Q_{S_0S_0}(\boldsymbol\beta_0)^{-1}
\boldsymbol\sigma_0
\right\|_\infty<1,
\label{eq:app_lasso_gi}
\end{equation}
where $\left\Vert \cdot \right\Vert _{\infty }$\ is the $L_{\infty }$ norm. 
\subsection{Two preliminary results}
\label{app:binary_lasso_prelim}

The following fixed-dimensional Karush–Kuhn–Tucker (KKT) result is used in both examples.

\begin{lemma}
\label{lem:app_lasso_kkt}
Suppose $p$ is fixed, the binary model is correctly specified, and the
population risk is uniquely minimised at $\boldsymbol\beta_0$.
Suppose $\boldsymbol Q_{S_0S_0}(\boldsymbol\beta_0)$ is positive definite, the
sample Hessian is uniformly consistent in a neighbourhood of
$\boldsymbol\beta_0$, and the score is $O_p(T^{-1/2})$. Suppose also that
the objective function in \eqref{eq:app_lasso_def} is convex and has a unique
minimiser with probability tending to one. Let
\(\lambda_T\rightarrow0, \sqrt T\lambda_T\rightarrow\infty,\)
and suppose that $\min_{j\in S_0}|\beta_{0,j}|$ is bounded away from zero.

If, for some $\eta>0$,
\begin{equation}
\left\|
\boldsymbol Q_{S_0^cS_0}(\boldsymbol\beta_0)
\boldsymbol Q_{S_0S_0}(\boldsymbol\beta_0)^{-1}
\boldsymbol\sigma_0
\right\|_\infty
\leq1-\eta,
\label{eq:app_lasso_gi_strict}
\end{equation}
then
\[
\mathbb P\!\left\{
\operatorname{sign}(\widehat{\boldsymbol\beta}^{L})
=
\operatorname{sign}(\boldsymbol\beta_0)
\right\}\rightarrow1.
\]
If, for some $m\notin S_0$ and some $\eta>0$,
\begin{equation}
\left|
\boldsymbol Q_{mS_0}(\boldsymbol\beta_0)
\boldsymbol Q_{S_0S_0}(\boldsymbol\beta_0)^{-1}
\boldsymbol\sigma_0
\right|
\geq1+\eta,
\label{eq:app_lasso_gi_failure}
\end{equation}
then
\[
\mathbb P\!\left\{
\operatorname{sign}(\widehat{\boldsymbol\beta}^{L})
=
\operatorname{sign}(\boldsymbol\beta_0)
\right\}\rightarrow0.
\]
\end{lemma}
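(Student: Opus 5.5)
The proof follows the standard primal--dual witness construction of Wainwright and Ravikumar et al., adapted to a fixed-dimensional likelihood. For the sufficiency part I will build an oracle-restricted LASSO estimator $\check{\boldsymbol\beta}$. It minimises $R_T(\boldsymbol\beta)+\lambda_T\|\boldsymbol\beta\|_1$ subject to $\boldsymbol\beta_{S_0^c}=\boldsymbol0$.

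The first step is consistency of this estimator. Since $\boldsymbol Q_{S_0S_0}(\boldsymbol\beta_0)$ is positive definite, the sample Hessian is uniformly consistent near $\boldsymbol\beta_0$, the score is $O_p(T^{-1/2})$ and $\lambda_T\to0$, a standard convexity argument gives $\check{\boldsymbol\beta}_{S_0}-\boldsymbol\beta_{0,S_0}=O_p(\lambda_T+T^{-1/2})=o_p(1)$. Because $\min_{j\in S_0}|\beta_{0,j}|$ is bounded away from zero, this yields $\operatorname{sign}(\check{\boldsymbol\beta}_{S_0})=\boldsymbol\sigma_0$ with probability tending to one.

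The second step expands the restricted KKT condition
\(\nabla_{S_0}R_T(\check{\boldsymbol\beta})+\lambda_T\boldsymbol\sigma_0=\boldsymbol0\)
by the mean-value theorem around $\boldsymbol\beta_0$. Writing $\widehat{\boldsymbol Q}$ for the averaged Hessian along the segment, this gives
\[
\check{\boldsymbol\beta}_{S_0}-\boldsymbol\beta_{0,S_0}
=-\widehat{\boldsymbol Q}_{S_0S_0}^{-1}\{\nabla_{S_0}R_T(\boldsymbol\beta_0)+\lambda_T\boldsymbol\sigma_0\}.
\]
The third step uses this to define the dual variable on $S_0^c$:
\[
\boldsymbol z_{S_0^c}=-\lambda_T^{-1}\nabla_{S_0^c}R_T(\check{\boldsymbol\beta})
=\widehat{\boldsymbol Q}_{S_0^cS_0}\widehat{\boldsymbol Q}_{S_0S_0}^{-1}\boldsymbol\sigma_0
+\lambda_T^{-1}\bigl\{\widehat{\boldsymbol Q}_{S_0^cS_0}\widehat{\boldsymbol Q}_{S_0S_0}^{-1}\nabla_{S_0}R_T(\boldsymbol\beta_0)-\nabla_{S_0^c}R_T(\boldsymbol\beta_0)\bigr\}.
\]
The second term is $O_p(T^{-1/2}\lambda_T^{-1})=o_p(1)$ because $\sqrt T\lambda_T\to\infty$. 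The first term converges to $\boldsymbol Q_{S_0^cS_0}\boldsymbol Q_{S_0S_0}^{-1}\boldsymbol\sigma_0$ by uniform Hessian consistency. Hence, with $p$ fixed, $\|\boldsymbol z_{S_0^c}\|_\infty\le1-\eta/2$ with probability tending to one, by \eqref{eq:app_lasso_gi_strict}. Strict dual feasibility, convexity and the assumed uniqueness of the minimiser then imply $\widehat{\boldsymbol\beta}^L=\check{\boldsymbol\beta}$. This gives sign consistency.

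For the necessity part, I will argue on the event $\operatorname{sign}(\widehat{\boldsymbol\beta}^L)=\operatorname{sign}(\boldsymbol\beta_0)$ and show that this event has vanishing probability. On it $\widehat{\boldsymbol\beta}^L_{S_0^c}=\boldsymbol0$, so $\widehat{\boldsymbol\beta}^L$ coincides with the restricted estimator. The subgradient condition on $S_0$ then holds with $\boldsymbol\sigma_0$, and the same expansion applies. The expansion's validity needs consistency of $\widehat{\boldsymbol\beta}^L$, which holds because $\lambda_T\to0$ and the population risk is uniquely minimised. On this event the KKT condition also requires $|\lambda_T^{-1}\partial_mR_T(\widehat{\boldsymbol\beta}^L)|\le1$. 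The same decomposition shows that this quantity equals $\boldsymbol Q_{mS_0}\boldsymbol Q_{S_0S_0}^{-1}\boldsymbol\sigma_0+o_p(1)$ in absolute value, which by \eqref{eq:app_lasso_gi_failure} is at least $1+\eta/2$ with probability tending to one. The two requirements are incompatible, so the event has probability tending to zero.

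I expect the main obstacle to be controlling the Hessian remainder uniformly. The mean-value Hessian must be evaluated at random intermediate points, and the argument needs consistency of the estimator before the linearisation can be used. I will handle this by first proving consistency via convexity: the penalty is $o(1)$ and the risk has a strict population minimum. Only then will I invoke uniform Hessian consistency on a shrinking neighbourhood, which turns $\widehat{\boldsymbol Q}\to\boldsymbol Q(\boldsymbol\beta_0)$ into a routine step. Inverting $\widehat{\boldsymbol Q}_{S_0S_0}$ is safe with probability tending to one by continuity of eigenvalues.
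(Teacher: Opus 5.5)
Your proposal is correct and follows essentially the same route as the paper. Both arguments use the restricted (oracle-support) penalised estimator and obtain the active signs from consistency plus the beta-min condition. Both then Taylor-expand the active KKT equations to get the inactive dual vector $\boldsymbol Q_{S_0^cS_0}\boldsymbol Q_{S_0S_0}^{-1}\boldsymbol\sigma_0+o_p(1)$, using $T^{-1/2}=o(\lambda_T)$. Sufficiency follows from strict dual feasibility with convexity and uniqueness, and the converse from the violated inactive KKT inequality on the correct-sign event. Your integral-form averaged Hessian, in place of the paper's $\boldsymbol Q_{S_0S_0}^{-1}$-plus-$o_p(\lambda_T)$ expansion, is only a presentational difference.
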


\begin{proof}
Let $\widetilde{\boldsymbol\beta}_{S_0}$ minimise the penalised criterion
subject to $\boldsymbol\beta_{S_0^c}=\boldsymbol0$. Since $\lambda_T\to0$,
standard fixed-dimensional convex $M$-estimation results give
$\widetilde{\boldsymbol\beta}_{S_0}-\boldsymbol\beta_{0,S_0}=o_p(1)$.
The beta-min condition therefore implies
$\operatorname{sign}(\widetilde{\boldsymbol\beta}_{S_0})=\boldsymbol\sigma_0$
with probability tending to one. On this event, the active KKT equations and
a Taylor expansion give
\[
\widetilde{\boldsymbol\beta}_{S_0}-\boldsymbol\beta_{0,S_0}
=
-\boldsymbol Q_{S_0S_0}^{-1}
\{\nabla_{S_0}R_T(\boldsymbol\beta_0)+\lambda_T\boldsymbol\sigma_0\}
+o_p(\lambda_T),
\]
because $T^{-1/2}=o(\lambda_T)$. A second Taylor expansion then yields
\[
\lambda_T^{-1}\nabla_{S_0^c}R_T
(\widetilde{\boldsymbol\beta}_{S_0},\boldsymbol0)
=
-\boldsymbol Q_{S_0^cS_0}\boldsymbol Q_{S_0S_0}^{-1}\boldsymbol\sigma_0
+o_p(1).
\]
Under \eqref{eq:app_lasso_gi_strict}, the inactive KKT inequalities hold
strictly with probability tending to one. The restricted solution is then a
global minimiser of the full convex problem, and uniqueness identifies it
with $\widehat{\boldsymbol\beta}^{L}$; its signs are correct.

For the converse, standard fixed-dimensional penalised $M$-estimation and
$\lambda_T\to0$ give
$\widehat{\boldsymbol\beta}^{L}-\boldsymbol\beta_0=o_p(1)$. On the event that the
full estimator has the correct signed support, its active KKT equations therefore
give the same expansion, while the inactive
KKT inequality for coordinate $m$ requires
\[
\left|
\boldsymbol Q_{mS_0}\boldsymbol Q_{S_0S_0}^{-1}\boldsymbol\sigma_0
+o_p(1)
\right|\leq1.
\]
This contradicts \eqref{eq:app_lasso_gi_failure} with probability tending to
one. Hence the probability of correct signed support tends to zero.
\end{proof}

We also use a local expansion of the population Wald noncentrality.  Let
$z_t$ be a fixed linear combination of the covariates and consider
\begin{equation}
\mathbb P(y_t=1\mid\boldsymbol x_t)=G(bz_t),
\qquad b\rightarrow0.
\label{eq:app_lasso_local_model}
\end{equation}
For a conditioning set $S$ and $j\notin S$, let $r_{j,t}(S)$ denote the
residual from the population least-squares projection of $x_{j,t}$ on
$\boldsymbol x_{S,t}$.  Define
\begin{equation}
\omega_0
=
\frac{\{G'(0)\}^2}{G(0)\{1-G(0)\}}.
\label{eq:app_lasso_omega0}
\end{equation}

\begin{lemma}
\label{lem:app_lasso_local_wald}
Suppose the covariance matrix of the covariates in the augmented submodel is
nonsingular and the moments required for differentiation under the expectation
are finite.  For every fixed pair $(S,j)$,
\begin{equation}
\mathrm{NC}_j^*(S)
=
\sqrt T\,|b|\sqrt{\omega_0}
\frac{
\left|\mathbb E\{r_{j,t}(S)z_t\}\right|
}{
\{\mathbb E[r_{j,t}(S)^2]\}^{1/2}
}
+O(\sqrt T\,b^2),
\qquad b\rightarrow0.
\label{eq:app_lasso_local_wald}
\end{equation}
The remainder is uniform over any fixed finite collection of pairs $(S,j)$.
\end{lemma}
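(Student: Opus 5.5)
The approach is to expand the pseudo-true coefficient $\theta_j^*(S)$ and the sandwich variance $V_{\theta\theta,j}^*(S)$ separately in powers of $b$ around $b=0$, and then combine them.

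\emph{Step 1: the point $b=0$.} At $b=0$ the true response probability is the constant $G(0)$. The restricted pseudo-true parameter then places all index weight on the intercept (or fits the constant mean $G(0)$), and the candidate coefficient is $\theta_j^*(S)=0$. The augmented pseudo-true parameter is a smooth function of $b$ by the implicit function theorem. This uses nonsingularity of the population Hessian, which at $b=0$ equals $\omega_0\,\mathbb E[\boldsymbol q_{j,t}\boldsymbol q_{j,t}']$ with $\boldsymbol q_{j,t}=(\boldsymbol x_{S,t}',x_{j,t})'$; this matrix is nonsingular by the covariance assumption. The score and Hessian are smooth in $(b,\boldsymbol\vartheta)$ once differentiation under the expectation is justified by the stated moments.

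\emph{Step 2: first-order expansion of $\theta_j^*(S)$.} Differentiate the population first-order condition
\[
\mathbb E\bigl[\boldsymbol q_{j,t}\,\dot l_t(\boldsymbol q_{j,t}'\boldsymbol\vartheta,y_t)\bigr]=\boldsymbol0
\]
with respect to $b$ at $b=0$. Under correct specification, $\mathbb E[y_t\mid\boldsymbol x_t]=G(bz_t)$, so the derivative of the score in $b$ equals $\mathbb E[\boldsymbol q_{j,t}z_t]\,G'(0)\,a_0$. Here
\[
a_0=\frac{G'(0)}{G(0)\{1-G(0)\}}
\]
is the score weight, so $a_0G'(0)=\omega_0$. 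The derivative in $\boldsymbol\vartheta$ is $-\omega_0\,\mathbb E[\boldsymbol q_{j,t}\boldsymbol q_{j,t}']$. Hence
\[
\frac{d\boldsymbol\vartheta^*}{db}\Big|_{b=0}=\bigl(\mathbb E[\boldsymbol q\boldsymbol q']\bigr)^{-1}\mathbb E[\boldsymbol q z],
\]
which is the least-squares projection coefficient of $z_t$ on $\boldsymbol q_{j,t}$. By Frisch--Waugh, its last coordinate is $\mathbb E[r_{j,t}z_t]/\mathbb E[r_{j,t}^2]$. Therefore
\[
\theta_j^*(S)=b\,\frac{\mathbb E[r_{j,t}z_t]}{\mathbb E[r_{j,t}^2]}+O(b^2),
\]
where the $O(b^2)$ remainder is controlled by a bounded second derivative along the path.

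\emph{Step 3: variance.} At $b=0$ the submodel is correctly specified: the constant probability is attained, and observations are independent. So $\Omega=J=\omega_0\,\mathbb E[\boldsymbol q\boldsymbol q']$, and
\[
V_{\theta\theta}=\omega_0^{-1}\bigl\{\mathbb E[r_{j,t}^2]\bigr\}^{-1}.
\]
By smoothness in $b$, $V^*_{\theta\theta,j}(S)=V_{\theta\theta}|_{b=0}+O(b)$, and this is bounded away from zero. Combining Steps 2 and 3,
\[
\frac{|\theta^*|}{V^{1/2}}=|b|\sqrt{\omega_0}\,\frac{|\mathbb E[r z]|}{\{\mathbb E r^2\}^{1/2}}+O(b^2),
\]
since $|b|\cdot O(b)=O(b^2)$. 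Multiplying by $\sqrt T$ gives the statement. Uniformity over a finite collection of pairs $(S,j)$ is immediate by taking the maximum of the finitely many constants.

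The main obstacle is the remainder control in Steps 2 and 3. It requires establishing that the implicit-function path $\boldsymbol\vartheta^*(b)$ is $C^2$ near $0$, with derivative bounds that follow from the assumed moments, and that the sandwich variance is Lipschitz in $b$. One subtlety is that with probit the score weight depends on the index. This only affects higher-order terms, because all terms are evaluated at $b=0$ to leading order.
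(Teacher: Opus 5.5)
Your proposal is correct and follows essentially the same route as the paper: the implicit function theorem at $b=0$, where the pseudo-true parameter is zero and the negative score Jacobian is $\omega_0\mathbb E[\boldsymbol q_{j,t}(S)\boldsymbol q_{j,t}(S)']$, then the Schur-complement (Frisch--Waugh) step giving $\theta_j^*(S)=b\,\mathbb E\{r_{j,t}(S)z_t\}/\mathbb E[r_{j,t}(S)^2]+O(b^2)$ and $V_{\theta\theta,j}^*(S)=\{\omega_0\mathbb E[r_{j,t}(S)^2]\}^{-1}+O(b)$, and finally substitution into the definition of $\mathrm{NC}_j^*(S)$. Your explicit appeal to the information equality at $b=0$ (correct specification plus independence) spells out what the paper leaves implicit. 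Note that with no intercept the pseudo-true parameter at $b=0$ is exactly $\boldsymbol 0$, not a fit that "places all index weight on the intercept."
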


\begin{proof}
Let $\boldsymbol q_{j,t}(S)=(\boldsymbol x_{S,t}',x_{j,t})'$ and
$\boldsymbol\vartheta=(\boldsymbol\gamma',\theta)'$. At $b=0$ the pseudo-true
parameter is zero and the negative Jacobian of the population score is
$\omega_0\mathbb E[\boldsymbol q_{j,t}(S)\boldsymbol q_{j,t}(S)']$.
The implicit function theorem and the Schur complement for the candidate
coordinate give
\[
 \theta_j^{\ast}(S)
 =b\frac{\mathbb E\{r_{j,t}(S)z_t\}}{\mathbb E[r_{j,t}(S)^2]}+O(b^2),
 \qquad
 V_{\theta\theta,j}^{\ast}(S)
 =\frac{1}{\omega_0\mathbb E[r_{j,t}(S)^2]}+O(b).
\]
Substitution into \eqref{eq:nc_def_prim} proves
\eqref{eq:app_lasso_local_wald}; uniformity over a finite collection follows
from continuity and uniformly positive residual variances.
\end{proof}

\subsection{A design in which BMT succeeds}
\label{app:binary_lasso_bmt_success}

Let
\begin{equation}
\boldsymbol x_t=(x_{1,t},x_{2,t},x_{3,t})'
\sim N(\boldsymbol0,\boldsymbol\Sigma_\rho),
\qquad
\boldsymbol\Sigma_\rho
=
\begin{pmatrix}
1&0&\rho\\
0&1&\rho\\
\rho&\rho&1
\end{pmatrix},
\label{eq:app_lasso_proxy_covariance}
\end{equation}
where $0<\rho<1/\sqrt2$.  Consider
\begin{equation}
\mathbb P(y_t=1\mid\boldsymbol x_t)
=
G\{b(x_{1,t}+\alpha x_{2,t})\},
\qquad 0<\alpha<1.
\label{eq:app_lasso_proxy_dgp}
\end{equation}
The true support is $S_0=\{1,2\}$ and $x_{3,t}$ is inactive.

\begin{theorem}
\label{thm:app_lasso_bmt_success}
Suppose
\begin{equation}
\frac12<\rho<
\min\left\{\frac{1}{1+\alpha},\frac{1}{\sqrt2}\right\}.
\label{eq:app_lasso_proxy_wedge}
\end{equation}
Suppose also that Assumptions~\ref{ass:A1}--\ref{ass:A3} and
\ref{ass:A5} hold, $k_{\max}\geq2$, and $c_T=o(\sqrt T)$.  There exists
$\bar b>0$ such that, for every fixed $b\in(0,\bar b)$,
\(\mathbb P(\widehat S=S_0)\rightarrow1.\)
For binary LASSO, if
$\lambda_T\rightarrow0$ and $\sqrt T\lambda_T\rightarrow\infty$, then
\[
\mathbb P\!\left\{
\operatorname{sign}(\widehat{\boldsymbol\beta}^{L})
=
\operatorname{sign}\{(b,\alpha b,0)'\}
\right\}\rightarrow0.
\]
\end{theorem}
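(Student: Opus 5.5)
We would prove the BMT part by verifying a stagewise dominance condition with the local expansion of Lemma~\ref{lem:app_lasso_local_wald}. We would then rerun the induction in the proof of Theorem~\ref{thm:fixedk_recovery}. The LASSO part would follow from the second half of Lemma~\ref{lem:app_lasso_kkt}.

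\emph{Step 1: local noncentralities.} Take $z_t=x_{1,t}+\alpha x_{2,t}$ in \eqref{eq:app_lasso_local_model}. The relevant conditioning sets are $\varnothing$, $\{1\}$ and $\{2\}$. From $\boldsymbol\Sigma_\rho$, the normalised moments $|\mathbb E\{r_{j,t}(S)z_t\}|/\{\mathbb E r_{j,t}(S)^2\}^{1/2}$ are as follows.
\begin{itemize}
\item At $S=\varnothing$: $1$ for $x_1$, $\alpha$ for $x_2$, and $\rho(1+\alpha)$ for $x_3$.
\item At $S=\{1\}$: the residual of $x_2$ is $x_2$ itself, giving $\alpha$; the residual of $x_3$ is $x_3-\rho x_1$, giving $\rho\alpha/\sqrt{1-\rho^2}$.
\item At $S=\{2\}$: the residual of $x_1$ gives $1$; the residual of $x_3$ is $x_3-\rho x_2$, giving $\rho/\sqrt{1-\rho^2}$.
\end{itemize}

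Condition \eqref{eq:app_lasso_proxy_wedge} gives $\rho(1+\alpha)<1$ and $\rho/\sqrt{1-\rho^2}<1$. Hence at every relevant $S$ the strongest remaining signal beats $x_3$ by a constant $\kappa>0$ in this leading term.

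At $S=\varnothing$, however, $\rho>1/2$ gives $\rho(1+\alpha)>\alpha$, so the proxy beats the weaker signal $x_2$. The min-form Assumption~\ref{ass:A4} therefore fails. We will instead use the max-form condition of the Remark following Assumption~\ref{ass:A4}; this failure is the point of the example.

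\emph{Step 2: dominance with margin.} Multiply by $\sqrt T|b|\sqrt{\omega_0}$ and absorb the $O(\sqrt T b^2)$ remainder, which is uniform over this finite collection of pairs. Choosing $\bar b$ small enough, for each fixed $b<\bar b$ the max-signal gap is at least $d_T=\sqrt T\, b\sqrt{\omega_0}\,\kappa/2$. Because $p=3$ is fixed, Assumption~\ref{ass:A3}.4 gives $\mathcal E_T=O_p(\sqrt{\log T})=o_p(d_T)$. The condition $c_T=o(\sqrt T)$ gives $c_T+d_T\le$ the strongest signal noncentrality for large $T$.

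\emph{Step 3: induction and stopping.} Run the induction of the proof of Theorem~\ref{thm:fixedk_recovery}, with the strongest remaining signal $j_\ell^*$ in place of the minimum, exactly as in the proof of Theorem~\ref{thm:app_exact}. On $\{\mathcal E_T\le d_T/4\}$ the stages select $j^{(0)}=1$ and then $j^{(1)}=2$. (The set $\{2\}$ is covered in Step 1 for completeness.)

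Once $S=S_0$, the model is correctly specified. Lemma~\ref{lem:logit_probit} then gives $\mathrm{NC}_3^*(S_0)=0$, and Theorem~\ref{thm:approx_stop}(ii) with Assumption~\ref{ass:A5} yields stopping. Hence $\Pr(\widehat S=S_0)\to1$.

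\emph{Step 4: LASSO.} Continuity in $b$ gives $\boldsymbol Q(\boldsymbol\beta_0)=\omega_0\boldsymbol\Sigma_\rho+O(b)$ and $\boldsymbol\sigma_0=(1,1)'$. We have $\boldsymbol Q_{S_0S_0}=\omega_0 I_2+O(b)$ and $\boldsymbol Q_{3S_0}=\omega_0(\rho,\rho)+O(b)$, so
\[
\boldsymbol Q_{3S_0}\boldsymbol Q_{S_0S_0}^{-1}\boldsymbol\sigma_0=2\rho+O(b).
\]
Since $\rho>1/2$, shrinking $\bar b$ makes this at least $1+\eta$. The regularity conditions of Lemma~\ref{lem:app_lasso_kkt} (fixed $p$, strict convexity of the logistic or probit risk under Gaussian design, beta-min since $b,\alpha b$ are fixed) are standard. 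Its failure part then gives probability of correct signs tending to zero.

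The main obstacle is Step 2. The needed condition is the weaker max-form rather than Assumption~\ref{ass:A4}, so the induction must be re-verified rather than cited. The $O(\sqrt T b^2)$ remainder must also be controlled uniformly and kept below the constant-order gap $\kappa$ by the choice of $\bar b$.
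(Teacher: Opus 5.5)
Your proof is correct. The BMT half is the paper's argument. You compute the same leading normalised moments along the realised path $\varnothing\to\{1\}\to\{1,2\}$ via Lemma~\ref{lem:app_lasso_local_wald} and obtain strict gaps of order $\sqrt T$ for small fixed $b$. These transfer to the sample statistics because $\mathcal E_T=o_p(\sqrt T)$ and $c_T=o(\sqrt T)$, and stopping follows from $\mathrm{NC}_3^*(\{1,2\})=0$ under correct specification together with Assumption~\ref{ass:A5}. Two small tidy-ups apply:
\begin{itemize}
\item $\mathcal E_T=o_p(\sqrt T)$ already follows from Assumption~\ref{ass:A5} and $c_T=o(\sqrt T)$, whichever tail regime applies.
\item For $\mathrm{NC}_3^*(S_0)=0$ you do not need the full hypotheses of Lemma~\ref{lem:logit_probit}; correct specification and pseudo-true uniqueness suffice.
\end{itemize}
Your explicit remark that the min-form Assumption~\ref{ass:A4} fails at $S=\varnothing$ (since $\rho(1+\alpha)>\alpha$), so that the max-form of the Remark is what is really used, makes precise what the paper's proof leaves implicit. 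The set $\{2\}$ is never visited and can be dropped.

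The one genuine difference is the LASSO step. You perturb around $b=0$ to get $\boldsymbol Q_{3S_0}\boldsymbol Q_{S_0S_0}^{-1}\boldsymbol\sigma_0=2\rho+O(b)$, which forces a further shrinking of $\bar b$. The paper instead uses joint normality exactly. It writes $x_{3,t}=\rho x_{1,t}+\rho x_{2,t}+u_{3,t}$ with $u_{3,t}$ independent of $(x_{1,t},x_{2,t})$. The information weight $\omega\{b(x_{1,t}+\alpha x_{2,t})\}$ depends only on $(x_{1,t},x_{2,t})$, so $\boldsymbol Q_{3S_0}(\boldsymbol\beta_0)=(\rho,\rho)\boldsymbol Q_{S_0S_0}(\boldsymbol\beta_0)$ holds identically. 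The irrepresentability quantity therefore equals $2\rho>1$ for every $b>0$.

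What each route buys:
\begin{itemize}
\item Your route uses only the covariance $\boldsymbol\Sigma_\rho$ and moment bounds, not Gaussianity, so it extends to other designs with the same second moments. It delivers LASSO failure only for small $b$.
\item The paper's route gives LASSO failure for all fixed $b>0$ with no perturbation bookkeeping.
\end{itemize}
Since the BMT claim is in any case restricted to $b\in(0,\bar b)$, either suffices for the theorem as stated.
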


\begin{proof}
Put $z_t=x_{1,t}+\alpha x_{2,t}$.  At the initial stage,
\[
\mathbb E(x_{1,t}z_t)=1,
\qquad
\mathbb E(x_{2,t}z_t)=\alpha,
\qquad
\mathbb E(x_{3,t}z_t)=\rho(1+\alpha).
\]
All three covariates have unit variance.  The first inequality in
\eqref{eq:app_lasso_proxy_wedge} is not used here; the restriction
$\rho<1/(1+\alpha)$ implies that $x_{1,t}$ has the largest leading population
Wald statistic.  By Lemma~\ref{lem:app_lasso_local_wald}, this ordering is
strict for all sufficiently small fixed $b>0$.

After $x_{1,t}$ has been selected,
\(r_{2,t}(\{1\})=x_{2,t}, \qquad r_{3,t}(\{1\})=x_{3,t}-\rho x_{1,t}.\)
Consequently,
\[
\frac{|\mathbb E\{r_{2,t}(\{1\})z_t\}|}
{\{\mathbb E[r_{2,t}(\{1\})^2]\}^{1/2}}
=\alpha,
\]
whereas
\[
\frac{|\mathbb E\{r_{3,t}(\{1\})z_t\}|}
{\{\mathbb E[r_{3,t}(\{1\})^2]\}^{1/2}}
=
\frac{\rho\alpha}{\sqrt{1-\rho^2}}.
\]
The signal $x_{2,t}$ therefore has the larger population Wald statistic when
$\rho<1/\sqrt2$.  Once $x_{1,t}$ and $x_{2,t}$ have been selected, correct
specification gives $\mathrm{NC}_3^*(\{1,2\})=0$.

The two strict population gaps are proportional to $\sqrt T$.  Since $p$ is fixed, Assumption~\ref{ass:A3}.4 gives
$\mathcal E_T=O_p(\sqrt{\log T})=o_p(\sqrt T)$, and
$c_T=o(\sqrt T)$.  The sample ordering agrees with the population ordering
with probability tending to one, both signals cross the threshold at their
respective stages, and Assumption~\ref{ass:A5} gives stopping after the second
stage.  This proves the BMT statement.

For LASSO, joint normality gives the projection representation
\(x_{3,t}=\rho x_{1,t}+\rho x_{2,t}+u_{3,t},\)
where $u_{3,t}$ is independent of $(x_{1,t},x_{2,t})$.  The information
weight in \eqref{eq:app_lasso_information} is a function of
$(x_{1,t},x_{2,t})$.  Hence, with $S_0=\{1,2\}$,
\(\boldsymbol Q_{3S_0}(\boldsymbol\beta_0) =(\rho,\rho)\boldsymbol Q_{S_0S_0}(\boldsymbol\beta_0).\)
Both active coefficients are positive, so
\[
\boldsymbol Q_{3S_0}(\boldsymbol\beta_0)
\boldsymbol Q_{S_0S_0}(\boldsymbol\beta_0)^{-1}
\boldsymbol\sigma_0
=2\rho>1.
\]
The second conclusion follows from Lemma~\ref{lem:app_lasso_kkt}.
\end{proof}

The example uses the sequential nature of BMT.  At the first stage, the weak
signal $x_{2,t}$ need not dominate the inactive proxy.  It is enough that the
stronger signal $x_{1,t}$ is selected first and that $x_{2,t}$ dominates the
proxy after conditioning on $x_{1,t}$.

\subsection{A design in which LASSO succeeds}
\label{app:binary_lasso_lasso_success}

Let
\begin{equation}
\boldsymbol x_t=(x_{1,t},x_{2,t},x_{3,t},x_{4,t})'
\sim N(\boldsymbol0,\boldsymbol\Sigma_R),
\qquad
\boldsymbol\Sigma_R
=
\begin{pmatrix}
1&0&0&3/5\\
0&1&0&3/5\\
0&0&1&-1/4\\
3/5&3/5&-1/4&1
\end{pmatrix}.
\label{eq:app_lasso_reverse_covariance}
\end{equation}
The matrix is positive definite because the Schur complement of the upper
left $3\times3$ block is
\[
1-\left\{\left(\frac35\right)^2+
\left(\frac35\right)^2+\left(\frac14\right)^2\right\}
=\frac{87}{400}>0.
\]
Consider
\begin{equation}
\mathbb P(y_t=1\mid\boldsymbol x_t)
=
G\!\left\{\tau\left(x_{1,t}+x_{2,t}+\frac15x_{3,t}\right)\right\},
\qquad \tau>0.
\label{eq:app_lasso_reverse_dgp}
\end{equation}
Here $S_0=\{1,2,3\}$ and $x_{4,t}$ is inactive.

\begin{theorem}
\label{thm:app_lasso_lasso_success}
Suppose Assumptions~\ref{ass:A1}--\ref{ass:A3} and
\ref{ass:A5} hold, $k_{\max}\geq3$, and $c_T=o(\sqrt T)$.  There exists
$\bar\tau>0$ such that, for every fixed $\tau\in(0,\bar\tau)$,
\(\mathbb P(\widehat S=S_0)\rightarrow0.\)
If binary LASSO is computed with
$\lambda_T\rightarrow0$ and $\sqrt T\lambda_T\rightarrow\infty$, then
\[
\mathbb P\!\left\{
\operatorname{sign}(\widehat{\boldsymbol\beta}^{L})
=
\operatorname{sign}\{\tau(1,1,1/5,0)'\}
\right\}\rightarrow1.
\]
\end{theorem}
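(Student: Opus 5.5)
The plan is to treat the two conclusions of Theorem~\ref{thm:app_lasso_lasso_success} separately. The BMT failure comes from a first-stage computation through Lemma~\ref{lem:app_lasso_local_wald}. The LASSO success comes from evaluating the generalised irrepresentability quantity and invoking Lemma~\ref{lem:app_lasso_kkt}.

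\emph{BMT part.} Put $z_t=x_{1,t}+x_{2,t}+x_{3,t}/5$, so that the DGP \eqref{eq:app_lasso_reverse_dgp} is of the form \eqref{eq:app_lasso_local_model} with $b=\tau$. At the first stage $S=S^{(0)}=\varnothing$, so $r_{j,t}(\varnothing)=x_{j,t}$, and all four covariates have unit variance. From $\boldsymbol\Sigma_R$ in \eqref{eq:app_lasso_reverse_covariance} we get
\[
\mathbb E(x_{1,t}z_t)=\mathbb E(x_{2,t}z_t)=1,\qquad
\mathbb E(x_{3,t}z_t)=\tfrac15,\qquad
\mathbb E(x_{4,t}z_t)=\tfrac35+\tfrac35-\tfrac1{20}=\tfrac{23}{20}.
\]
By Lemma~\ref{lem:app_lasso_local_wald}, applied to the finite collection $\{(\varnothing,j):j=1,\dots,4\}$,
\[
\mathrm{NC}_4^*(\varnothing)-\max_{j\le3}\mathrm{NC}_j^*(\varnothing)
\ge\sqrt T\,\tau\sqrt{\omega_0}\,\tfrac{3}{20}-C\sqrt T\,\tau^2 .
\]
Choosing $\bar\tau$ so that $C\bar\tau<\sqrt{\omega_0}\,3/40$ makes this gap at least $c\sqrt T$ with $c>0$, for every fixed $\tau\in(0,\bar\tau)$. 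The same lemma gives $\mathrm{NC}_4^*(\varnothing)\ge c'\sqrt T$.

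Since $p=4$ is fixed, Assumption~\ref{ass:A3}.4 gives $\mathcal E_T=O_p(\sqrt{\log T})=o_p(\sqrt T)$, and by hypothesis $c_T=o(\sqrt T)$. Arguing as in the proof of Theorem~\ref{thm:k1_recovery}, with probability tending to one
\[
W_{T,4}(\varnothing)>\max_{j\le3}W_{T,j}(\varnothing)
\qquad\text{and}\qquad
W_{T,4}(\varnothing)>c_T .
\]
Hence $j^{(0)}=4$. Because selected covariates are never removed, $4\in\widehat S$, so $\widehat S\neq S_0$ and $\Pr(\widehat S=S_0)\to0$.

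\emph{LASSO part.} By joint normality, $x_{4,t}=\tfrac35x_{1,t}+\tfrac35x_{2,t}-\tfrac14x_{3,t}+u_{4,t}$, where $u_{4,t}$ is independent of $(x_{1,t},x_{2,t},x_{3,t})$. This uses $\boldsymbol\Sigma_{S_0S_0}=I$. The weight $\omega(\boldsymbol x_t'\boldsymbol\beta_0)$ depends only on $x_{S_0,t}$, so exactly as in Theorem~\ref{thm:app_lasso_bmt_success},
\[
\boldsymbol Q_{4S_0}(\boldsymbol\beta_0)=\bigl(\tfrac35,\tfrac35,-\tfrac14\bigr)\boldsymbol Q_{S_0S_0}(\boldsymbol\beta_0).
\]
With $\boldsymbol\sigma_0=(1,1,1)'$ this gives
\[
\boldsymbol Q_{4S_0}\boldsymbol Q_{S_0S_0}^{-1}\boldsymbol\sigma_0=\tfrac{19}{20}<1,
\]
which is \eqref{eq:app_lasso_gi_strict} with $\eta=1/20$. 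The conclusion then follows from the first part of Lemma~\ref{lem:app_lasso_kkt}. Its hypotheses hold here for the following reasons.
\begin{itemize}
\item The coefficients $\tau,\tau,\tau/5$ are fixed and nonzero, so the beta-min condition holds.
\item $\boldsymbol\Sigma_R$ is nonsingular and the weights are positive, so $\boldsymbol Q_{S_0S_0}(\boldsymbol\beta_0)$ is positive definite.
\item For logit and probit the negative log-likelihood is convex. The sample Hessian is positive definite with probability tending to one by the ULLN, so the minimiser is unique.
\item The score is $O_p(T^{-1/2})$ by the CLT under independence.
\end{itemize}

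The main obstacle is making the small-$\tau$ step rigorous. Lemma~\ref{lem:app_lasso_local_wald} gives an $O(\sqrt T\,\tau^2)$ remainder, and I must check that its constant can be taken uniform over the four first-stage pairs. This should follow from the finite collection and the bounded Gaussian moments, and it fixes $\bar\tau$. A secondary check is that the primitive hypotheses of Lemma~\ref{lem:app_lasso_kkt}, namely uniform Hessian consistency and uniqueness, hold for probit as well as logit.
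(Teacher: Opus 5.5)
Your proposal is correct and follows the paper's proof essentially step by step. For BMT, the first-stage moments with $\mathbb E(x_{4,t}z_t)=23/20$, fed into Lemma~\ref{lem:app_lasso_local_wald}, give a $\sqrt T$-order noncentrality gap in favour of $x_{4,t}$, and the uniform Wald approximation together with the fact that BMT never removes a selected covariate turns this into failure; for LASSO, the projection identity $\boldsymbol Q_{4S_0}=(3/5,3/5,-1/4)\boldsymbol Q_{S_0S_0}$ gives the value $19/20<1$, and Lemma~\ref{lem:app_lasso_kkt} gives sign consistency, with your explicit choice of $\bar\tau$ (taken also below the radius on which the $O(b^2)$ remainder of Lemma~\ref{lem:app_lasso_local_wald} holds) and your check of that lemma's regularity hypotheses being details the paper leaves implicit.
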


\begin{proof}
Let
$z_t=x_{1,t}+x_{2,t}+x_{3,t}/5$.  At the initial stage,
\[
\mathbb E(x_{1,t}z_t)=1,
\qquad
\mathbb E(x_{2,t}z_t)=1,
\qquad
\mathbb E(x_{3,t}z_t)=\frac15,
\]
while
\(\mathbb E(x_{4,t}z_t) =\frac35+\frac35-\frac14\frac15 =\frac{23}{20}>1.\)
Lemma~\ref{lem:app_lasso_local_wald} implies that, for all sufficiently small
fixed $\tau>0$,
\(
\mathrm{NC}_4^*(\varnothing)
>
\max_{j\in S_0}\mathrm{NC}_j^*(\varnothing).
\)
The gap is proportional to $\sqrt T$.  Assumption~\ref{ass:A3}.4 and
$c_T=o(\sqrt T)$ therefore imply that BMT selects $x_{4,t}$ at the first stage
with probability tending to one.  Since the BMT procedure does not delete a
selected covariate, its final selected set cannot equal $S_0$.

For LASSO, joint normality gives
\(x_{4,t} =\frac35x_{1,t}+\frac35x_{2,t}-\frac14x_{3,t}+u_{4,t},\)
where $u_{4,t}$ is independent of $\boldsymbol x_{S_0,t}$.  The information
weight is measurable with respect to $\boldsymbol x_{S_0,t}$, and hence
\(\boldsymbol Q_{4S_0}(\boldsymbol\beta_0) = (3/5,3/5,-1/4)\boldsymbol Q_{S_0S_0}(\boldsymbol\beta_0).\)
All active coefficients are positive.  It follows that
\[
\boldsymbol Q_{4S_0}(\boldsymbol\beta_0)
\boldsymbol Q_{S_0S_0}(\boldsymbol\beta_0)^{-1}
\boldsymbol\sigma_0
=
\frac35+\frac35-\frac14
=\frac{19}{20}<1.
\]
The first conclusion of Lemma~\ref{lem:app_lasso_kkt} proves sign consistency
of binary LASSO.
\end{proof}

The two examples show that generalised irrepresentability and BMT stagewise
dominance are not nested without additional restrictions.  The difference is
that generalised irrepresentability depends on the signed projection of an
inactive covariate on the full active set, whereas the BMT statistic depends
on the actual coefficient magnitudes and is recomputed after each selection.

\subsection{Generalised irrepresentability and stagewise dominance}
\label{app:binary_lasso_relationship}

We now give a sufficient relation between the two conditions.  Let
$\boldsymbol\Sigma=\mathbb E(\boldsymbol x_t\boldsymbol x_t')$ be positive
definite, with $\Sigma_{jj}=1$, and consider the local binary model
\begin{equation}
\mathbb P(y_t=1\mid\boldsymbol x_t)
=
G\{\tau_T\boldsymbol x_{S_0,t}'\boldsymbol b_{S_0}\},
\qquad
\tau_T>0,\quad \tau_T\rightarrow0,
\label{eq:app_lasso_general_local}
\end{equation}
where $b_j\neq0$ for every $j\in S_0$.  Put
$\boldsymbol\sigma_0=\operatorname{sign}(\boldsymbol b_{S_0})$.  For each
$m\notin S_0$, define the coefficient vector in the population projection of
$x_{m,t}$ on the full active set by
\begin{equation}
\boldsymbol\gamma_m
=
\boldsymbol\Sigma_{S_0S_0}^{-1}\boldsymbol\Sigma_{S_0m}.
\label{eq:app_lasso_gamma}
\end{equation}
At $\tau_T=0$, the Fisher information is $\omega_0\boldsymbol\Sigma$, so the
leading form of generalised irrepresentability is
\begin{equation}
\max_{m\notin S_0}
|\boldsymbol\gamma_m'\boldsymbol\sigma_0|
\leq1-\eta
\qquad\text{for some }\eta\in(0,1).
\label{eq:app_lasso_cov_gi}
\end{equation}
Under uniform moment and continuity conditions for the Fisher-information
blocks, a strict margin in \eqref{eq:app_lasso_cov_gi} also implies the
corresponding Fisher-information condition, uniformly over inactive
candidates, for all sufficiently small $\tau_T$.

Let $A\subsetneq S_0$ be the signals already selected and put
$R=S_0\setminus A$.  For $U,V\subseteq A^c$, define the conditional
covariance block
\begin{equation}
\boldsymbol\Sigma^A_{UV}
=
\boldsymbol\Sigma_{UV}
-
\boldsymbol\Sigma_{UA}
\boldsymbol\Sigma_{AA}^{-1}
\boldsymbol\Sigma_{AV},
\label{eq:app_lasso_schur}
\end{equation}
with $\boldsymbol\Sigma^\varnothing=\boldsymbol\Sigma$.  Write
\(
d_j(A)=\Sigma^A_{jj},
\qquad
\boldsymbol D(A)=\operatorname{diag}\{d_j(A):j\in R\}.
\)
For $j\notin A$, define the leading local population score
\begin{equation}
\mathcal L_j(A)
=
\frac{
|\boldsymbol\Sigma^A_{jR}\boldsymbol b_R|
}{\sqrt{d_j(A)}}.
\label{eq:app_lasso_leading_score}
\end{equation}
By Lemma~\ref{lem:app_lasso_local_wald}, this is the term multiplying
$\sqrt T|\tau_T|\sqrt{\omega_0}$ in the local expansion of
$\mathrm{NC}_j^*(A)$.

Define
\begin{equation}
\begin{split}
\boldsymbol h(A)
&=\boldsymbol D(A)^{-1/2}
\boldsymbol\Sigma^A_{RR}\boldsymbol b_R,\\
M(A)&=\|\boldsymbol h(A)\|_\infty,
\qquad
\boldsymbol u(A)=\frac{\boldsymbol h(A)}{M(A)},\\
\boldsymbol a_m(A)
&=
\frac{
\boldsymbol D(A)^{1/2}\boldsymbol\gamma_{m,R}
}{\sqrt{d_m(A)}}.
\end{split}
\label{eq:app_lasso_stage_objects}
\end{equation}
Positive definiteness and $\boldsymbol b_R\neq\boldsymbol0$ imply
$M(A)>0$.

\begin{lemma}
\label{lem:app_lasso_stage_representation}
For every $A\subsetneq S_0$ and $m\notin S_0$,
\begin{equation}
\max_{j\in R}\mathcal L_j(A)=M(A),
\qquad
\frac{\mathcal L_m(A)}{M(A)}
=
|\boldsymbol a_m(A)'\boldsymbol u(A)|.
\label{eq:app_lasso_stage_representation}
\end{equation}
\end{lemma}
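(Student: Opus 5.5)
The plan is to treat the two identities in \eqref{eq:app_lasso_stage_representation} separately, since both reduce to linear algebra on the Schur-complement blocks \eqref{eq:app_lasso_schur}. The first identity needs only an unpacking of definitions. The second needs one structural fact about projections.

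\emph{Step 1 (the maximum over remaining signals).} For $j\in R$, the $j$th coordinate of $\boldsymbol\Sigma^A_{RR}\boldsymbol b_R$ is exactly $\boldsymbol\Sigma^A_{jR}\boldsymbol b_R$. Dividing by $\sqrt{d_j(A)}$ is the $j$th diagonal entry of $\boldsymbol D(A)^{-1/2}$. Hence $h_j(A)=\boldsymbol\Sigma^A_{jR}\boldsymbol b_R/\sqrt{d_j(A)}$ and $\mathcal L_j(A)=|h_j(A)|$. Taking the maximum over $j\in R$ gives $\max_{j\in R}\mathcal L_j(A)=\|\boldsymbol h(A)\|_\infty=M(A)$. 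Positivity of $M(A)$ is already recorded before the lemma.

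\emph{Step 2 (inactive candidates).} The key identity, and the only real obstacle, is
\[
\boldsymbol\Sigma^A_{mR}=\boldsymbol\gamma_{m,R}'\boldsymbol\Sigma^A_{RR}, \qquad m\notin S_0 .
\]
This says that after partialling out $A$, the covariance of $x_{m,t}$ with the remaining signals runs entirely through its projection coefficients on $R$. To prove it, split $\boldsymbol\gamma_m=(\boldsymbol\gamma_{m,A}',\boldsymbol\gamma_{m,R}')'$. From \eqref{eq:app_lasso_gamma}, $\boldsymbol\Sigma_{S_0m}=\boldsymbol\Sigma_{S_0S_0}\boldsymbol\gamma_m$, which gives
\[
\boldsymbol\Sigma_{mR}=\boldsymbol\gamma_{m,A}'\boldsymbol\Sigma_{AR}+\boldsymbol\gamma_{m,R}'\boldsymbol\Sigma_{RR},
\qquad
\boldsymbol\Sigma_{mA}=\boldsymbol\gamma_{m,A}'\boldsymbol\Sigma_{AA}+\boldsymbol\gamma_{m,R}'\boldsymbol\Sigma_{RA}.
\]
Substituting both into $\boldsymbol\Sigma^A_{mR}=\boldsymbol\Sigma_{mR}-\boldsymbol\Sigma_{mA}\boldsymbol\Sigma_{AA}^{-1}\boldsymbol\Sigma_{AR}$, the $\boldsymbol\gamma_{m,A}$ terms cancel. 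What remains is $\boldsymbol\gamma_{m,R}'(\boldsymbol\Sigma_{RR}-\boldsymbol\Sigma_{RA}\boldsymbol\Sigma_{AA}^{-1}\boldsymbol\Sigma_{AR})=\boldsymbol\gamma_{m,R}'\boldsymbol\Sigma^A_{RR}$. The case $A=\varnothing$ is immediate.

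\emph{Step 3 (assembly).} Write $\boldsymbol\Sigma^A_{RR}\boldsymbol b_R=\boldsymbol D(A)^{1/2}\boldsymbol h(A)$. Then
\[
\mathcal L_m(A)=\frac{|\boldsymbol\gamma_{m,R}'\boldsymbol D(A)^{1/2}\boldsymbol h(A)|}{\sqrt{d_m(A)}}=|\boldsymbol a_m(A)'\boldsymbol h(A)|=M(A)\,|\boldsymbol a_m(A)'\boldsymbol u(A)|.
\]
Dividing by $M(A)>0$ gives the second identity. Here $d_m(A)>0$ and $d_j(A)>0$ hold because $\boldsymbol\Sigma$ is positive definite, so all Schur complements are positive definite.
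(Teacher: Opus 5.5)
Your proof is correct and follows essentially the same route as the paper. The first identity is read off coordinatewise from $\boldsymbol h(A)$, and the second rests on the Frisch--Waugh--Lovell identity $\boldsymbol\gamma_{m,R}=(\boldsymbol\Sigma^A_{RR})^{-1}\boldsymbol\Sigma^A_{Rm}$. You verify that identity explicitly by block substitution, whereas the paper simply cites it.
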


\begin{proof}
For $j\in R$, the $j$th component of $\boldsymbol h(A)$ is
$\boldsymbol\Sigma^A_{jR}\boldsymbol b_R/\sqrt{d_j(A)}$, which proves the
first equality.  The Frisch--Waugh--Lovell identity for the projection in
\eqref{eq:app_lasso_gamma} gives
\(\boldsymbol\gamma_{m,R} = (\boldsymbol\Sigma^A_{RR})^{-1} \boldsymbol\Sigma^A_{Rm}.\)
Therefore,
\[
\frac{\boldsymbol\Sigma^A_{mR}\boldsymbol b_R}{\sqrt{d_m(A)}}
=
\boldsymbol a_m(A)'\boldsymbol h(A).
\]
Taking absolute values and dividing by $M(A)$ proves the second equality.
\end{proof}

Only an increase in the inactive loading after conditioning can use up the
margin in \eqref{eq:app_lasso_cov_gi}.  Define the one-sided directional
distortion
\begin{equation}
\Delta_{\mathrm{dir}}
=
\sup_{A\subsetneq S_0}
\sup_{m\notin S_0}
\left[
|\boldsymbol a_m(A)'\boldsymbol u(A)|
-|\boldsymbol\gamma_m'\boldsymbol\sigma_0|
\right]_+,
\label{eq:app_lasso_directional_distortion}
\end{equation}
where $[v]_+=\max(v,0)$.  This condition is coefficient-specific: it uses the
actual stagewise score direction $\boldsymbol u(A)$ and does not require
independence of the active covariates or a common sign pattern for the
inactive projection coefficients.

\begin{theorem}
\label{thm:app_lasso_gi_bmt}
Suppose \eqref{eq:app_lasso_cov_gi} holds and
\begin{equation}
\Delta_{\mathrm{dir}}\leq\delta<\eta.
\label{eq:app_lasso_directional_stability}
\end{equation}
Let $\varepsilon=\eta-\delta$.  Then, for every
$A\subsetneq S_0$,
\begin{equation}
\max_{m\notin S_0}\mathcal L_m(A)
\leq
(1-\varepsilon)
\max_{j\in S_0\setminus A}\mathcal L_j(A).
\label{eq:app_lasso_relaxed_dominance}
\end{equation}

Suppose, in addition, that the local expansion
\eqref{eq:app_lasso_local_wald} holds uniformly over
$A\subsetneq S_0$ and $j\notin A$, that $k_{\max}\geq k$, and that
Assumption~\ref{ass:A5} holds.  If
\begin{equation}
\frac{c_T+\mathcal E_T}{\sqrt T|\tau_T|}
=o_p(1),
\label{eq:app_lasso_local_rate}
\end{equation}
then BMT recovers the true support:
\(\mathbb P(\widehat S=S_0)\rightarrow1.\)
If the number of candidates is fixed, the same conclusion holds for every
sufficiently small fixed $\tau>0$ when
$c_T+\mathcal E_T=o_p(\sqrt T)$.
\end{theorem}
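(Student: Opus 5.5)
The argument has two parts: a deterministic population inequality, \eqref{eq:app_lasso_relaxed_dominance}, and then a reduction of BMT recovery to the relaxed (one-signal-at-a-time) dominance route described in the Remark after Assumption~\ref{ass:A4}.

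\textbf{Population part.} Fix $A\subsetneq S_0$ and $m\notin S_0$. By Lemma~\ref{lem:app_lasso_stage_representation},
\[
\mathcal L_m(A)=M(A)\,|\boldsymbol a_m(A)'\boldsymbol u(A)| .
\]
The definition \eqref{eq:app_lasso_directional_distortion} gives
\[
|\boldsymbol a_m(A)'\boldsymbol u(A)|\le|\boldsymbol\gamma_m'\boldsymbol\sigma_0|+\Delta_{\mathrm{dir}} .
\]
Combining this with \eqref{eq:app_lasso_cov_gi} and \eqref{eq:app_lasso_directional_stability} yields
\[
|\boldsymbol a_m(A)'\boldsymbol u(A)|\le 1-\eta+\delta=1-\varepsilon .
\]
Using $M(A)=\max_{j\in R}\mathcal L_j(A)$ from the same lemma and taking the maximum over $m$ proves \eqref{eq:app_lasso_relaxed_dominance}. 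This part is purely algebraic.

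\textbf{Noncentrality scale.} Write $\kappa=\sqrt T|\tau_T|\sqrt{\omega_0}$. The uniform expansion \eqref{eq:app_lasso_local_wald}, with $b=\tau_T$ and $z_t=\boldsymbol x_{S_0,t}'\boldsymbol b_{S_0}$, gives
\[
\mathrm{NC}_j^*(A)=\kappa\,\mathcal L_j(A)+O(\sqrt T\tau_T^2)
\]
uniformly in $A\subsetneq S_0$ and $j\notin A$. Two identifications are needed here.

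\begin{itemize}
\item $\mathbb E\{r_{j,t}(A)z_t\}/\{\mathbb E r_{j,t}(A)^2\}^{1/2}$ must equal $\boldsymbol\Sigma^A_{jR}\boldsymbol b_R/\sqrt{d_j(A)}$. This holds because the residual of $x_{j,t}$ after projecting on $\boldsymbol x_{A,t}$ is orthogonal to the components of $z_t$ indexed by $A$.
\item $\underline M:=\min_{A\subsetneq S_0}M(A)$ is a positive constant. This follows because $S_0$ is fixed and finite and $M(A)>0$ for every $A$.
\end{itemize}

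It then follows that the strongest remaining signal satisfies
\[
\max_{j\in S_0\setminus A}\mathrm{NC}_j^*(A)\ge\max_{m\notin S_0,\,m\notin A}\mathrm{NC}_m^*(A)+\varepsilon\kappa\underline M-O(\sqrt T\tau_T^2),
\]
and that the same quantity is at least $\kappa\underline M-O(\sqrt T\tau_T^2)$. Because $\tau_T\to0$, the remainder is $o(\kappa)$. Set $d_T=\varepsilon\kappa\underline M/4$.

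\textbf{Recovery.} Condition \eqref{eq:app_lasso_local_rate} gives $\mathcal E_T/d_T=o_p(1)$. Since $c_T$ is deterministic, it also gives $c_T=o(\kappa)$, so for large $T$ the strongest signal exceeds $c_T+d_T$. This is exactly the relaxed dominance condition in the Remark after Assumption~\ref{ass:A4}. I would rerun the induction in the proof of Theorem~\ref{thm:fixedk_recovery}, with $j^*_\ell$ the strongest remaining signal in place of the minimum over signals. On $\{\mathcal E_T\le d_T/4\}$:
\begin{itemize}
\item every non-signal has a sample statistic at least $d_T/2$ below that of $j^*_\ell$;
\item $j^*_\ell$ passes the threshold $c_T$.
\end{itemize}
So the selected covariate is a signal, though not necessarily $j^*_\ell$, at each stage $\ell<k$.

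After $k$ stages the selected set is $S_0$. Correct specification gives $\mathrm{NC}_m^*(S_0)=0$ for all $m\notin S_0$, as in Lemma~\ref{lem:logit_probit}. Theorem~\ref{thm:approx_stop}(ii), which uses Assumption~\ref{ass:A5}, then gives stopping after stage $k$.

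\textbf{Fixed-$p$ case.} With a fixed number of candidates and fixed $\tau$, the gap is
\[
\sqrt T\sqrt{\omega_0}\bigl(\tau\varepsilon\underline M-O(\tau^2)\bigr).
\]
This is positive and of exact order $\sqrt T$ once $\tau$ is small. The same argument then applies under $c_T+\mathcal E_T=o_p(\sqrt T)$. Here the remainder is uniform because the collection of pairs $(A,j)$ is finite.

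\textbf{Main obstacle.} The delicate step is the uniformity of the $O(\sqrt T\tau_T^2)$ remainder over inactive $m$ when $p$ grows; without it, the $o(\kappa)$ bound fails. I would take this uniformity directly from the stated assumption that \eqref{eq:app_lasso_local_wald} holds uniformly over $A\subsetneq S_0$ and $j\notin A$. A second point to check is that the unweighted residual in Lemma~\ref{lem:app_lasso_local_wald} matches the Schur complement $\boldsymbol\Sigma^A$, which requires $\mathbb E x_t=0$ so that second moments coincide with covariances.
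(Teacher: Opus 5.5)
Your proposal is correct and follows essentially the same route as the paper's proof. Both arguments first derive the algebraic bound $|\boldsymbol a_m(A)'\boldsymbol u(A)|\le 1-\varepsilon$ via Lemma~\ref{lem:app_lasso_stage_representation}, then use the uniform expansion \eqref{eq:app_lasso_local_wald} with $\underline M>0$ to obtain a population gap of order $\sqrt T|\tau_T|$ for the strongest remaining signal. This gap is transferred to the sample statistics by the induction of Theorem~\ref{thm:fixedk_recovery}, and stopping follows from correct specification and Assumption~\ref{ass:A5}. Your extra caveat about $\mathbb E\boldsymbol x_t=\boldsymbol 0$ is not needed: $\boldsymbol\Sigma$ is defined as the second-moment matrix $\mathbb E(\boldsymbol x_t\boldsymbol x_t')$ and the projection is uncentred, so the residual moments equal the Schur-complement entries directly.
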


\begin{proof}
By Lemma~\ref{lem:app_lasso_stage_representation},
\(\frac{\mathcal L_m(A)}{M(A)} = |\boldsymbol a_m(A)'\boldsymbol u(A)|.\)
Equations \eqref{eq:app_lasso_cov_gi} and
\eqref{eq:app_lasso_directional_stability} therefore give
\[
|\boldsymbol a_m(A)'\boldsymbol u(A)|
\leq
|\boldsymbol\gamma_m'\boldsymbol\sigma_0|+\delta
\leq
1-\eta+\delta
=
1-\varepsilon.
\]
This proves \eqref{eq:app_lasso_relaxed_dominance}.

There are finitely many sets $A\subsetneq S_0$.  Since $M(A)>0$ for each of
them,
\(\underline M = \min_{A\subsetneq S_0}M(A)>0.\)
Uniformly over these sets, Lemma~\ref{lem:app_lasso_local_wald} and
\eqref{eq:app_lasso_relaxed_dominance} imply that the strongest remaining
signal exceeds every inactive candidate by a population Wald gap of at least
\(\frac12\varepsilon\sqrt{\omega_0}\,\underline M \sqrt T|\tau_T|\)
for all sufficiently large $T$.  The factor $1/2$ absorbs the uniform
$O(\sqrt T\tau_T^2)$ remainder.  Condition
\eqref{eq:app_lasso_local_rate} transfers this ordering to the sample Wald
statistics and makes the strongest remaining signal cross $c_T$ at every
stage.  The induction used in the proof of
Theorem~\ref{thm:fixedk_recovery} then shows that BMT selects a signal at each
of its first $k$ stages.  Once $S_0$ has been selected, correct specification
gives $\mathrm{NC}_m^*(S_0)=0$ for all $m\notin S_0$, and
Assumption~\ref{ass:A5} gives stopping.  This proves exact recovery.

When the number of candidates is fixed, the strict inequalities at the local
limit persist for all sufficiently small fixed $\tau>0$ by continuity.  Their
Wald gaps are then of order $\sqrt T$, and the same argument applies.
\end{proof}

The directional condition in \eqref{eq:app_lasso_directional_stability} is
one-sided and permits cancellation.  The following bound gives a more
primitive sufficient condition.  Partition
$\boldsymbol\gamma_m=(\boldsymbol\gamma_{m,A}',
\boldsymbol\gamma_{m,R}')'$ and
$\boldsymbol\sigma_0=(\boldsymbol\sigma_A',\boldsymbol\sigma_R')'$.  Since
\[
\boldsymbol a_m(A)'\boldsymbol u(A)
=
\boldsymbol\gamma_{m,R}'
\frac{\boldsymbol D(A)^{1/2}\boldsymbol u(A)}{\sqrt{d_m(A)}},
\]
the reverse triangle inequality gives
\begin{equation}
\begin{split}
&\left|
|\boldsymbol a_m(A)'\boldsymbol u(A)|
-|\boldsymbol\gamma_m'\boldsymbol\sigma_0|
\right|\\
&\quad\leq
|\boldsymbol\gamma_{m,A}'\boldsymbol\sigma_A|
+
\|\boldsymbol\gamma_{m,R}\|_1
\left\|
\frac{\boldsymbol D(A)^{1/2}\boldsymbol u(A)}{\sqrt{d_m(A)}}
-\boldsymbol\sigma_R
\right\|_\infty.
\end{split}
\label{eq:app_lasso_directional_bound}
\end{equation}
Thus \eqref{eq:app_lasso_directional_stability} follows if the right-hand
side of \eqref{eq:app_lasso_directional_bound} is uniformly smaller than the
irrepresentability margin $\eta$.  This allows correlated active covariates
and mixed-sign inactive loadings.  It requires only that deleting the already
selected coordinates and replacing the sign direction by the actual
stagewise score direction do not increase the inactive loading enough to
exhaust the strict margin in \eqref{eq:app_lasso_cov_gi}.

Finally, Assumption~\ref{ass:A4} is stated using the minimum population
noncentrality among all remaining signals.  The weaker maximum condition in
\eqref{eq:app_lasso_relaxed_dominance} is sufficient for the BMT rule because
only one covariate is added at each stage.  If one wishes to verify
Assumption~\ref{ass:A4} exactly as stated, define
\begin{equation}
q_*
=
\inf_{A\subsetneq S_0}
\frac{
\min_{j\in S_0\setminus A}\mathcal L_j(A)
}{
\max_{j\in S_0\setminus A}\mathcal L_j(A)
}.
\label{eq:app_lasso_active_balance}
\end{equation}
If $q_*>1-\varepsilon$, then every remaining signal dominates every inactive
candidate in the local limit.  Together with
\eqref{eq:app_lasso_local_rate}, this gives the dominance and threshold parts
of Assumption~\ref{ass:A4} with a margin proportional to
$\sqrt T|\tau_T|$.

The examples and Theorem~\ref{thm:app_lasso_gi_bmt} give a qualified
comparison.  Neither generalised irrepresentability nor BMT stagewise
dominance implies the other without further restrictions.  A strict
generalised irrepresentability margin does imply the one-signal-at-a-time BMT
dominance condition when that margin is stable along the actual stagewise
score directions.

\section{Primitive Sufficient Conditions for the Uniform Wald Approximation}
\label{app:primitive_wald}
\setlength{\abovedisplayskip}{2pt plus 1pt minus 1pt}
\setlength{\belowdisplayskip}{2pt plus 1pt minus 1pt}
\setlength{\abovedisplayshortskip}{1pt plus 1pt minus 1pt}
\setlength{\belowdisplayshortskip}{2pt plus 1pt minus 1pt}
\setlength{\jot}{2pt}

\setcounter{equation}{0} \renewcommand{\theequation}{D.\arabic{equation}}\renewcommand{\theHequation}{D.\arabic{equation}}
\setcounter{theorem}{0} \renewcommand{\thetheorem}{D.\arabic{theorem}}\renewcommand{\theHtheorem}{D.\arabic{theorem}}
\setcounter{lemma}{0} \renewcommand{\thelemma}{D.\arabic{lemma}}\renewcommand{\theHlemma}{D.\arabic{lemma}}
\setcounter{remark}{0} \renewcommand{\theremark}{D.\arabic{remark}}\renewcommand{\theHremark}{D.\arabic{remark}}

This appendix gives primitive sufficient conditions for Assumption~\ref{ass:A3}.4.
The variance target remains the long-run sandwich matrix in A3.3. The
verification covers two cases in which simpler estimators are uniformly
consistent for that target: observed-information standard errors under the
information equality and one-period sandwich standard errors when the
stagewise scores are serially uncorrelated. A kernel HAC estimator requires a
separate, bandwidth-dependent uniform long-run-variance bound and remains
covered by the high-level formulation of A3.4.

Put $N_T=|\mathcal A_T|$. Because $s_0$ and $k_{\max}$ are fixed, every
candidate parameter has dimension at most $d_0=s_0+k_{\max}+1$, while
\begin{equation}
N_T\le p\sum_{\ell=0}^{k_{\max}}\binom p\ell\le Cp^K,
\qquad K=k_{\max}+1.
\label{eq:primitive_cardinality}
\end{equation}
Thus, uniformity entails a union bound over polynomially many
fixed-dimensional likelihood problems.

For $m=(S,j)\in\mathcal A_T$, abbreviate
$\boldsymbol\vartheta_m^*=\boldsymbol\vartheta_j^*(S)$,
$\widehat{\boldsymbol\vartheta}_m=\widehat{\boldsymbol\vartheta}_j(S)$,
$\boldsymbol J_m^*=\boldsymbol J_j^*(S)$ and
$\boldsymbol V_m^*=\boldsymbol V_j^*(S)$. Write $\theta_m^*$ for the
candidate component of $\boldsymbol\vartheta_m^*$ and
$V_{\theta\theta,m}^*$ for the corresponding diagonal element of
$\boldsymbol V_m^*$. Let
$\widehat Q_m(\boldsymbol\vartheta)=T^{-1}\sum_{t=1}^T
l_{m,t}(\boldsymbol\vartheta)$ and
$Q_m(\boldsymbol\vartheta)=\mathbb E l_{m,t}(\boldsymbol\vartheta)$.
Write $\boldsymbol s_{m,t}(\boldsymbol\vartheta)$ for the score. Set
$\widehat{\boldsymbol J}_m(\boldsymbol\vartheta)
=-T^{-1}\sum_{t=1}^T\partial_{\boldsymbol\vartheta
\boldsymbol\vartheta'}^2l_{m,t}(\boldsymbol\vartheta)$ and
$\widehat{\boldsymbol\Omega}_m(\boldsymbol\vartheta)
=T^{-1}\sum_{t=1}^T\boldsymbol s_{m,t}(\boldsymbol\vartheta)
\boldsymbol s_{m,t}(\boldsymbol\vartheta)'$. Their expectations are denoted by
$\boldsymbol J_m(\boldsymbol\vartheta)$ and
$\boldsymbol\Omega_m(\boldsymbol\vartheta)$, respectively. For the
primitive verification, $\widehat{\boldsymbol\vartheta}_m$ is taken to be a
maximiser over $\mathcal B_m=\mathcal B_{j,S}$. The same argument applies to
a local maximiser known to lie in the uniform neighbourhood of
$\boldsymbol\vartheta_m^*$ used below. In the polynomial-moment result, the
sets $\mathcal B_m$ are also assumed to be uniformly regular compact
Lipschitz domains; compact rectangles of uniformly bounded diameter suffice.

Define
\begin{equation}
r_T^{\mathrm{ET}}=\sqrt{\ell_T},
\qquad
r_T^{\mathrm{PM}}=N_T^{1/q}=O(p^{K/q}),
\label{eq:primitive_rates}
\end{equation}
and let $r_T$ denote the applicable rate.

\begin{lemma}[Uniform empirical bounds]
\label{lem:primitive_empirical}
Suppose A1--A3.3 hold. Under A2-ET and $\ell_T^3/T\to0$,
\begin{align}
\max_{m\in\mathcal A_T}
\left\|T^{-1/2}\sum_{t=1}^T
\boldsymbol s_{m,t}(\boldsymbol\vartheta_m^*)\right\|
&=O_p(r_T^{\mathrm{ET}}),
\label{eq:primitive_score_et}\\
\max_{m\in\mathcal A_T}\sup_{\boldsymbol\vartheta\in\mathcal B_m}
\left|
\{\widehat Q_m(\boldsymbol\vartheta)-\widehat Q_m(\boldsymbol\vartheta_m^*)\}
-\{Q_m(\boldsymbol\vartheta)-Q_m(\boldsymbol\vartheta_m^*)\}
\right|
&=O_p\!\left(\frac{r_T^{\mathrm{ET}}}{\sqrt T}\right),
\label{eq:primitive_criterion_et}\\
\max_{m\in\mathcal A_T}\sup_{\boldsymbol\vartheta\in\mathcal B_m}
\left\|\widehat{\boldsymbol J}_m(\boldsymbol\vartheta)
-\boldsymbol J_m(\boldsymbol\vartheta)\right\|
&=O_p\!\left(\frac{r_T^{\mathrm{ET}}}{\sqrt T}\right),
\label{eq:primitive_hessian_et}\\
\max_{m\in\mathcal A_T}\sup_{\boldsymbol\vartheta\in\mathcal B_m}
\left\|\widehat{\boldsymbol\Omega}_m(\boldsymbol\vartheta)
-\boldsymbol\Omega_m(\boldsymbol\vartheta)\right\|
&=O_p\!\left(\frac{r_T^{\mathrm{ET}}}{\sqrt T}\right).
\label{eq:primitive_omega_et}
\end{align}
Under A2-PM, the same four statements hold with
$r_T^{\mathrm{ET}}$ replaced by $r_T^{\mathrm{PM}}$.
\end{lemma}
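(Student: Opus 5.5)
The plan is a union bound over the $N_T\le Cp^K$ fixed-dimensional problems in $\mathcal A_T$ (see \eqref{eq:primitive_cardinality}), combined with a single-problem tail inequality for mixing sums and a chaining/bracketing step over the compact sets $\mathcal B_m$, which are finite-dimensional with uniformly bounded diameter. I will prove the score bound \eqref{eq:primitive_score_et} first. The other three statements are supremum-over-$\boldsymbol\vartheta$ versions of the same argument, applied to the envelopes $F_{2,m,t}$, $F_{\Omega,0,m,t}$ (levels) and $F_{3,m,t}$, $F_{\Omega,1,m,t}$ (Lipschitz moduli).

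\emph{Exponential-tail case.} Fix $m$ and a coordinate. The summands $s_{m,t}(\boldsymbol\vartheta_m^*)$ are centred, because $\boldsymbol\vartheta_m^*$ is an interior maximiser (A3.2) and hence the population score vanishes. They are dominated by $F_{1,m,t}$, which has $\|\cdot\|_{\psi_2}\le C_F$, and they are strongly mixing with geometric rate (A1.1). The Bernstein inequality of \citet{merlevede2011bernstein} then gives
\[
\Pr\Bigl(\Bigl|\sum_t s_{m,t}\Bigr|\ge x\Bigr)\le T\exp\{-c x^{\gamma}\}+\exp\{-c x^2/T\}+\exp\Bigl\{-\frac{c x^2}{T}\exp\Bigl(\frac{c x^{\gamma(1-\gamma)}}{(\log x)^{\gamma}}\Bigr)\Bigr\},
\]
where $\gamma<1$ is determined by the tail index $\nu$ and the mixing exponent. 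I take $x=M\sqrt{T\ell_T}$ and multiply by $d_0N_T\le C\exp(K\ell_T)$. The Gaussian term is then $\exp\{(K-cM^2)\ell_T\}$, which is small for large $M$. The heavy-tail terms are of order $T\exp\{-c(T\ell_T)^{\gamma/2}\}$, and these are negligible relative to $p^{-K}$ precisely when $\ell_T$ is small compared with $T^{\gamma/(2-\gamma)}$. This is where $\ell_T^3/T\to0$ enters: the worst case $\nu=2/3$ (the envelopes $F_3$, $F_{\Omega,1}$) gives the binding constraint. For the Hessian and the $\boldsymbol\Omega$ statements, I replace $\sqrt T$ normalisation by $T^{-1}$ averages and use the $\psi_1$ envelopes; the resulting rate $\sqrt{\ell_T/T}=r_T^{\mathrm{ET}}/\sqrt T$ follows identically.

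For the suprema over $\boldsymbol\vartheta$ I use a grid. Cover each $\mathcal B_m$ by a grid of mesh $\epsilon_T=T^{-1}$; the grid has at most $CT^{d_0}$ points, so the union bound costs only an extra $d_0\log T\lesssim\ell_T$ in the exponent. Between grid points, the mean-value theorem bounds the increment of $\widehat{\boldsymbol J}_m-\boldsymbol J_m$ (resp.\ $\widehat{\boldsymbol\Omega}_m-\boldsymbol\Omega_m$) by $\epsilon_T\,T^{-1}\sum_t(F_{3,m,t}+\mathbb EF_{3,m,t})$ (resp.\ with $F_{\Omega,1,m,t}$). I need $\max_m T^{-1}\sum_tF_{3,m,t}=O_p(1)$, which again follows from Bernstein and the union bound. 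With this, the discretisation error is $O_p(T^{-1})=o(\sqrt{\ell_T/T})$. For the criterion difference \eqref{eq:primitive_criterion_et}, a second-order expansion around $\boldsymbol\vartheta_m^*$ writes the centred difference as (score deviation)$\cdot(\boldsymbol\vartheta-\boldsymbol\vartheta_m^*)$ plus (Hessian deviation) times a bounded quadratic, using bounded diameters. It therefore inherits the rate from \eqref{eq:primitive_score_et} and \eqref{eq:primitive_hessian_et}.

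\emph{Polynomial-moment case.} Bernstein is replaced by the Rosenthal/Yokoyama moment bound for strongly mixing sums (\citet{yokoyama1980moment}, \citet{merlevede2013rosenthal}). Under the summability condition of A1.1 and the $(q+\delta)$-moment envelopes, this gives $\mathbb E|T^{-1/2}\sum_t s_{m,t}|^q\le C$ uniformly in $m$. Markov's inequality with a union bound then yields $\max_m|\cdot|=O_p(N_T^{1/q})$. The supremum over $\boldsymbol\vartheta$ is the delicate part, because a naive grid inflates $N_T$ by $T^{d_0}$, which is fatal here. Instead I would apply a moment (Sobolev/Kolmogorov-type) chaining bound on the Lipschitz domain: the $q$-th moment of $\sup_{\boldsymbol\vartheta}|\mathbb G_T f_{\boldsymbol\vartheta}|$ is controlled by the $q$-th moments of the process and of its derivative process, and the latter is enveloped by $F_3$, $F_{\Omega,1}$. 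This is exactly where $q>d_0$ and the regular-domain assumption are used. I expect this step to be the main obstacle: obtaining uniform-in-$\boldsymbol\vartheta$ $L^q$ control without losing more than a constant factor, so that the rate stays $N_T^{1/q}$.
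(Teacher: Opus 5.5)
Your proposal is correct and follows essentially the paper's route: in the ET case, the Merlev\`ede--Peligrad--Rio Bernstein inequality at level $M\sqrt{T\ell_T}$ on a polynomial net, a union bound over the $Cp^K$ models, and Lipschitz interpolation via $F_{3,m,t}$ and $F_{\Omega,1,m,t}$; in the PM case, the Yokoyama $L^q$ bound extended to $\sup_{\boldsymbol\vartheta}$ by the Sobolev embedding ($q>d_0$), followed by Markov's inequality and a union bound. Deriving \eqref{eq:primitive_criterion_et} from the score and Hessian bounds by a second-order expansion, rather than applying Bernstein to criterion increments directly as the paper does, is a harmless variant.

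One correction to your commentary: the binding case is the $\psi_1$ envelopes $F_{2,m,t}$ and $F_{\Omega,0,m,t}$, not $\nu=2/3$. With geometric mixing, $\psi_1$ gives $\gamma=1/2$, and your condition $\ell_T\ll T^{\gamma/(2-\gamma)}$ is then exactly $\ell_T^3/T\to0$. By contrast, $\nu=2/3$ gives $\gamma=2/5$ and would require $\ell_T^4/T\to0$. This does not break your argument, because $F_{3,m,t}$ and $F_{\Omega,1,m,t}$ enter only the interpolation step, where an $O_p(1)$ bound on their averages (or even the crude maximal bound $O_p(\ell_T^{3/2})$) suffices.
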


\begin{proof}
Every scalar coordinate of the arrays in
\eqref{eq:primitive_score_et}--\eqref{eq:primitive_omega_et} is a measurable
function of the underlying strongly mixing row and, therefore, inherits its
mixing coefficients. At the net points, all relevant coordinates have at
least sub-exponential tails. With geometric mixing, the least-favourable
exponents in Theorem~1 of \citet{merlevede2011bernstein} are
$\gamma_1=\gamma_2=1$ and hence
$\gamma=(\gamma_1^{-1}+\gamma_2^{-1})^{-1}=1/2$. Applied at the unnormalised level $y=M\sqrt{T\ell_T}$, that theorem
bounds each relevant scalar tail probability at $M\sqrt{\ell_T}$ by
$T\exp\{-c(T\ell_T)^{1/4}\}+2\exp(-cM^2\ell_T)$, after changing constants.
The last exponential term in the cited Bernstein bound is no larger than its
Gaussian term and is absorbed in the second term.

Cover each fixed-dimensional set $\mathcal B_m$ by a $T^{-3}$-net. The
number of net points is at most $CT^{3d_0}$, uniformly in $m$, so the total
number of scalar comparisons is at most $Cp^KT^{3d_0}$. For sufficiently
large fixed $M$, the union bound eliminates the Gaussian term. It also
eliminates the semiexponential term because
$(T\ell_T)^{1/4}/\ell_T=(T/\ell_T^3)^{1/4}\to\infty$. Interpolation off the
nets is negligible: the criterion increment is Lipschitz with envelope
$F_1$, the Hessian with $F_3$, and the score outer product with
$F_{\Omega,1}$. A marginal union bound gives maxima of order at most
$O_p(\ell_T^{3/2})$ for the $\psi_{2/3}$ envelopes, and multiplication by
$T^{-3}$ is $o_p(r_T^{\mathrm{ET}}/\sqrt T)$.

For the PM regime, the mixing summability condition in A1.1 is
stronger than the condition in the $q$th-moment inequality of
\citet{yokoyama1980moment}. Together with the $q+\delta$ moments in A2-PM, it
yields the uniform pointwise bound
\begin{equation}
\sup_{m,\boldsymbol\vartheta}
\mathbb E\left|
T^{-1/2}\sum_{t=1}^T
\{g_{m,t}(\boldsymbol\vartheta)
-\mathbb Eg_{m,t}(\boldsymbol\vartheta)\}
\right|^q\le C
\label{eq:primitive_rosenthal}
\end{equation}
for each relevant coordinate and its first parameter derivative; related
Rosenthal-type inequalities are given by
\citet{merlevede2013rosenthal}. Since $q>d_0$, the fixed-dimensional Sobolev
inequality on the uniformly regular sets $\mathcal B_m$, followed by Fubini's
theorem, extends \eqref{eq:primitive_rosenthal} to the supremum over
$\boldsymbol\vartheta$ within each model. Markov's inequality and the union
bound over $N_T$ models then give, for any $M>0$, a bound of order $CM^{-q}$
for the probability that any normalised empirical process exceeds
$MN_T^{1/q}$. This proves the PM statements.
\end{proof}

\begin{lemma}[Uniform consistency and Bahadur expansion]
\label{lem:primitive_bahadur}
If $r_T=o(\sqrt T)$, then
\begin{equation}
\max_{m\in\mathcal A_T}
\|\widehat{\boldsymbol\vartheta}_m-\boldsymbol\vartheta_m^*\|
=O_p\!\left(\frac{r_T}{\sqrt T}\right),
\label{eq:primitive_consistency}
\end{equation}
and
\begin{equation}
\max_{m\in\mathcal A_T}
\left\|
\sqrt T(\widehat{\boldsymbol\vartheta}_m-\boldsymbol\vartheta_m^*)
-(\boldsymbol J_m^*)^{-1}T^{-1/2}
\sum_{t=1}^T\boldsymbol s_{m,t}(\boldsymbol\vartheta_m^*)
\right\|
=O_p\!\left(\frac{r_T^2}{\sqrt T}\right)=o_p(r_T).
\label{eq:primitive_bahadur}
\end{equation}
\end{lemma}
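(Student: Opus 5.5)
The argument is the standard uniform $M$-estimation route, carried out on the high-probability event supplied by Lemma~\ref{lem:primitive_empirical}: first uniform consistency, then a rate, then a one-step Taylor expansion of the score.

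\emph{Step 1: uniform consistency.} For each $m\in\mathcal A_T$ we have $\widehat Q_m(\widehat{\boldsymbol\vartheta}_m)\ge\widehat Q_m(\boldsymbol\vartheta_m^*)$. Combining this with the uniform criterion bound \eqref{eq:primitive_criterion_et} (or its PM analogue) gives
\[
Q_m(\boldsymbol\vartheta_m^*)-Q_m(\widehat{\boldsymbol\vartheta}_m)\le O_p(r_T/\sqrt T)=o_p(1)
\]
uniformly in $m$. By the uniform well-separation condition in A3.2, $\max_m\|\widehat{\boldsymbol\vartheta}_m-\boldsymbol\vartheta_m^*\|=o_p(1)$. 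Since the distance from each pseudo-true value to the boundary of $\mathcal B_m$ is bounded away from zero, with probability tending to one every $\widehat{\boldsymbol\vartheta}_m$ is interior. Hence the first-order conditions $\sum_t\boldsymbol s_{m,t}(\widehat{\boldsymbol\vartheta}_m)=\boldsymbol 0$ hold simultaneously for all $m$.

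\emph{Step 2: rate and expansion.} Apply a mean-value expansion to each score coordinate:
\[
\boldsymbol 0=T^{-1/2}\sum_t\boldsymbol s_{m,t}(\boldsymbol\vartheta_m^*)-\widehat{\boldsymbol J}_m(\bar{\boldsymbol\vartheta}_m)\sqrt T(\widehat{\boldsymbol\vartheta}_m-\boldsymbol\vartheta_m^*),
\]
with the intermediate points taken row by row. Decompose the Hessian at these points as
\[
\widehat{\boldsymbol J}_m(\bar{\boldsymbol\vartheta}_m)-\boldsymbol J_m^*=\{\widehat{\boldsymbol J}_m-\boldsymbol J_m\}(\bar{\boldsymbol\vartheta}_m)+\{\boldsymbol J_m(\bar{\boldsymbol\vartheta}_m)-\boldsymbol J_m^*\}.
\]
The first term is $O_p(r_T/\sqrt T)$ uniformly by \eqref{eq:primitive_hessian_et}. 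For the second term, the expectation of the third-derivative envelope is bounded by A2, so $\boldsymbol J_m(\cdot)$ is uniformly Lipschitz and this term is $O(\|\bar{\boldsymbol\vartheta}_m-\boldsymbol\vartheta_m^*\|)=o_p(1)$ by Step~1. Because $\lambda_{\min}(\boldsymbol J_m^*)$ is bounded away from zero uniformly (A3.3), $\widehat{\boldsymbol J}_m(\bar{\boldsymbol\vartheta}_m)$ is invertible with uniformly bounded inverse with probability tending to one. Solving for $\sqrt T(\widehat{\boldsymbol\vartheta}_m-\boldsymbol\vartheta_m^*)$ and using the score bound \eqref{eq:primitive_score_et} yields \eqref{eq:primitive_consistency}. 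With this rate in hand, the Lipschitz term improves to $O_p(r_T/\sqrt T)$. Substituting back, the remainder is
\[
\|(\widehat{\boldsymbol J}_m^{-1}-\boldsymbol J_m^{*-1})\,T^{-1/2}\textstyle\sum_t\boldsymbol s_{m,t}\|\le O_p(r_T/\sqrt T)\cdot O_p(r_T)=O_p(r_T^2/\sqrt T),
\]
which is \eqref{eq:primitive_bahadur}. The final equality $O_p(r_T^2/\sqrt T)=o_p(r_T)$ holds because $r_T=o(\sqrt T)$.

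\emph{Main obstacle.} The delicate point is uniformity: every $o_p$ and $O_p$ must hold simultaneously over the polynomially many models. This requires always working on the intersection of the finitely many events produced by Lemma~\ref{lem:primitive_empirical}, and using only deterministic uniform constants from A2--A3 (eigenvalue bounds, Lipschitz moduli, the separation function $c(\epsilon)$) rather than anything model-specific. A second subtlety is the vector mean-value theorem: the intermediate points differ across rows of the Hessian. This is handled by applying the sup bound in \eqref{eq:primitive_hessian_et} over the whole segment, so that the particular choice of intermediate point never matters.
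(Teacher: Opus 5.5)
Your proposal is correct and follows essentially the same route as the paper. The paper first obtains uniform consistency from the criterion bound and the well-separation in A3.2, deduces interiority and the score equations, and then Taylor-expands the score, controlling the Hessian by the uniform empirical Hessian bound plus the third-derivative envelope; this yields first the $O_p(r_T/\sqrt T)$ rate and then the $O_p(r_T^2/\sqrt T)$ remainder. The only cosmetic difference is that the paper writes the expansion in integral form, $\overline{\boldsymbol J}_m=\int_0^1\widehat{\boldsymbol J}_m(\boldsymbol\vartheta_m^*+u\boldsymbol h_m)\,du$ with $\boldsymbol h_m=\widehat{\boldsymbol\vartheta}_m-\boldsymbol\vartheta_m^*$, which avoids the row-wise intermediate points that you handle through the supremum bound.
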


\begin{proof}
Equation \eqref{eq:primitive_criterion_et} is $o_p(1)$ when
$r_T=o(\sqrt T)$. The uniform separation in A3.2 therefore gives
$\max_m\|\widehat{\boldsymbol\vartheta}_m-
\boldsymbol\vartheta_m^*\|=o_p(1)$ by the usual compact-parameter
$M$-estimation argument. The eigenvalue condition in A3.3 and the uniformly
integrable third-derivative envelope imply that the population negative
Hessian remains uniformly positive definite on a fixed neighbourhood of each
$\boldsymbol\vartheta_m^*$. Lemma~\ref{lem:primitive_empirical} transfers
this property to the sample Hessians with probability tending to one. The
estimators are then interior and satisfy the score equations.

Put
$\boldsymbol h_m=\widehat{\boldsymbol\vartheta}_m-
\boldsymbol\vartheta_m^*$. Using the integral form of the multivariate Taylor
expansion, set $\overline{\boldsymbol J}_m=\int_0^1
\widehat{\boldsymbol J}_m(\boldsymbol\vartheta_m^*+u\boldsymbol h_m)du$.
The score equation gives the exact identity
$\boldsymbol h_m=\overline{\boldsymbol J}_m^{-1}T^{-1}
\sum_{t=1}^T\boldsymbol s_{m,t}(\boldsymbol\vartheta_m^*)$.
Uniform nonsingularity and \eqref{eq:primitive_score_et} first yield
\eqref{eq:primitive_consistency}. The empirical Hessian bound, the third-derivative envelope and
\eqref{eq:primitive_consistency} then give
$\max_m\|\overline{\boldsymbol J}_m^{-1}-(\boldsymbol J_m^*)^{-1}\|
=O_p(r_T/\sqrt T)$. Multiplication by the maximal normalised score, which is $O_p(r_T)$, proves
\eqref{eq:primitive_bahadur}.
\end{proof}

\begin{lemma}[Uniform standard-error approximation]
\label{lem:primitive_variance}
Suppose either: (i) the information equality
$\boldsymbol\Omega_m^*=\boldsymbol J_m^*$ holds uniformly and standard errors
are computed from the observed information; or (ii) the stagewise scores are
serially uncorrelated and standard errors are computed from the one-period
sandwich estimator. If $r_T=o(\sqrt T)$, then
\begin{equation}
\Delta_T
=
\max_{m\in\mathcal A_T}
\left|
\frac{\widehat{\mathrm{se}}_m}
{T^{-1/2}\{V_{\theta\theta,m}^*\}^{1/2}}-1
\right|
=O_p\!\left(\frac{r_T}{\sqrt T}\right).
\label{eq:primitive_se}
\end{equation}
\end{lemma}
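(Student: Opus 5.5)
The plan is to write the standard error as a smooth function of a finite-dimensional matrix and propagate the uniform empirical bounds of Lemma~\ref{lem:primitive_empirical} and the uniform consistency in Lemma~\ref{lem:primitive_bahadur} through that function. In case (i) set $\widehat{\boldsymbol V}_m=\widehat{\boldsymbol J}_m(\widehat{\boldsymbol\vartheta}_m)^{-1}$. In case (ii) set $\widehat{\boldsymbol V}_m=\widehat{\boldsymbol J}_m(\widehat{\boldsymbol\vartheta}_m)^{-1}\widehat{\boldsymbol\Omega}_m(\widehat{\boldsymbol\vartheta}_m)\widehat{\boldsymbol J}_m(\widehat{\boldsymbol\vartheta}_m)^{-1}$. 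Then $\widehat{\mathrm{se}}_m=T^{-1/2}\{\widehat V_{\theta\theta,m}\}^{1/2}$.

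Since $V_{\theta\theta,m}^*$ is uniformly bounded away from zero by A3.3, and $|\sqrt{x}/\sqrt{y}-1|\le|x-y|/y$ for $x\ge0$, $y>0$, it will suffice to show
\[
\max_{m\in\mathcal A_T}\|\widehat{\boldsymbol V}_m-\boldsymbol V_m^*\|=O_p(r_T/\sqrt T).
\]

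The first step is to control the Hessian at the estimate. Decompose
\[
\widehat{\boldsymbol J}_m(\widehat{\boldsymbol\vartheta}_m)-\boldsymbol J_m^*
=\{\widehat{\boldsymbol J}_m(\widehat{\boldsymbol\vartheta}_m)-\boldsymbol J_m(\widehat{\boldsymbol\vartheta}_m)\}
+\{\boldsymbol J_m(\widehat{\boldsymbol\vartheta}_m)-\boldsymbol J_m(\boldsymbol\vartheta_m^*)\}.
\]
The first bracket is $O_p(r_T/\sqrt T)$ uniformly by \eqref{eq:primitive_hessian_et}, because the supremum there is over all of $\mathcal B_m$ and so covers the random point $\widehat{\boldsymbol\vartheta}_m$. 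For the second bracket, note that $\boldsymbol J_m(\cdot)$ is Lipschitz with constant bounded by $\sup_m\mathbb E F_{3,m,t}<\infty$ (finite under either A2 regime). Combining this with \eqref{eq:primitive_consistency} gives $O_p(r_T/\sqrt T)$ for this bracket as well.

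Because $\lambda_{\min}(\boldsymbol J_m^*)$ is bounded below uniformly and $r_T/\sqrt T\to0$, on an event of probability tending to one all $\widehat{\boldsymbol J}_m(\widehat{\boldsymbol\vartheta}_m)$ are invertible with uniformly bounded inverses. The identity $A^{-1}-B^{-1}=A^{-1}(B-A)B^{-1}$ then gives the same rate for the inverses. This already settles case (i), since there $\boldsymbol V_m^*=(\boldsymbol J_m^*)^{-1}$ by the information equality.

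In case (ii) the same decomposition applies to $\widehat{\boldsymbol\Omega}_m(\widehat{\boldsymbol\vartheta}_m)-\boldsymbol\Omega_m^*$. Here \eqref{eq:primitive_omega_et} handles the empirical part, and the $F_{\Omega,1}$ envelope makes $\boldsymbol\Omega_m(\cdot)$ Lipschitz. Serial uncorrelatedness of the stagewise scores gives $\boldsymbol\Omega_m(\boldsymbol\vartheta_m^*)=\boldsymbol\Omega_m^*$, the long-run matrix of A3.3. A three-term telescoping of the sandwich product, using the uniform boundedness of $\boldsymbol J_m^{*-1}$ and $\boldsymbol\Omega_m^*$, then yields the required rate.

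The main obstacle is keeping every bound uniform over the $N_T\le Cp^K$ models while evaluating at a random, model-dependent point. I would handle this entirely through the sup-over-$\mathcal B_m$ forms of Lemma~\ref{lem:primitive_empirical} together with deterministic Lipschitz constants for the population maps. A secondary check is that the estimators lie in the interior neighbourhood where the Hessian inverse bound holds, which Lemma~\ref{lem:primitive_bahadur}'s proof already supplies with probability tending to one.
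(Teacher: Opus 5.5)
Your proposal is correct and follows essentially the same route as the paper: uniform $O_p(r_T/\sqrt T)$ control of $\widehat{\boldsymbol J}_m(\widehat{\boldsymbol\vartheta}_m)-\boldsymbol J_m^*$ (and, in case (ii), of $\widehat{\boldsymbol\Omega}_m(\widehat{\boldsymbol\vartheta}_m)-\boldsymbol\Omega_m^*$, using serial uncorrelatedness) from Lemmas~\ref{lem:primitive_empirical} and \ref{lem:primitive_bahadur}; transfer to the covariance matrices via the inverse-matrix identity and the A3.3 eigenvalue bounds; then a square-root step. Your empirical-plus-Lipschitz decomposition and the sandwich telescoping make explicit what the paper's short proof leaves implicit. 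The only fine point is that the mean-value segment between $\widehat{\boldsymbol\vartheta}_m$ and $\boldsymbol\vartheta_m^*$ must stay inside $\mathcal B_m$; this holds with probability tending to one by the uniform interiority in A3.2 and uniform consistency.
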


\begin{proof}
In case (i), Lemmas~\ref{lem:primitive_empirical} and
\ref{lem:primitive_bahadur} give
$\max_m\|\widehat{\boldsymbol J}_m(
\widehat{\boldsymbol\vartheta}_m)-\boldsymbol J_m^*\|
=O_p(r_T/\sqrt T)$. In case (ii), the same argument also gives
$\max_m\|\widehat{\boldsymbol\Omega}_m(
\widehat{\boldsymbol\vartheta}_m)-\boldsymbol\Omega_m^*\|
=O_p(r_T/\sqrt T)$; serial uncorrelatedness identifies the long-run covariance
with the one-period score covariance. The inverse-matrix identity and the
uniform eigenvalue bounds in A3.3 transfer these rates to the corresponding
asymptotic covariance matrices and their candidate-coordinate diagonal
elements. Taking square roots proves \eqref{eq:primitive_se}.
\end{proof}

\begin{theorem}[Primitive verification of A3.4]
\label{thm:primitive_wald}
Suppose A1--A3.3 and the parameter-set regularity stated at the start of this
appendix hold, and the standard errors satisfy one of the two cases in
Lemma~\ref{lem:primitive_variance}. Under A2-ET, if
$\ell_T^3/T\to0$, then $\mathcal E_T=O_p(\sqrt{\ell_T})$. Under A2-PM, if
$p^{K/q}=o(\sqrt T)$, then $\mathcal E_T=O_p(p^{K/q})$. Consequently, A3.4
holds with $a_{T,p}^{\mathrm{PM}}=p^{K/q}$.
\end{theorem}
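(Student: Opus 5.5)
The plan is to bound $|W_{T,j}(S)-\mathrm{NC}_j^*(S)|$ uniformly over $m=(S,j)\in\mathcal A_T$ by a Bahadur term plus a standard-error term. This is exactly the decomposition used in the proof of Lemma~\ref{lem:app_wald}, now fed with the rates from Lemmas~\ref{lem:primitive_empirical}--\ref{lem:primitive_variance}.

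Fix $m$ and write
\[
A_{T,m}=\sqrt T\,\theta_m^*/\{V_{\theta\theta,m}^*\}^{1/2},
\qquad
\mathrm{NC}_m^*=|A_{T,m}|.
\]
Define the normalised linear score term
\[
Z_{T,m}=\boldsymbol e_\theta'(\boldsymbol J_m^*)^{-1}T^{-1/2}\sum_{t=1}^T\boldsymbol s_{m,t}(\boldsymbol\vartheta_m^*)\big/\{V_{\theta\theta,m}^*\}^{1/2},
\]
and let $R_{T,m}$ be the Bahadur remainder, so that
\[
\sqrt T(\widehat\theta_m-\theta_m^*)/\{V_{\theta\theta,m}^*\}^{1/2}=Z_{T,m}+R_{T,m}.
\]
The bounds on the three ingredients are as follows.
\begin{itemize}
\item The eigenvalue bounds in A3.3 together with \eqref{eq:primitive_score_et} (or its PM analogue) give $\mathcal Z_T:=\max_m|Z_{T,m}|=O_p(r_T)$.
\item Lemma~\ref{lem:primitive_bahadur} gives $\max_m|R_{T,m}|=O_p(r_T^2/\sqrt T)=o_p(r_T)$, since $r_T=o(\sqrt T)$. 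Under ET this uses $\ell_T^3/T\to0$; under PM it uses $p^{K/q}=o(\sqrt T)$.
\item Lemma~\ref{lem:primitive_variance} gives $\Delta_T=O_p(r_T/\sqrt T)$.
\end{itemize}

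On the event $\{\Delta_T\le 1/2\}$, write $\widehat{\mathrm{se}}_m=T^{-1/2}\{V^*_{\theta\theta,m}\}^{1/2}(1+\delta_m)$ with $|\delta_m|\le\Delta_T$. Then $W_{T,j}(S)=|A_{T,m}+Z_{T,m}+R_{T,m}|/(1+\delta_m)$, and the reverse triangle inequality gives
\[
\mathcal E_T\le \mathcal Z_T+\max_m|R_{T,m}|+2\Delta_T\Bigl(\max_m\mathrm{NC}_m^*+\mathcal Z_T+\max_m|R_{T,m}|\Bigr).
\]
All terms except $2\Delta_T\max_m\mathrm{NC}_m^*$ are $O_p(r_T)$. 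This remaining term is the one the paper warns about: pointwise relative consistency is not enough when noncentralities diverge.

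To control it, I will use A3.2. The pseudo-true norms are uniformly bounded and $V_{\theta\theta,m}^*$ is bounded away from zero, so $\max_m\mathrm{NC}_m^*\le C\sqrt T$. Hence $\Delta_T\max_m\mathrm{NC}_m^*=O_p(r_T/\sqrt T)\cdot O(\sqrt T)=O_p(r_T)$. This is precisely why the $O_p(r_T/\sqrt T)$ rate in Lemma~\ref{lem:primitive_variance} is needed, rather than $o_p(1)$.

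Combining the pieces gives $\mathcal E_T=O_p(\sqrt{\ell_T})$ under A2-ET and $\mathcal E_T=O_p(p^{K/q})$ under A2-PM, using $r_T^{\mathrm{PM}}=N_T^{1/q}\le Cp^{K/q}$ from \eqref{eq:primitive_cardinality}.

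I expect the main obstacle to be making sure the preceding lemmas really hold uniformly over the random-size class $\mathcal A_T$ with these exact rates. In particular, the sample Hessian at $\widehat{\boldsymbol\vartheta}_m$ must be invertible with probability tending to one simultaneously for all $m$, so that $\widehat\theta_m$ is interior and the score equation holds. I would take this from the proof of Lemma~\ref{lem:primitive_bahadur}, intersecting finitely many events of probability tending to one. Everything else is bookkeeping.
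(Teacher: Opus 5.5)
Your proposal is correct and follows the paper's proof essentially step for step. Both write $W_{T,j}(S)=|A_m+Z_m+R_{m,T}|/(1+\delta_m)$ on $\{\Delta_T\le 1/2\}$, apply the same reverse-triangle bound, and use A3.2--A3.3 to get $\max_m|A_m|=O(\sqrt T)$, so that the $O_p(r_T/\sqrt T)$ rate from Lemma~\ref{lem:primitive_variance} makes $\Delta_T\max_m|A_m|=O_p(r_T)$.
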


\begin{proof}
Let
$A_m=\sqrt T\,\theta_m^*/\{V_{\theta\theta,m}^*\}^{1/2}$ and let $Z_m$ be the
candidate coordinate of the linear term in \eqref{eq:primitive_bahadur},
normalised by $\{V_{\theta\theta,m}^*\}^{1/2}$. Lemmas
\ref{lem:primitive_empirical} and \ref{lem:primitive_bahadur} give
$\max_m|Z_m|=O_p(r_T)$ and remainder variables $R_{m,T}$ satisfying
$\max_m|R_{m,T}|=o_p(r_T)$. A3.2--A3.3 imply
$\max_m|A_m|=O(\sqrt T)$. Write the relative standard-error error as
$\delta_m$, so that $\max_m|\delta_m|=\Delta_T$. On the event
$\Delta_T\le1/2$,
\[
\left|
\left|\frac{A_m+Z_m+R_{m,T}}{1+\delta_m}\right|-|A_m|
\right|
\le |Z_m|+|R_{m,T}|
+2|\delta_m|\{|A_m|+|Z_m|+|R_{m,T}|\}.
\]
By Lemma~\ref{lem:primitive_variance}, the final term is $O_p(r_T)$ uniformly
in $m$. Taking the maximum proves $\mathcal E_T=O_p(r_T)$ and hence both
claims.
\end{proof}

The ET condition $\ell_T^3/T\to0$ is equivalent to
$\log(p\vee T)=o(T^{1/3})$, so it permits
$p=\exp\{o(T^{1/3})\}$. In the PM regime, a threshold can satisfy both
$p^{K/q}=o(c_T)$ and $c_T=o(\sqrt T)$ only if
\begin{equation}
p=o\!\left(T^{q/(2K)}\right),
\qquad K=k_{\max}+1.
\label{eq:primitive_pm_growth}
\end{equation}
For example, $d_0<8$, $q=8$ and $k_{\max}=5$ give $p=o(T^{2/3})$; only
the ET regime permits faster-than-polynomial growth. For kernel HAC standard
errors, the uniform long-run-variance estimation error must instead enter
A3.4; the recovery proofs remain valid with the resulting rate in A4--A5.

\end{document}